\newcommand{\iprod}{\mathbin{\lrcorner}} 
\newcommand{\ri}{\mathrm{i}}
\newcommand{\C}{\mathbb{C}}
\newcommand\scalemath[2]{\scalebox{#1}{\mbox{\ensuremath{\displaystyle #2}}}}
\tikzset{decorate sep/.style 2 args=
{decorate,decoration={shape backgrounds,shape=circle,shape size=#1,shape sep=#2}}}
\NewDocumentCommand{\highlight}{O{blue!40} m m}{%
\draw[mycolor=#1] (#2.north west)rectangle (#3.south east);
}
\NewDocumentCommand{\fhighlight}{O{blue!40} m m}{%
\draw[myfillcolor=#1] (#2.north west)rectangle (#3.south east);
}
\tikzset{mycolor/.style = {line width=1bp,color=#1}}%
\tikzset{myfillcolor/.style = {draw,fill=#1}}%
\DeclareMathOperator{\id}{id}
\newtheorem{defn}{Definition}[section]	
\newtheorem{prop}[defn]{Proposition}
\newtheorem{exmp}[defn]{Example}
\newtheorem{thm}[defn]{Theorem}
\newtheorem{cor}[defn]{Corollary}
\newtheorem{lm}[defn]{Lemma}
\newtheorem{rem}[defn]{Remark}
\newcommand{\cS}{\mathcal{S}}
\newcommand{\cL}{{\cal L}}
\newcommand{\om}{\omega} 
\newcommand{\rd}{\mathrm{d}}
\newcommand{\Pbkt}[2]{\ensuremath \{#1,#2\}}
\title{Deformed cluster maps of type $A_{2N}$}
\author{Jan E. Grabowski\footnotemark[1] 
\and Andrew N.W. Hone\footnotemark[2] \and Wookyung Kim\footnotemark[1]$\,\,$\footnotemark[3]
}
\date{13th April 2026}
\begin{document}

\maketitle

\begin{abstract}

Cluster algebras are a class of commutative algebras whose generators are constructed by iterating a set of birational transformations, called cluster mutations. Particular compositions of cluster mutations with permutations are  known as cluster maps. The simplest cluster maps 
are obtained from cluster algebras of finite Dynkin type, and exhibit the phenomenon of 
Zamolodchikov periodicity, whereby every orbit is periodic with the same period. 

Here we extend 
recent work of one of us with Kouloukas, by constructing deformations of integrable cluster maps corresponding to the Dynkin types $A_{2N}$, lifting these to higher-dimensional maps possessing the Laurent property, and demonstrating integrability 
of the deformations for all $N\geq 1$. More precisely, in low dimensions, we 
prove directly that the associated maps are integrable in the Liouville sense, while for all deformed $A_{2N}$ maps we verify that 
they satisfy a different integrability criterion, 
namely that 
they have zero algebraic entropy. 
This provides the first infinite class of examples (in arbitrarily high rank) of such deformed maps, and gives further information about the associated discrete integrable systems.  Key to our approach is a ``local expansion'' operation on quivers which allows us to construct and analyse deformed mutations in type $A_{2N}$,  by extending them from those in type $A_{2(N-1)}$.
\end{abstract}

\renewcommand{\thefootnote}{\fnsymbol{footnote}}
\footnotetext[1]{School of Mathematical Sciences, Lancaster University, Lancaster, LA1 4YF, United Kingdom; \\\url{j.grabowski@lancaster.ac.uk}, \url{http://www.maths.lancs.ac.uk/~grabowsj/}}
\footnotetext[2]{School of 
Engineering, Mathematics, 
\& Physics,
University of Kent, Canterbury CT2 7NF, United Kingdom; \url{A.N.W.Hone@kent.ac.uk}}
\footnotetext[3]{ Graduate School of Mathematical Sciences, 
University of Tokyo, 3-8-1 Komaba, Tokyo 153-8914, Japan; 
 \url{kim.wookyung.38u@st.kyoto-u.ac.jp}}
\renewcommand{\thefootnote}{\arabic{footnote}}
\setcounter{footnote}{0}

\tableofcontents

\section{Introduction}
\setcounter{equation}{0}

Cluster algebras are a class of commutative algebras, constructed as subalgebras of rational function fields, which were introduced by Fomin and Zelevinsky in \cite{fomin2001cluster}. 
They are built differently from typical commutative algebras, as cluster algebras are not presented in terms of given generators and relations from the beginning. Instead, one starts with initial data, given by two objects, namely
\begin{itemize}
    \item initial cluster variables: $m$ distinguished generators $\vb{x} = (x_1,\dots, x_m)$; and
    \item an exchange quiver $Q$: a finite directed graph with $m$ nodes which does not contain loops or oriented 2-cycles. 
\end{itemize}
The pair of objects $(\vb{x},Q)$ is called an \textit{initial seed}.
A special iterative process called \textit{mutation} is then applied, 
to produce new cluster variables and exchange quivers. The repeated application of the mutation process results in constructing 
the associated \textit{cluster algebra}, 
as the subalgebra of $\mathbb{Q}(x_{1},\dotsc ,x_{m})$ generated by all the cluster variables.

Fomin and Zelevinsky further extended the notion of mutation to 
encompass the broader setting of a \textit{Y-seed pattern}. This includes additional \textit{coefficient variables}. Similar to cluster mutation, the coefficient variables have their own 
iterative dynamics, described by \textit{coefficient mutation}, but in this article our main focus will be on cluster variables and 
quivers only. 

Along with the Laurent phenomenon, which entails that every cluster variable can be expressed as a Laurent polynomial in the initial (or indeed any) seed, 
the other main result  initially obtained by Fomin and Zelevinsky was the classification of cluster algebras of finite type, i.e. 
having only 
finitely many cluster variables
\cite{Fomin_2003}.  Such cluster algebras are classified by quivers that are orientations of a Dynkin diagram of a semisimple Lie algebra. 

These results raised the importance of 
studying the recurrence relations 
obtained from 
mutations in Y-seeds. 
Each set of  relations is equivalent to a coupled 
difference equation known as a \textit{Y-system}. Y-systems were 
originally discovered by Zamolodchikov in the 
context of exactly solvable models of quantum field theory \cite{Zamolodch2002OnTT}, where 
the solutions of the difference equation 
correspond to the solutions of Bethe ansatz equations for conformal field theories associated with ADE scattering diagrams. 
Furthermore, it was observed that all the solutions appeared to be periodic with a particular period; this was known as \textit{Zamolodchikov's periodicity conjecture}. In \cite{fomin2001ysystems}, this was investigated in the cluster algebra setting, where Fomin and Zelevinsky showed that a specific
composition of mutations exhibited Zamolodchikov periodicity.

A much broader type of periodicity in cluster 
algebras, introduced by Fordy and Marsh \cite{Fordy_2010} 
and generalized
by Nakanishi \cite{nakanishi2011periodicities}, is the 
notion of \textit{cluster mutation periodicity}, 
whereby the composition of a particular sequence of mutations 
is equivalent to a permutation acting on the nodes of the quiver $Q$. 
This leads to the further notion of a \textit{cluster map}, introduced in \cite{2013}, which consists of the composition of the sequence of mutations together with the inverse permutation. By 
taking the natural log-canonical  presymplectic form associated with the quiver $Q$, every cluster 
map is either a symplectic map, or can be reduced to one. 
This means that cluster maps can be considered as 
candidates for discrete integrable systems, 
by applying the analogue of the 
Liouville--Arnold criterion for integrability of symplectic maps, as in \cite{10.3792/pjaa.63.198}. 
Another type of integrability, corresponding to the existence of linear recurrence relations in cluster algebras constructed from certain products of finite and affine Dynkin diagrams, 
was considered in \cite{pavlo}. 

In the context of Zamolodchikov periodicity, the cluster maps that arise are dynamical systems for which all orbits are completely periodic with the same period. The simplest example is the $A_2$ case, for which the initial cluster is $\vb{x} = (x_1,x_2)$ and  the elementary cluster map is given by 
\begin{equation}\label{lyness5}
\varphi: \qquad \left(\begin{array}{c} x_1 \\ x_2 \end{array}\right) 
\mapsto 
\left(\begin{array}{c} x_2 \\ \frac{x_2+1}{x_1} \end{array} \right). 
\end{equation}
It is a symplectic map with respect to the log-canonical 2-form 
$$ 
\om = \frac{1}{x_1x_2}\rd x_1 \wedge \rd x_2,
$$
i.e. $\varphi^* \om =\om$. 
In the dynamical systems community, this map is known as the Lyness 
5-cycle: 
all of the orbits are periodic with period 5 
(which is 2 more than the 
Coxeter number of $A_2$), so $\varphi^5=\mathrm{id}$.  
It is straightforward to construct a conserved quantity (that is, a first integral, or invariant function for the map), by averaging its action on any function over a period: for instance, the quantity 
\begin{equation}\label{hlyness} 
H:= \sum_{j=0}^4 (\varphi^*)^jx_1 =
x_1+x_2 + \frac{x_2+1}{x_1}+\frac{x_1+x_2+1}{x_1x_2}+ 
\frac{x_1+1}{x_2} 
\end{equation}
is a first integral for $\varphi$, that is $\varphi^* H = H$. 
In two dimensions, the existence of one first integral is enough to 
conclude that the symplectic map $\varphi$ is integrable in the Liouville--Arnold sense.

Quite recently, in \cite{hone2021deformations}, Kouloukas and the second author presented a general method of deformation of a
coefficient-free cluster mutation which preserves the covariance of the pre\-symplectic form under mutation. 
In the case of finite Dynkin type algebras exhibiting Zamolodchikov periodicity, they showed that a suitable composition of deformed mutations (depending on 
parameters) produces a 
deformation of the cluster map that is compatible with the pre\-symplectic structure. While this deformation is such that the Laurent property is destroyed, and the periodicity of the cluster map is lost, it turns out that Liouville--Arnold integrability is retained, provided that the parameters are 
constrained appropriately. 

They presented several examples, including integrable 
deformations of cluster maps associated with Dynkin types $A_2$, $A_3$ and $A_4$. Even more recently, 
the geometry of the deformed 
type $A_3$ map was described in more detail, and  integrable 
deformations of types $C_2$, $B_3$ and $D_4$ were also obtained \cite{hkm}. All of these integrable deformations of periodic cluster maps obtained from Dynkin diagrams have the common feature 
that they admit \textit{Laurentification}: they can be lifted to a larger phase space where the Laurent property is restored, and in fact they correspond to cluster maps in the new 
enlarged set of variables, in which this property holds.  

The simplest case of $A_2$ can be used to briefly illustrate what happens in all these examples. The map \eqref{lyness5} admits the 
2-parameter deformation given by 
\begin{equation}\label{genlyness}
\varphi_{(a,b)}: \qquad \left(\begin{array}{c} x_1 \\ x_2 \end{array}\right) 
\longmapsto 
\left(\begin{array}{c} x_2 \\ \frac{ax_2+b}{x_1} \end{array} \right), 
\end{equation} 
where $a,b$ are parameters, which is also referred to as the (general) Lyness map. This map preserves the same symplectic form 
as \eqref{lyness5}, so $\varphi_{(a,b)}^*\,\omega=\omega$, and in this case no constraints on the parameters are needed for integrability of the deformed map, 
since for all $a,b$ 
the map \eqref{genlyness} has the first integral 
\begin{equation}\label{ghlyness} 
H_{Lyness} =
x_1+x_2 + a\left(\frac{x_2+a}{x_1}\right) +
b\left(\frac{x_1+x_2+a}{x_1x_2}\right)+ 
a\left(\frac{x_1+a}{x_2}\right).  
\end{equation}
This is clearly a 
2-parameter generalization 
of the original conserved quantity 
$H$ in \eqref{hlyness}, which is recovered when $a=b=1$. Hence Liouville--Arnold integrability 
holds for the general Lyness map 
\eqref{genlyness}, for all  $a,b$. 

\subsection{Outline of the paper} 

In this paper, we consider the deformation of an integrable cluster map corresponding to the 
even-dimensional case of Dynkin type $A_{2N}$, 
for all values of $N\geq 1$. Section 2 consists of introductory material, illustrated with simple examples. 
In subsection~\ref{ss:cluster-algs}, we give some brief background on cluster algebras, mutation periodicity and cluster maps.  We follow this in subsection~\ref{ss:discrete-int-sys} by recalling the 
appropriate 
notions  required to
define 
discrete integrable systems in this context. Next, in subsection~\ref{ss:sing-analysis}, we give a short introduction to 
a heuristic method for identifying integrable difference equations  known as \emph{singularity confinement analysis} \cite{Grammaticos1991DoIM},
with a subsequent exposition 
in subsection~\ref{s:algentrop} 
of the concept of 
\emph{algebraic entropy}  for rational maps 
\cite{Bellon_1999}, and its connection with degree growth of cluster variables.  Subsection~\ref{ss:deformations} introduces the details of the deformation approach of the second author and Kouloukas, and in subsection~\ref{ss:Laurentification} we present two examples of how singularity 
confinement analysis may be used to perform \emph{Laurentification} of a deformed map, thereby 
lifting it to an undeformed cluster map on a higher-dimensional space.

In Section~\ref{s:type-A2N}, we begin our analysis of 
Zamolodchikov periodic cluster maps in Dynkin type $A_{2N}$, which are the starting point 
for their subsequent  deformations.  After presenting the associated Poisson structure, obtained from the symplectic form, in the undeformed case we construct a set of first integrals and use them to show that the periodic cluster map of type $A_{2N}$ is integrable in the Liouville--Arnold sense (Theorem~\ref{liouvilleA2N}).

In Section~\ref{A2Ndef} 
we present the first new example of an integrable deformation in higher rank, namely an integrable  deformation of type $A_{6}$, that is, a 6-dimensional Liouville integrable map which depends on three arbitrary parameters (Theorem~\ref{A6thm}).  For a 2-parameter subset 
of the latter deformation, we are able to obtain a Laurentification which lifts it to a cluster map in 15 dimensions, with the parameters corresponding to two additional frozen variables (Theorem~\ref{thmforA6}). 
This Laurentified map, and the associated quiver, forms the 
basis for our subsequent inductive approach to obtaining deformed cluster maps of type $A$ for all even rank cases. 
More precisely, by making  
a comparison with the integrable deformed map of type $A_{4}$ 
obtained in \cite{hone2021deformations}, we see that the Laurentification of the $A_6$ deformation is obtained from 
Laurentification of the deformed $A_4$ map, which is a cluster map in 11 dimensions, via a procedure that we call 
``local expansion'', which involves the insertion of a particular 4-node quiver. The quivers $Q_{A_4}$ and $Q_{A_6}$ 
associated with the Laurentifications of the 
deformed $A_4$ and $A_6$ cluster maps, respectively, are shown in Figure~\ref{fig:localexp-intro} below, while the relevant subquivers involved in the local expansion procedure are 
highlighted in Figure~\ref{fig:Q4toQ6-intro} (repeated later as Figures~\ref{fig:Localexpansion} and \ref{fig:Q4toQ6}).
\begin{figure}[!ht]
    \centering
\resizebox{1 \textwidth}{!}{%
 \begin{tikzpicture}[every circle node/.style={draw,scale=0.6,thick},node distance=15mm]

  \node [draw,circle,fill=blue!50,"$12$"] (12) at (0,0) {};
  
     \node [draw,circle,fill=red!50,"$6$"] (6) [right= of 12] {};
      \node [draw,circle,fill=red!50,"$7$"] (7) [right=of 6] {};
      \node [draw,circle,fill=red!50,"$8$"] (8) [right=of 7] {};
      \node [draw,circle,fill=red!50,"$9$"] (9) [right=of 8] {};
      \node [draw,circle,fill=red!50,"$10$"right] (10) [below right=of 9] {};
      \node [draw,circle,fill=red!50,"$5$" left] (5) [below left=of 6] {};

      \node [draw,circle,fill=blue!50,"$13$"] (13) [right=of 9] {};

       \node [draw,circle,fill=red!50,"$4$"below] (4) [below right=of 5] {};
       \node [draw,circle,fill=red!50,"$11$"below] (11) [right=of 4] {};
       \node [draw,circle,fill=red!50,"$1$"below] (1) [right=of 11] {};
       \node [draw,circle,fill=red!50,"$2$"below] (2) [right=of 1] {};
       
        \node [draw,circle,fill=red!50,"$3$" below] (3) at (4.4,-3.5) {};
       
       \node (a) at (4.4,-5.5) {\Large (a) $Q_{A_{4}}$};

  \begin{scope}[>=Latex]
  
  \draw[-> , thick]  (1) edge (2); 
  \draw[-> , thick]  (1) edge (7); 
  \draw[-> , thick]  (9) edge (1);
  \draw[-> , thick]  (11) edge (1);
   \draw[-> , thick]  (1) edge (10);

 \draw[-> , thick]  (2) edge (3);
 \draw[-> , thick]  (2) edge (8);
 \draw[-> , thick]  (2) edge (13);

 \draw[-> , thick]  (4) edge (11);
  \draw[-> , thick]  (12) edge (4);
   \draw[-> , thick]  (7) edge (4);
    \draw[-> , thick]  (3) edge (4);
    
 \draw[-> , thick]  (5) edge (12);
  \draw[-> , thick]  (5) edge (11);
   \draw[-> , thick]  (3) edge (5);
    \draw[-> , thick]  (7) edge (5);
   
   \draw[-> , thick]  (6) edge (12);
    \draw[-> , thick]  (6) edge (7);
     \draw[-> , thick]  (11) edge (6);
      \draw[-> , thick]  (6) edge[bend left= 15] (3);
      
    \draw[-> , thick]  (7) edge (8);
    
     \draw[-> , thick]  (8) edge (11);
      \draw[-> , thick]  (8) edge (9);
       \draw[-> , thick]  (10) edge (8);
       
        \draw[-> , thick]  (13) edge (9);
         \draw[-> , thick]  (3) edge[bend left=15] (9);
         
          \draw[-> , thick]  (13) edge (10);
           \draw[-> , thick]  (10) edge (3);
           
            \draw[-> , thick]  (3) edge (13);
            
             \draw[-> , thick]  (12) edge (3);

    \end{scope}

\draw [-{Latex[length=3mm]}] (10,-1.2) -- (11,-1.2) node[midway,sloped,above] {Expansion};

  \node [draw,circle,fill=blue!50,"$12$"] (12) at (12,0) {};
  
     \node [draw,circle,fill=red!50,"$7$"] (6) [right= of 12] {};
      \node [draw,circle,fill=red!50,"$8$"] (7) [right=of 6] {};
       \node [draw,circle,fill=green!50,"$9$"] (a1) [right=of 7] {};
        \node [draw,circle,fill=green!50,"$10$"] (a2) [right=of a1] {};
      
      \node [draw,circle,fill=red!50,"$11$"] (8) [right=of a2] {};
      \node [draw,circle,fill=red!50,"$12$"] (9) [right=of 8] {};
      \node [draw,circle,fill=red!50,"$13$"right] (10) [below right=of 9] {};
      \node [draw,circle,fill=red!50,"$6$" left] (5) [below left=of 6] {};

      \node [draw,circle,fill=blue!50,"$17$"] (13) [right=of 9] {};

       \node [draw,circle,fill=red!50,"$5$"below] (4) [below right=of 5] {};
       \node [draw,circle,fill=red!50,"$14$"below] (11) [right=of 4] {};
        \node [draw,circle,fill=green!50,"$15$"below] (b1) [right=of 11] {};
         \node [draw,circle,fill=green!50,"$1$"below] (b2) [right=of b1] {};
       
       \node [draw,circle,fill=red!50,"$2$"below] (1) [right=of b2] {};
       \node [draw,circle,fill=red!50,"$3$"below] (2) [right=of 1] {};
       
        \node [draw,circle,fill=red!50,"$4$" below] (3) at (18,-3.5) {};
       
       \node (a) at (18,-5.5) {\Large (b) $Q_{A_{6}}$};

  \begin{scope}[>=Latex]
  
  \draw[-> , thick]  (1) edge (2); 
 
  \draw[-> , thick]  (9) edge (1);

   \draw[-> , thick]  (1) edge (10);

 \draw[-> , thick]  (2) edge (3);
 \draw[-> , thick]  (2) edge (8);
 \draw[-> , thick]  (2) edge (13);

 \draw[-> , thick]  (4) edge (11);
  \draw[-> , thick]  (12) edge (4);
   \draw[-> , thick]  (7) edge (4);
    \draw[-> , thick]  (3) edge (4);
    
 \draw[-> , thick]  (5) edge (12);
  \draw[-> , thick]  (5) edge (11);
   \draw[-> , thick]  (3) edge[bend left=35] (5);
    \draw[-> , thick]  (7) edge (5);
   
   \draw[-> , thick]  (6) edge (12);
    \draw[-> , thick]  (6) edge (7);
     \draw[-> , thick]  (11) edge (6);
      \draw[-> , thick]  (6) edge[bend left= 25] (3);

      \draw[-> , thick]  (8) edge (9);
       \draw[-> , thick]  (10) edge (8);
       
        \draw[-> , thick]  (13) edge (9);
         \draw[-> , thick]  (3) edge[bend left=25] (9);
         
          \draw[-> , thick]  (13) edge (10);
           \draw[-> , thick]  (10) edge[bend left= 35] (3);
           
            \draw[-> , thick]  (3) edge (13);
            
             \draw[-> , thick]  (12) edge (3);
        

 \draw[-> , thick]  (7) edge (a1);
  \draw[-> , thick]  (a1) edge (a2);
   \draw[-> , thick]  (a2) edge (8);
    \draw[-> , thick]  (11) edge (b1);
     \draw[-> , thick]  (b1) edge (b2);
      \draw[-> , thick]  (b2) edge (1);
      
      \draw[-> , thick]  (b1) edge (7);
     \draw[-> , thick]  (b2) edge (a1);
      \draw[-> , thick]  (1) edge (a2);

      \draw[-> , thick]  (a1) edge (11);
	\draw[-> , thick]  (a2) edge (b1);  
	\draw[-> , thick]  (8) edge (b2);

    \end{scope}
    \end{tikzpicture}
    }
    \caption{Extension from $Q_{A_{4}}$ to $Q_{A_{6}}$}
    \label{fig:localexp-intro}
\end{figure}

\begin{figure}[!ht]
    \centering

\resizebox{0.7 \textwidth}{!}{%
 \begin{tikzpicture}[every circle node/.style={draw,scale=0.6,thick},node distance=15mm]
    
    \node [draw,circle,fill=red!50,"$7$"] (5) at (6,-7) {};
   \node [draw,circle,fill=red!50,"$8$"] (6)[right=of 5]{};
  
  \node [draw,circle,fill=red!50,"$11$"below] (4) at (6,-9.2) {};
   \node [draw,circle,fill=red!50,"$1$"below] (7) [right=of 4] {};
    
  \begin{scope}[>=Latex]
            
       \draw[-> , thick]  (5) edge (6);
        \draw[-> , thick]  (6) edge (4);
         \draw[-> , thick]  (4) edge (7);
          \draw[-> , thick]  (7) edge (5);

    \end{scope}
\draw [-{Latex[length=3mm]}] (10,-8.2) -- (11,-8.2) node[midway,sloped,above] {Expansion };
    \node [draw,circle,fill=red!50,"$8$"] (6) at (13,-7) {};
  \node [draw,circle,fill=red!50,"$14$" below] (4) at (13,-9.2) {};
   \node [draw,circle,fill=green!50,"$9$"] (7) [right=of 6] {};
  \node [draw,circle,fill=green!50,"$15$"below] (11) [right=of 4 ]  {};
   \node [draw,circle,fill=green!50,"$10$"] (8) [right=of 7] {};
  \node [draw,circle,fill=green!50,"$1$" below] (1) [right=of 11 ]  {};
    \node [draw,circle,fill=red!50,"$11$"] (9) [right=of 8] {};
  \node [draw,circle,fill=red!50,"$2$" below] (2) [right=of 1 ]  {};

  \begin{scope}[>=Latex]
   
    \draw[-> , thick]  (7) edge (8);
     \draw[-> , thick]  (8) edge (11);
      \draw[-> , thick]  (11) edge (1);
       \draw[-> , thick]  (1) edge (7);
       
        \draw[-> , thick]  (8) edge (9);
         \draw[-> , thick]  (9) edge (1);
          \draw[-> , thick]  (1) edge (2);
           \draw[-> , thick]  (2) edge (8);
       
        \draw[-> , thick]  (6) edge (7);
         \draw[-> , thick]  (7) edge (4);
          \draw[-> , thick]  (4) edge (11);
           \draw[-> , thick]  (11) edge (6);

    \end{scope}

\end{tikzpicture}
}
    \caption{Local expansion of the subquiver in $Q_{A_{4}}$}
    \label{fig:Q4toQ6-intro}
\end{figure}

By rewriting the action of local expansion in terms of exchange matrices, we construct an associated family of quivers by repeated iteration of this procedure. This produces a sequence of quivers with $4N+5$ nodes, two of which are frozen. 
The particular structure of the local expansion allows us to show that this does indeed result in the Laurentification of a  type $A_{2N}$ deformed cluster map depending on two arbitrary parameters (subsection~\ref{ss:type-A2N-Laurent}). This is the main result of Section 3. 

Section \ref{s:degreeA2N} addresses the question of integrability of the deformed $A_{2N}$ cluster maps constructed in the previous section. For want of a general procedure to obtain a sufficient number of Poisson-commuting first integrals when $N>3$, we address this question by considering a different integrability criterion, namely algebraic entropy. By using tropicalization and explicit recurrence formulae for denominator vectors, we are able to prove that the Laurentification of each deformed $A_{2N}$ cluster map has quadratic growth, and thereby show that the algebraic entropy is zero.  

\subsubsection*{Acknowledgements} 

ANWH was supported by grant IEC$\backslash$R3$\backslash$193024 from the Royal Society. 
WK acknowledges studentship funding from the EPSRC. All three authors are grateful to Lancaster University for financial support during the project. 

WK is supported by Grant-in-Aid for Scientific Research of Japan Society for the Promotion of Science, JSPS KAKENHI Grant Number 24KF0208

\section{Preliminaries}\label{s:prelims}
\setcounter{equation}{0}

\subsection{Cluster algebras}\label{ss:cluster-algs}

In this section, we recall the definition of two types of mutation, quiver mutation and cluster mutation, and introduce an example to see the construction of cluster algebras.

Let $Q=(V,E)$ be a quiver with $m$ nodes $V=\qty{1,2,\dots,m}$ and directed edges $E$. We assume that $Q$ does not possess any loops or oriented 2-cycles.  However, multiple edges between nodes are allowed, and when  two vertices have 
multiple edges between them we write $\xrightarrow{p}$ as a shorthand for $p$ parallel arrows.

\begin{defn}
    Let $Q$ be a quiver. \textit{Quiver mutation} at node $k$, to obtain the new quiver $\mu_{k}(Q)$, is performed by following the steps below:
    \begin{enumerate}
        \item For each full subquiver $i \xrightarrow{p} k \xrightarrow{q} j $, insert $pq$ edges $i \xrightarrow{pq} j $;
        \item Reverse all arrows which are connected to $k$; 
        \item Remove any 2-cycles which have been formed by inserting arrows. 

\end{enumerate}
\end{defn}

\begin{exmp}[Quiver mutation at node 2] {\ }

\begin{center}
\adjustbox{scale=0.8}{%
\begin{tikzcd}
	2 &&& 2 && {} & 2 &&& 2 \\
	&& {} \\
	3 && 1 & 3 && 1 & 3 && 1 & 3 && 1
	\arrow[from=1-1, to=3-1]
	\arrow[shift left=1, from=3-3, to=1-1]
	\arrow[shift right=1, from=3-3, to=1-1]
	\arrow[from=3-1, to=3-3]
	\arrow[shift left=2, from=3-4, to=3-6]
	\arrow[shift left=2, color={red}, from=3-6, to=3-4]
	\arrow[color={red}, from=3-6, to=3-4]
	\arrow[shift left=1, from=3-6, to=1-4]
	\arrow[shift right=1, from=3-6, to=1-4]
	\arrow[from=1-4, to=3-4]
	\arrow[from=3-7, to=1-7]
	\arrow[shift left=1, from=1-7, to=3-9]
	\arrow[shift right=1, from=1-7, to=3-9]
	\arrow[shift left=2, from=3-7, to=3-9]
	\arrow[color={red}, from=3-9, to=3-7]
	\arrow[shift left=2, color={red}, from=3-9, to=3-7]
	\arrow[from=3-10, to=1-10]
	\arrow[shift right=1, from=1-10, to=3-12]
	\arrow[shift left=1, from=1-10, to=3-12]
	\arrow[color={red}, from=3-12, to=3-10]
\end{tikzcd}\
}
\end{center}

\end{exmp}
As we assumed that the quiver has no 2-cycles and no loops, then we can associate it with an $m\times m$ skew-symmetric matrix. 
\begin{defn}
    Let $Q$ be a quiver with $m$ vertices and no 2-cycles 
    or loops. 
    Then  this quiver can be  encoded by the $n\times n$ skew-symmetric integer matrix $B=B(Q)=(b_{ij})$ by setting the matrix entry $b_{ij}$ to be the number of arrows from $i$ to $j$ minus the  number of arrows from $j$ to $i$. The matrix $B$ is called the  \textit{exchange matrix}. 
\end{defn}

As the quiver $Q$ can be represented by the exchange matrix $B=B(Q)$, one can formulate the quiver mutation in terms of entries of $B$, giving a new exchange matrix $\mu_{k}(B)$. This is referred as \textit{matrix mutation}, defined by the following formulae.

\begin{defn}
    Let $B$ be an exchange matrix and let $B^{'}=\mu_{k}(B)$ be the new exchange matrix obtained by applying mutation to the exchange matrix $B$ in direction $k$. The entries of $B^{'}=(b^{'}_{ij})$ are given by  
    \begin{equation}\label{matrixmu}
        b^{'}_{ij} =
        \begin{cases}
           - b_{ij} & \text{if} \ i= k \ \text{or} \ j=k  \\ 
            b_{ij} + \frac{1}{2} \qty(\abs{b_{ik}}b_{kj} + b_{ik}\abs{b_{kj}})& \text{otherwise}  \\ 
        \end{cases}
    \end{equation}
\end{defn}

Alongside quiver  mutation, and the 
corresponding matrix mutation (as above), cluster variables transform under \textit{cluster mutation}. 

\begin{defn}
    Let $\mathcal{F}=\mathbb{C}(x_{1},\dots,x_{m})=\mathbb{C}(\vb{x})$ denote the field of rational functions in $m$ independent variables $x_{1},\dots,x_{m}$ over $\mathbb{C}$, and denote the initial cluster by $\vb{x} = (x_{1},\dots,x_{m}) \in \mathcal{F}^{m}$. The \textit{cluster mutation} of $\vb{x}$ in direction $k$ is $\mu_{k}(\vb{x}) = \qty(x_{1},\dots,x_{k-1},x'_{k},\dots x_{m})\in \mathcal{F}^{m}$, where $x'_{k}$ is the element defined by the expression 
    \begin{equation}\label{mu1}
    \mu_{k}(x_{k}) =x'_{k} =  \dfrac{1}{x_{k}}\qty( \prod^{m}\limits_{\substack{j=1 \\ b_{jk} > 0}}x_j^{b_{jk}}  + \prod^{m}\limits_{\substack{j=1 \\ b_{jk} < 0}}x_j^{-b_{jk}} )  
\end{equation}
This expression is known as a (coefficient-free) \textit{exchange relation}.
\end{defn}

Note that we could work over base fields other than $\mathbb{C}$ but we will restrict to this choice for the  
geometric setting to be considered later.

As mentioned in the introduction, a pair  $(\vb{x},Q)$ is called an initial seed, but it is convenient to denote such a seed  alternatively 
by $(\vb{x},B)$, where $B=B(Q)$ is the exchange matrix corresponding to $Q$.  
Given an initial seed $(\vb{x},B)$ of size $m$, one can apply  mutation in $m$ possible directions, which produces $m$ new seeds. Subsequently, mutation can again be applied to each such seed in $m$ possible directions, and so on. It is important to note that applying two consecutive mutations in the same direction does not yield a new seed. This is due to the fact that mutation is involutive: $\mu_{i}^{2} = \mu_{i}\circ\mu_{i}=\id$.  The structure of the seeds can be encoded by labelling the vertices of a rooted $n$-valent tree by clusters. Note that, in general, mutations on different vertices do not commute (in the sense that $\mu_{i}\circ \mu_{j} \neq \mu_{j}\circ \mu_{i}$) unless the vertices are ``far apart'' (i.e.\ are at least distance 2 apart in the quiver). Then one can produce the collection of cluster variables which are induced by iterated cluster mutations in all directions. The set of all cluster variables obtained in this way generates the \textit{cluster algebra} corresponding to the seed $(\vb{x},B)$. 

\begin{defn}[Cluster algebra]
    The cluster algebra $\mathcal{A}(\vb{x},B)$ of \emph{rank} $m$ is the $\mathbb{C}$-subalgebra of the field $\mathcal{F}$ whose generating set is the set of all cluster variables produced by all possible sequences of mutations applied to the initial seed $(\vb{x},B)$ of size $m$. 
\end{defn} 
There is a more general way to define a cluster algebra, including coefficients. One way to do this is by introducing frozen variables. 
This requires a slightly wider 
class of exchange matrices, which are called 
skew-symmetrizable, defined as follows.

\begin{defn}
    An $m\times m$ integer matrix $B$ is called skew-symmetrizable 
    if there exists an integer diagonal matrix $D$ (called  a  \textit{skew-symmetrizer}) such that $(DB)^{T} = -DB$.
\end{defn}

Note that a skew-symmetric matrix $B$ satisfies the above definition with $D$ being the identity matrix. The notion of a cluster algebra can be extended by including \textit{frozen variables} in clusters, which are extra variables that do not mutate.  
An \textit{extended cluster}  $\tilde{\vb{x}}=(x_{1},x_{2},\dots,x_{m},x_{m+1},\dots,x_{m+s})$ consists 
of $m$ mutable cluster variables $x_{1},\dots,x_{m}$ together with $s$ frozen variables $x_{m+1},\dots,x_{m+s}$.  If the $(m+s)\times m$ matrix $\tilde{B}$ has an upper $m\times m$ submatrix that is skew-symmetrizable, then we call $\tilde{B}$ an \textit{extended exchange matrix}, and we write 
$\tilde{B}=(b_{ij})$, with the same notation for matrix elements as before. 
Thus we obtain a cluster algebra with initial seed $(\tilde{\vb{x}},\tilde{B})$ generated by cluster variables obtained from successive   mutations $\mu_{k}(\tilde{\vb{x}},\tilde{B}) = (\tilde{\vb{x}}',\tilde{B}')$ in non-frozen directions $1\leq k\leq m$, where $\tilde{\vb{x}}' = (x_{1},\dotsc, x_{k}',\dotsc,x_{m},x_{m+1},\dotsc,x_{m+s})$ and $\tilde{B}'$ are given by the 
exchange relation 
\begin{equation}\label{coeffmut}
    x_{k}'x_{k} = \alpha_{k}\prod^{m}\limits_{\substack{j=1 \\ b_{jk} > 0}}x_{j}^{b_{jk}}  + \beta_{k}\prod^{m}\limits_{\substack{j=1 \\ b_{jk} < 0}}x_{j}^{-b_{jk}} 
\end{equation}
where the coefficients are 
\begin{equation}\label{coeffvar}
    \alpha_{k} = \prod^{m+s}\limits_{\substack{j=m+1 \\ b_{jk} > 0}}x_{j}^{b_{j,k}}, \quad \beta_{k} = \prod^{m+s}\limits_{\substack{j=m+1 \\ b_{jk} < 0}}x_{j}^{-b_{j,k}} 
\end{equation}
and the matrix mutation is defined by \eqref{matrixmu}, 
so that the formula for 
$\tilde{B}'$ is the same as previously given for $B'$, except that the range of row indices is now from $1$ to $m+s$. These more general cluster algebras are referred to 
as \textit{cluster algebras of geometric type}.

Cluster algebras possess several interesting structural features. One of their most significant features 
is that cluster variables, obtained from successive mutations, are expressed as Laurent polynomials in the initial mutable cluster variables, defined over the ring of polynomials in the frozen variables with integer coefficients. 
This is known as the \textit{Laurent phenomenon}, and is stated as follows.

\begin{thm}[Laurent phenomenon]\label{LP}
Every cluster variable generated by cluster mutations belongs to the ring of Laurent polynomials over $\mathbb{Z}$ in the initial cluster variables, more precisely the ring $\mathbb{Z}[x_1^{\pm 1},\ldots,x_m^{\pm 1}, x_{m+1},\ldots,x_{m+s}]$. 
\end{thm}

This property of cluster mutations is essential for our later results. We will consider another special feature, 
which 
(unlike the Laurent phenomenon) 
may or may not be present in a cluster algebra, namely \textit{periodicity} under sequences of 
quiver/matrix mutations, which was introduced by Fordy and Marsh \cite{Fordy_2010}.

\begin{defn}[Mutation periodicity]\label{mutperiod}
Let $Q$ be a quiver with $m$ vertices. Then $Q$ 
is said to be mutation periodic with period $r$ if there exists a sequence of quiver mutations 
$\mu_{i_1},\ldots, \mu_{i_r}$ whose action is equivalent to a cyclic permutation of the labels of the quiver $Q$, i.e. 
\begin{align*}
            \mu_{i_{r}}\mu_{i_{r-1}}\cdots \mu_{i_{2}}\mu_{i_{1}}(Q) = \rho^{r}(Q)
        \end{align*}
where $\rho$  is the cyclic permutation 
$\rho=(1\,2\cdots m-1\,m)$. We say a skew-symmetric exchange 
matrix $B$ is mutation periodic (with period $r$) if its associated quiver is, so that 
\begin{align*}
            \mu_{i_{r}}\mu_{i_{r-1}}\cdots \mu_{i_{2}}\mu_{i_{1}}(B) = \rho^{r}(B)
        \end{align*}
also holds. 
\end{defn}

\begin{exmp}[Quiver of type $A_2$] \label{QA2}
The quiver associated with type $A_2$ is drawn as 
\begin{center}
\begin{tikzpicture}[every circle node/.style={draw,scale=0.6,thick},node distance=25mm]
  \node [draw,circle,"$1$"] (a1) at (0,0) {};
  \node [draw,circle,"$2$"] (a2) [right=of a1] {};
  
  \begin{scope}[>=Latex]
  \draw[-> , thick] (a1) edge (a2);
    \end{scope}

\end{tikzpicture}
    
\end{center}
The action of the mutation $\mu_{1}$ on the quiver is given by just reversing the direction of the single arrow. Permuting the labels via the transposition $\rho=(1\,2)$ will therefore return the quiver to its original state, so we have $\mu_1(Q)=\rho(Q)$ and 
\begin{equation} \label{A2mu1}
\mu_1(B)=\rho(B), \qquad where\quad 
B=
\left(\begin{array}{cc}  0 & 1 \\ 
-1 & 0
\end{array}\right)
\end{equation}
is the exchange matrix of type $A_2$. So in this case, the quiver and its associated exchange matrix are mutation periodic with period 1. 
\end{exmp}

In a slightly more general setting, periodicity of exchange matrices/quivers was defined by Nakanishi \cite{nakanishi2011periodicities} as follows. 
\begin{defn}[$\hat{\rho}$-periodicity] Let $Q$ be a quiver with $n$ vertices and let $\hat{\rho}$ be a permutation of the nodes 
$V=\qty{1,2,\dots,m}$ of $Q$. Then the quiver $Q$ is  $\hat{\rho}$-periodic if there exists sequence of mutations such that 
\begin{align*}
    \mu_{i_{r}}\mu_{i_{r-1}}\cdots \mu_{i_{2}}\mu_{i_{1}}(Q) = \hat{\rho}(Q), 
\end{align*} 
or equivalently the associated exchange matrix $B=B(Q)$ 
satisfies 
\begin{align*}
    \mu_{i_{r}}\mu_{i_{r-1}}\cdots \mu_{i_{2}}\mu_{i_{1}}(B) = \hat{\rho}(B). 
\end{align*}
\end{defn}

\begin{exmp} The composition  of mutations $\mu_5\mu_1$ 
transforms the following quiver as 
\begin{center}
 \begin{tikzpicture}[every circle node/.style={draw,scale=0.6,thick},node distance=15mm]
  \node [draw,circle,"$2$"] (a1) at (0,0) {};
  \node [draw,circle,"$3$"] (a2) [right=of a1] {};
  \node [draw,circle,"$4$" right] (a3) [below right=of a2] {};
  \node [draw,circle,"$5$" below] (a4) [below left=of a3] {};
  \node [draw,circle,"$6$" below] (a5) [left=of a4] {};
  \node [draw,circle,"$1$" left]  (a6) [above left=of a5] {};
  \begin{scope}[>=Latex]
  \draw[->> , thick] (a1) edge (a2);
  \draw[-> , thick] (a2) edge (a3);
  \draw[-> , thick] (a4) edge (a3);
  \draw[-> , thick] (a4) edge (a6);
  \draw[-> , thick] (a6) edge (a1);
  \draw[-> , thick] (a6) edge (a5);
  \draw[-> , thick] (a5) edge (a1);
  \draw[-> , thick] (a1) edge (a4);
  \draw[-> , thick] (a4) edge (a6);
  \draw[-> , thick] (a5) edge (a2);
  \draw[-> , thick] (a2) edge (a4);
  \draw[-> , thick] (a3) edge (a1);
  \draw[-> , thick] (a3) edge (a5);
  \draw[-> , thick] (a2) edge (a6);
    \end{scope}
\draw [-{Latex[length=3mm]}] (3.5,-1.2) -- (5,-1.2) node[midway,sloped,above] {$\mu_{5}\mu_{1}$};
\node [draw,circle,"$2$"] (a1) at (7,0) {};
  \node [draw,circle,"$3$"] (a2) [right=of a1] {};
  \node [draw,circle,"$4$" right] (a3) [below right=of a2] {};
  \node [draw,circle,"$5$" below] (a4) [below left=of a3] {};
  \node [draw,circle,"$6$" below] (a5) [left=of a4] {};
  \node [draw,circle,"$1$" left]  (a6) [above left=of a5] {};
  \begin{scope}[>=Latex]
  \draw[-> , thick] (a1) edge (a2);
  \draw[-> , thick] (a1) edge (a6);
  \draw[-> , thick] (a5) edge (a1);
  \draw[-> , thick] (a3) edge (a1);
  \draw[->> , thick] (a2) edge (a3);
  \draw[-> , thick] (a6) edge (a2);
  \draw[-> , thick] (a2) edge (a5);
  \draw[-> , thick] (a4) edge (a2);
  \draw[-> , thick] (a6) edge (a3);
  \draw[-> , thick] (a3) edge (a5);
  \draw[-> , thick] (a3) edge (a4);
  \draw[-> , thick] (a4) edge (a6);
  \draw[-> , thick] (a5) edge (a4);
  
    \end{scope}

\end{tikzpicture}
\end{center}
If we permute the labels of the  quiver on the 
right-hand side by $\hat{\rho} = (1\,2\,3\,4\,5\,6)$, then we retrieve the original quiver $Q$. So we have 
$$ 
\mu_5\mu_1 (Q) = \hat{\rho}(Q), 
$$
and in this case $Q$ is $\hat{\rho}$-periodic. 
\end{exmp}

In the examples above, we have seen that there are particular  quivers with the property that a certain composition of $r$ mutations satisfies $\mu_{i_{r}}\cdots\mu_{i_{1}}(Q) = \hat{\rho}(Q)$ for 
some permutation $\hat\rho$. Suppose we define a map $\varphi = \hat{\rho}^{-1}\mu_{i_{r}}\cdots\mu_{i_{1}}$; then the action of $\varphi$ preserves the structure of the quiver, and the same is true for $B=B(Q)$, so equivalently we may write 
$$\varphi(B)=B.$$ 
The action of $\varphi$ on the cluster variables 
$\vb{x}=(x_1,\ldots,x_m)$ induces a birational map on $\mathbb{C}^m$ that we refer to as a \textit{cluster map}.
\begin{defn}[Cluster map]
\label{cmap} 
Let $(\vb{x},B)$ be an initial seed with an initial cluster $\vb{x}$ and mutation periodic 
($\hat{\rho}$-periodic) quiver $Q$ such that $B=B(Q)$.  Then a birational map $\varphi \colon \mathbb{C}^{m} \to \mathbb{C}^{m}$ defined by 
\begin{align*}
	\varphi = \hat{\rho}^{-1}\mu_{i_{r}}\mu_{i_{r-1}}\cdots \mu_{i_{2}}\mu_{i_{1}},
\end{align*}
which acts on the seed according to 
$\varphi(\vb{x},B) = (\varphi(\vb{x}),B)$, is called a 
cluster map. 
\end{defn}

\begin{exmp}
For Example \ref{QA2}, the cluster map $\varphi=\rho^{-1}\mu_1$ 
is given by \eqref{lyness5}. Every orbit of this map is periodic with period 5, i.e. $\varphi^5=\mathrm{id}$. This is the simplest example of Zamolodchikov periodicity. 
\end{exmp}

The classification of cluster algebras of finite type (having only finitely many seeds) in terms of Dynkin diagrams of simple Lie algebras was carried out by Fomin and 
Zelevinsky in \cite{fomin2006cluster}. In the context of an associated cluster map $\varphi$, this entails that all orbits of 
$\varphi$ are periodic with the same period: in due course, we will describe the details for cluster maps obtained 
in the even rank case of $A_{2N}$.   

In general, 
however, the behaviour of the dynamical system defined by a 
cluster map $\varphi$ can be extremely complicated. The reason 
we focus specifically on cluster maps, rather than arbitrary compositions of mutations, is that they naturally preserve 
an associated symplectic and/or Poisson structure. In 
this context, this allows the application of 
a suitable definition of discrete integrability, as described 
in the next subsection. 

\subsection{Poisson brackets and discrete integrable systems}
\label{ss:discrete-int-sys}

A cluster map $\varphi\colon \,\mathbb{C}^{m} \to \mathbb{C}^{m}$ is a birational map, whose  iterates can be  considered as a discrete dynamical system. 
In order to define a notion of Liouville integrability for such a map, it is usually necessary to have an invariant symplectic structure, as in \cite{10.3792/pjaa.63.198}, or more generally a Poisson structure (see e.g. \cite{2013, hone2019cluster}), 
yet for a general dynamical system there is no available 
algorithm to find such a structure. 
Fortunately, in the context of cluster maps, there is a natural 
presymplectic form that is compatible with the cluster algebra, which can be used to obtain both symplectic and Poisson structures for such maps.

We begin by recalling some basic facts about Poisson brackets. Given a manifold/variety $M$ of dimension $m$, let $\cal F$ denote a suitable algebra of functions on $M$ defined over some  field. In the traditional context of classical mechanics, the phase space $M$ is a smooth real manifold, the field is $\mathbb{R}$, and ${\cal F} =  C^{\infty}(M)$ denotes smooth functions on $M$. However, in the algebraic setting that is 
required here, we will take $M=\mathbb{C}^m$, and consider the rational functions ${\cal F}=\mathbb{C}(x_1,\ldots,x_m)$, which is a $\mathbb{C}$-algebra. 
A \textit{Poisson bracket} is a 
skew-symmetric bilinear map $\qty{\cdot,\cdot} \colon \,
{\cal F} \times {\cal F} \to {\cal F}$ 
that satisfies the following two properties, which 
are required to hold for all $f,g,h\in{\cal F}$:  
\begin{enumerate}
    \item Leibniz rule: $\qty{fg,h} = f\qty{g,h} + \qty{f,h}g$
    \item Jacobi identity: $\qty{f,\qty{g,h}} + \qty{g,\qty{h,f}} + \qty{h,\qty{f,g}} = 0 $. 
\end{enumerate}
Any $M$ equipped with such a bracket is called a Poisson manifold (or Poisson variety). 
In the local coordinates $\vb{x}= \qty(x_{1},\dots,x_{m})$, the explicit form of the Poisson bracket between two functions $f$ and $g$ is written as 
\begin{equation}\label{expoissbrackt}
    \qty{f,g} = \sum_{i,j} \vb{P}_{ij}(\vb{x}) \pdv{f}{x_{i}}\pdv{g}{x_{j}}
\end{equation}
where the coefficients
\begin{equation}
   \vb{P}_{ij}(\vb{x}) = \qty{x_{i},x_{j}}
\end{equation}
are the entries of an $m\times m$ skew-symmetric matrix. 
Equivalently, the Poisson structure defined by the bracket 
$\{ \, , \} $
is encoded into the bivector field $\vb{P}$ given locally by 
$$\vb{P}=\sum_{i<j}  \vb{P}_{ij}(\vb{x}) \, \partial_i\wedge\partial_j.
$$
Note that the rank of the Poisson structure at the 
point $\vb{x}$ is the rank of the skew-symmetric matrix $(\vb{P}_{ij})$ 
at $\vb{x}$, 
and if it is full rank
then this must be equal to the dimension of the manifold, in which case $M$ must be even-dimensional. 


If the Poisson structure is of full rank everywhere on $M$ (or at least on some open set), then it has an inverse  such that 
$\vb{P}\iprod \om=1$,\footnote{ The symbol $\iprod$ denotes the interior product, or hook product, giving the contraction between vector fields and differential forms.} which defines a closed, nondegenerate 
2-form $\om$ called the \textit{symplectic form} on $M$. Conversely, if $M$ is a symplectic manifold, meaning that it has a globally defined nondegenerate 2-form $\om$ with $\dd \om =0$, then its inverse $\vb{P}$ is a Poisson structure of full rank.

The main focus of study in Hamiltonian mechanics is on the solutions of 
Hamilton's equations, which are the ordinary differential equations 
\begin{equation}\label{hameq}
\dot{\vb{x}} = X_H (\vb{x})
\end{equation}
(with the dot denoting time derivative), 
defined by a fixed function $H\in{\cal F}$ (the Hamiltonian) and the associated vector field $X_H$ given by 
\begin{equation}\label{Hvector}
    X_{H}(\cdot) = \qty{\cdot,H} = \vb{P} \iprod \dd H.
\end{equation}
Such a vector field is called a \textit{Hamiltonian vector field}. 
In the setting of canonical Hamiltonian mechanics, when $M$ is a  symplectic manifold, the Hamiltonian vector field admits an equivalent definition in terms of the symplectic form $\om$, being specified by the equation 
$$ 
\dd H = \om (\cdot, X_H).
$$
Subject to Hamilton's equations, the time derivative of any function $f\in{\cal F}$ is given by 
$$ 
\dot{f}=X_H(f) = \{ f,H\}, 
$$
and this vanishes for any non-constant function $f$ that is in involution with $H$ (i.e.,
Poisson commutes with it). 
Any such $f$ with 
$\{ f,H\}=0$ is called a constant of motion, or \textit{first integral}. 

Clearly $H$ itself is a first integral for Hamilton's equations, since $\{H,H\}=0$ by skew-symmetry. In the setting of canonical Hamiltonian mechanics on a real symplectic manifold $M$ of dimension $m=2N$, when the Poisson structure is of full rank, Hamilton's equations (\ref{hameq}) are said to be \textit{completely integrable} (or 
\textit{Liouville integrable}) whenever there are $N$ independent first integrals $H_1=H$, $H_2,\ldots,H_N$ that are in involution with one another, i.e. $\{ H_i,H_j\}=0$ for all $i,j$. 
Liouville's theorem says that, for a completely integrable system, Hamilton's equations can be integrated by quadratures, while Arnold's addendum to the theorem states that the compact, 
connected level sets of the first integrals 
$H_1,\ldots,H_N$ are diffeomorphic to $N$-dimensional tori $T^N$ on which the flow of 
Hamilton's equations is conditionally periodic \cite{arnold}. 

We can now discuss the analogous notions in the case of discrete dynamical systems (maps). On a 
symplectic manifold (or variety) $M$, a regular map $\varphi$ is said to be symplectic whenever 
$\varphi^*\om=\om$. In the more general setting where $M$ is a Poisson manifold, we have the 
following 
\begin{defn}[Poisson map]\label{pm}
Let $M$ be 
equipped with the Poisson bracket $\qty{\cdot,\cdot}$ and let $\vb{x} = (x_{1},x_{2},\dots, x_{m})$ be local coordinates on $M$. Then a map $\varphi\colon M \to M$ is a Poisson map if it preserves the Poisson bracket, that is, 
\begin{equation}\label{poismapcond}
    \varphi^{*}\qty{x_{i},x_{j}} = \qty{\varphi^{*}x_{i},\varphi^{*}x_{j}}
    \qquad \mathrm{for} \,\, \mathrm{all} \quad i,j. 
\end{equation}
In particular, when the Poisson structure has full rank, the above condition implies that 
$\varphi$ is a symplectic map on $M$. 
\end{defn}

In the discrete context, a function $H$ is said to be a first integral for a map $\varphi$ whenever it is invariant under the action of 
$\varphi$, that is 
$\varphi^* H=H$. 
Following Maeda \cite{10.3792/pjaa.63.198}, there is a precise analogue of Liouville integrability 
for symplectic maps: 
such a map is completely integrable whenever 
it has first integrals $H_j$, $j=1,\ldots, N$ satisfying the following conditions:  
\begin{itemize}
\item The $N$ first integrals of 
$\varphi$ are independent, i.e. 
$\rd H_1\wedge \rd H_2 \wedge \cdots \wedge \rd H_N \neq 0; $
\item $\{ H_i,H_j\}=0$ for $1\leq i,j\leq N$. 
\end{itemize}
Moreover, in the setting of a real symplectic 
manifold $M$, Maeda showed that an appropriately modified version of 
Liouville's theorem holds, so that an analytic solution for the iterates of 
an integrable symplectic map $\varphi$ can be obtained via quadratures. 

In the context of Poisson manifolds, a more general definition of Liouville integrability is required. For the sake of completeness, here we provide a definition that is applicable to  Poisson maps, which is adapted from the setting of Hamiltonian systems on Poisson varieties 
considered in \cite{vanhaecke}. We restrict 
ourselves to the 
specific case of $M=\mathbb{C}^m$.  

\begin{defn}[Liouville integrable map: Poisson case]  
\label{dis}
Let $M=\mathbb{C}^m$ be equipped with 
a Poisson bracket of rank $2N \leq m$ 
on the $\mathbb{C}$-algebra of rational 
functions $\cal F$ on $M$,
and suppose that 
$\varphi: \, \mathbb{C}^{m} \to \mathbb{C}^{m}$ is a Poisson map.  
Then $\varphi$ is said to be \textit{Liouville integrable} if it admits  $m-N$ first integrals, of which 
\begin{itemize} 
\item  $m-2N$ are invariant Casimir functions 
$\mathcal{C}_{i}$, i.e. 
$\varphi^{*}\mathcal{C}_{i} =\mathcal{C}_{i}$ 
and they satisfy $\{\mathcal{C}_{k},F\} =0 $ for 
all functions $F\in{\cal F}$;  
\item  the remaining $N$ are additional invariant functions 
$H_{j}$ in involution, so that 
$\varphi^{*}H_{j} = H_{j}$ and 
$\{H_{i},H_{j}\} = 0$ for all $i,j$.
\end{itemize} 
Moreover, 
$\rd {\cal C}_1\wedge \rd {\cal C}_2 \wedge \cdots \wedge \rd {\cal C}_{m-2N} 
\wedge\rd H_1\wedge \rd H_2 \wedge \cdots \wedge \rd H_N \neq 0$, i.e.\  these invariants are required to be   
functionally independent.  
\end{defn}

In the case of  a cluster algebra 
$\mathcal{A}(\vb{x},B)$ of rank $m$, 
with  the  $m$-tuple of cluster variables 
$\vb{x}=(x_j)$ being affine 
coordinates on $M=\mathbb{C}^m$, 
we are interested in Poisson brackets taking a particularly simple form on clusters, namely  
\begin{equation}\label{poissbrk}
    \qty{x_{i},x_{j}} = P_{ij} x_{i} x_{j} ,
\end{equation}
where the  $m \times m$ skew-symmetric matrix 
$P=(P_{ij})$ is referred to as the associated \textit{Poisson matrix}. 
The terminology coined by Gekhtman, Shapiro and Vainshtein in \cite{gekhtman2003cluster}
for this is a \textit{log-canonical Poisson bracket}. Such a bracket is compatible with the cluster algebra if it satisfies the following requirement.
\begin{defn}
For a cluster algebra $\mathcal{A}(\vb{x},B)$, a Poisson bracket $\qty{\cdot,\cdot}$ 
between functions in $\mathcal{F}=\mathbb{C}(\vb{x})$ is said to be \emph{mutation compatible} if the bracket restricted to any cluster is log-canonical.
\end{defn}
Given the $m\times m$ exchange matrix 
$B$ for the cluster algebra $\mathcal{A}(\vb{x},B)$, with diagonal skew-symmetrizer 
$D$ such that $(DB)^{T} = - (DB) $,  by 
imposing the condition of mutation compatibility, one can obtain the following result (of 
which further details can be found in \cite{gekhtman2003cluster}, \cite{inoue2011difference}).  

\begin{thm}
    Assume that $B$ is skew-symmetrizable with skew-symmetrizer $D$ and that $B$ is of full rank.  Then, for any scalar 
    $\lambda\neq 0$, the Poisson matrix     %
    \begin{equation}\label{poissm1}
        P = \lambda D B^{-1}
    \end{equation}
    on the initial cluster $\vb{x}$ extends to a mutation compatible Poisson bracket. In addition to this, the product $PB$ is mutation invariant. 
\end{thm}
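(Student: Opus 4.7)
The plan is to check both assertions by computing how the bracket transforms under a single mutation $\mu_k$ and reading off the new Poisson matrix $P'$. The keystone is the following identity: under $P = \lambda D B^{-1}$ together with $(DB)^T = -DB$, one has
\[ B^T P = \lambda\, B^T D\, B^{-1} = -\lambda D, \]
a diagonal matrix. All subsequent arguments hinge on this and on the fact that $D$ remains a skew-symmetrizer after mutation (an immediate consequence of the matrix mutation rule \eqref{matrixmu}).

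For log-canonicity, observe first that $\{x_i', x_j'\} = P_{ij} x_i' x_j'$ automatically when $i, j \neq k$. For $j \neq k$, I would substitute $x_k' = x_k^{-1}(M_+ + M_-)$ with $M_\pm = \prod_l x_l^{[\pm b_{lk}]_+}$ and apply the Leibniz rule to $\{x_k', x_j\}$; using the log-canonical form on the initial cluster this yields
\[ \{x_k', x_j\} = \frac{x_j}{x_k}\left[M_+\left(\sum_l [b_{lk}]_+ P_{lj} - P_{kj}\right) + M_-\left(\sum_l [-b_{lk}]_+ P_{lj} - P_{kj}\right)\right]. \]
Log-canonicity on the new cluster requires the coefficients of $M_+$ and $M_-$ to agree, which reduces to $\sum_l b_{lk} P_{lj} = 0$, i.e., $(B^T P)_{kj} = 0$. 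The keystone identity provides precisely this whenever $j \neq k$, and we read off
\[ P'_{kj} = -P_{kj} + \sum_l [b_{lk}]_+ P_{lj}. \]

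For the invariance $P'B' = PB$, the same hypotheses apply with $(B', D)$ in place of $(B, D)$: $D$ skew-symmetrizes $B'$ by direct inspection of the mutation rule, and matrix mutation preserves rank, so $\lambda D(B')^{-1}$ is itself a log-canonical Poisson matrix on the new cluster. One then checks that this candidate coincides with the $P'$ derived above, equivalently that $P'B' = \lambda D$; combined with $PB = \lambda D$ this gives $P'B' = PB$ as required.

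The main obstacle is this last matching step: verifying that the $P'_{kj}$ obtained from the Leibniz-rule calculation agrees with the $(k,j)$-entry of $\lambda D (B')^{-1}$. This is a routine but tedious manipulation governed by the piecewise $[\pm b_{lk}]_+$ convention, and the cleanest route avoids case-splitting by leveraging the matrix identity $B^T P = -\lambda D$ globally rather than performing raw entry-wise sign bookkeeping.
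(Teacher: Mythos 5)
The paper does not actually prove this theorem: it is quoted from the literature, with the reader referred to Gekhtman--Shapiro--Vainshtein and Inoue--Nakanishi for details. So the comparison here is against the standard argument in those references, which your strategy essentially reproduces. Your ``keystone'' identity is correct ($B^TD = (DB)^T = -DB$ gives $B^TP = -\lambda D$), the Leibniz-rule computation of $\{x_k',x_j\}$ is right, and the reduction of log-canonicity in the mutated cluster to the vanishing of $(B^TP)_{kj}$ for $j\neq k$ is exactly the right mechanism; the resulting formula $P'_{kj} = -P_{kj} + \sum_l [b_{lk}]_+P_{lj}$ is also correct. (The Jacobi identity is automatic for a log-canonical bracket with constant $P$, so omitting it is harmless.)

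However, there is a genuine gap, and you have located it yourself without closing it: the identity $P'B' = \lambda D$ is asserted to be ``routine but tedious'' and is never verified. This single identity carries almost all of the weight of the theorem. First, it \emph{is} the second assertion ($PB = \lambda D = P'B'$, hence $PB$ is mutation invariant), so that assertion remains unproved. Second, it is the induction step for the first assertion: your Leibniz computation only shows log-canonicity in clusters one mutation away from the initial seed, under the hypothesis $(B^TP)_{kj}=0$ for $j\neq k$. To propagate to \emph{all} clusters --- which is what ``mutation compatible'' means in this paper --- you must know that the mutated pair $(P',B')$ again satisfies $(B')^TP' = -\lambda D$ (or at least that $(B')^TP'$ is diagonal), and that is precisely the unverified step. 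Concretely, you would need to expand $\sum_m P'_{im}b'_{mj}$ using the matrix mutation rule \eqref{matrixmu} together with your formula for $P'$, splitting on whether $i$ or $j$ equals $k$ and using $\sum_l b_{lk}P_{lj} = -\lambda D_{kk}\delta_{kj}$ to collapse the sign-dependent terms. Until that computation (or an equivalent conjugation argument, e.g.\ writing $\mu_k(B) = E_{k}^TBE_{k}$ and tracking $P$ through the same conjugation) is actually carried out, the proof establishes only the one-step log-canonicity and neither of the theorem's two global claims.
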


In the above result, $B$ and $D$ are both integer matrices of full rank, and it is convenient to choose non-zero $\lambda \in \mathbb{Q}$ so that $P$ is also an integer matrix.  
The requirement of full rank in this result is 
the reason we consider type $A_{2N}$ in what follows, since the rank of the cluster algebra is $m=2N$ in that case, and this coincides with the rank of the matrix $B$. (For Dynkin types $A_{2N+1}$, 
on the other hand, the matrix $B$ has a one-dimensional kernel; cf.\ 
the case of $A_3$ in \cite{hone2021deformations, hkm}.) The fact that there is a mutation compatible Poisson bracket in this setting then implies that a cluster map $\varphi$, being formed from a composition of mutations and a permutation, is a Poisson map in the sense of Definition \ref{pm}. 

%
For a general cluster algebra $\mathcal{A}(\vb{x},B)$, when $B$ may be degenerate, 
it was shown in \cite{gekhtman2003cluster} (see also \cite{fockg}) that there is always closed 2-form that is compatible with cluster mutations,  
which in the case of skew-symmetric $B$  
is given by the log-canonical presymplectic form 
\begin{equation}\label{sympform}
    \omega =\sum_{i < j } \frac{b_{ij}}{x_ix_j}\,\dd x_i \wedge \dd x_{j}  = \sum_{i < j }b_{ij}\,\dd \log x_{i} \wedge \dd \log x_{j}. 
\end{equation}
%
Under the action of the cluster mutation 
$\mu_{k}$ sending $(\vb{x},B)$ to $(\vb{x}',B')$, this 2-form is transformed to 
$$
\omega' =
\mu_k^* \,\om
=\sum_{i < j }b'_{ij}\,\dd \log x'_{i} \wedge \dd \log x'_{j},$$
where the new matrix entries are defined as in 
\eqref{matrixmu}. 
Note that if the exchange matrix is nondegenerate then  \eqref{sympform} is an honest symplectic form. 

In what follows, we will start by considering 
cluster maps in the case of Dynkin type $A_{2N}$, 
which preserve a symplectic form given by 
\eqref{sympform} with a nondegenerate exchange matrix $B=(b_{ij})$. However, 
when we consider deformations of these maps 
that preserve the same $\om$, we 
will find that they pull back to a cluster map defined an enlarged algebra, for which the exchange matrix is degenerate. 
Nevertheless, the deformed maps considered are themselves symplectic, so that definition the given by  Maeda \cite{10.3792/pjaa.63.198}  for Liouville integrability of maps is directly applicable; equivalently, in the statement of 
Definition \ref{dis}, this means that $m=2N$ and there are no nontrivial Casimirs.  

For a recent review of integrable systems in the context of cluster algebras, the reader is referred to \cite{GIintegrable}. 

\subsection{Singularity confinement analysis of difference equations}\label{ss:sing-analysis}

In this subsection, we briefly describe a heuristic approach to the analysis of discrete dynamical systems, called the \textit{singularity confinement test}, which was introduced as a tool to detect integrability. We follow the presentation in \cite{Hietarinta_Joshi_Nijhoff_2016}, and use  particular relevant examples to demonstrate the test procedure. 

The singularity confinement test was proposed by Grammaticos, Ramani and Papageorgiou in \cite{Grammaticos1991DoIM} as a means to assess the potential integrability of a discrete dynamical system. The original 
motivation for this test came from the local singularity analysis of the solutions of 
ordinary differential equations (ODEs), 
often referred to as the 
\textit{Kowalevski--Painlev\'{e} test}, which had been used extensively to 
detect ODEs that possess the  \textit{Painlev\'{e} property}, meaning that the solutions are single-valued around movable singular points. (A ``movable'' singularity is one whose position depends on an arbitrary choice of initial data.)  

The Kowalevski--Painlev\'{e} test 
consists of verifying 
necessary local conditions for an ODE to admit a unique 
analytic continuation around any movable singularity, by checking that all such 
singularities are poles at which the Laurent expansion of the solution has the required number of arbitrary constants. 
Motivated by the goal of trying to 
identify the discrete versions of Painlev\'{e} equations, 
Grammaticos, Ramani and Papageorgiou
suggested an analogue of these notions 
in the setting of difference equations: given a 
discrete iteration indexed by an integer $n$, 
singularity confinement requires that if the equation reaches a singularity at some arbitrary 
(``movable'') index value $n_0$, then the solution can be continued beyond this point, 
so that it is defined (non-singular)  for suitably large  indices $n>n_0$. 
A more formal definition is provided below, but to begin with we will illustrate  the procedure with an 
example. 

\begin{exmp}[Lyness recurrence] 
When it is rewritten in the form of 
a recurrence of second order, 
the general Lyness map \eqref{genlyness} is 
given by 
\begin{equation}\label{Lynessrecurrence}
    x_{n+2}x_{n} = a x_{n+1} + b
\end{equation}
where $a$ and $b$ are non-zero parameters. It is clear that 
a singularity can occur if $x_{n} = 0 $ 
for some $n$, and this arises 
whenever it happens that 
$ax_{n-1} +b=0$ at the previous step.  
Let us suppose that at some arbitrary index  $n=n_{0}$ the iteration produces $x_{n_{0}} = 0 $ with $x_{n_{0}+1} = u$ for a generic  regular 
(finite) value $u\in\mathbb{C}$, assumed arbitrary. Then subsequent  iterations give
$$
    x_{n_{0} +2 } = \frac{ax_{n_{0}+1} +b }{x_{n_{0}}} = \infty , \quad 
    x_{n_{0} +3 } = \frac{\infty + b}{u} = \infty , \quad 
    x_{n_{0} +4 } = \frac{\infty+b}{\infty} = \frac{\infty}{\infty}. 
$$
The ratio $\frac{\infty}{\infty}$ is an ambiguous term: it is a true singularity, with loss of information (the dependence on the arbitrary parameter $u$ has disappeared). To resolve it, we 
approach the neighbourhood of the singularity by introducing a small quantity $\epsilon$, and 
letting $x_{n_{0}} = \epsilon$.  Then 
the iteration of the recurrence \eqref{Lynessrecurrence} yields 
\begin{align*}
    & x_{n_{0}} = \epsilon \\
    & x_{n_{0}+1} = u \\
    & x_{n_{0}+2} = (au +b )\epsilon^{-1} \\
    & x_{n_{0}+3} = \frac{a(au +b)}{u}\epsilon^{-1}  + \frac{b}{u} \\ 
    & x_{n_{0}+4} = \frac{a^2}{u} + O(\epsilon) \\
    &x_{n_{0}+5} = \qty(\frac{a^3 + b u }{a(au +b )})\epsilon + O(\epsilon^2) \\
    &x_{n_{0}+6} = \frac{bu}{a^2} + O(\epsilon)
\end{align*}
Notice that, in the limit $\epsilon\to 0$, the value of $x_{n_{0}+4}$ is no longer ambiguous: it is now well defined, and finite values of the iterates are observed for 
$n>n_0+3$ in this limit. We can also reverse the iteration to a regular value of 
$x_{n_0-1}$. When  
$\epsilon \to 0$, the sequence produced by the iteration can be written symbolically as 
%
\begin{align} \label{singp}
    \qty(\dots,R, 0 ,R, \infty, \infty, R, 0,R,\dots ), 
\end{align}
where $R$ denotes a regular non-zero value. 
The symbolic sequence above is referred to as the associated singularity pattern. The iteration enters $0$ and then it passes through simple poles, 
corresponding  to the value $\infty$ with   
terms of $O(\epsilon^{-1})$, and another simple zero,   with the value $0$ appearing at $O(\epsilon)$, after which it returns to a regular value which depends on the prior initial value $u$. In this case, the singularity is said to be confined. 
\end{exmp} 

Note that the difference equation 
\eqref{Lynessrecurrence} above is equivalent to iterating a birational map $\mathbb{C}^2\rightarrow\mathbb{C}^2$,   namely \eqref{genlyness}. We can rewrite this with indices as 
\begin{equation}\label{Lynessclustermap}
    \varphi_{(a,b)} : \,  \mqty(x_{1,n}\\x_{2,n} ) \mapsto 
     \mqty(x_{1,n+1}\\x_{2,n+1} ) = \mqty(x_{2,n} \\[1em] 
    \dfrac{b + a x_{2,n}}{x_{1,n}} ),  
\end{equation}
where, compared with \eqref{Lynessrecurrence}, we have identified 
$x_n = x_{1,n}$ and $x_{n+1}=x_{2,n}$. 

The following definition of singularity confinement for affine maps was given by 
Goriely and Lafortune  in \cite{2004}. 
%
\begin{defn}
For an $m$-dimensional birational map $\psi$, of the  form 
\begin{align*}
    \psi\colon\,\, \vb{x} =   \mqty(x_{1}\\x_{2} \\ \vdots  \\x_{m} ) \mapsto 
    \mqty(f_{1}(\vb{x}_n) \\ f_{2}(\vb{x}_n)  \\ \vdots  \\f_{m}(\vb{x}_n)  )
\end{align*}
for suitable functions $f_{j}\in\mathbb{C}(\vb{x}_n)$, $1\leq j\leq m$, 
the  singularities  of the map are defined as  points $\vb{y}=(y_1,y_2,\dots, y_m)\in\mathbb{C}^m $ where either the right-hand side above is undefined, or the Jacobian determinant 
$\det D\psi$ of $\psi$ vanishes. A singularity is said to be confined if there is some positive integer $K$ such that $\lim_{\vb{x} \to \vb{y}} \psi^{K} (\vb{x})$ exists and 
$\lim_{\vb{x} \to \vb{y}} \det D\psi^{K} (\vb{x})\neq 0$.  
\end{defn}

Now we consider an example which will subsequently be of particular relevance: we will shortly see that it arises as a deformed $A_{2}$ cluster map.
\begin{exmp} \label{exA2} 
Consider
\begin{equation}\label{deformedA2}
    \tilde{\varphi}\colon\,\, \mqty(x_{1}\\x_{2}) \mapsto \mqty(\dfrac{1 + a_{1}x_{2}}{x_{1}} \\[1em] \dfrac{x_1 +a_{2}(1 + a_{1}x_{2})}{x_{1}x_{2}})
\end{equation}
This birational map has singularities when $x_1 =0$, and when $x_2 = 0$. Beginning with the case $x_1 = 0$, we once introduce a small quantity $\epsilon $. After setting the initial data to be $x_{1} = u, \ x_{2} = \dfrac{-1+\epsilon u}{a_{1}}$, the next iteration of the map reaches $\qty(\epsilon, \dfrac{(1 + \epsilon a_{2})a_{1}}{\epsilon u -1})$, where it becomes singular when $\epsilon\to 0$:
\begin{align*}
&\mqty(-(a_{1} - 1)(a_{1} + 1) \epsilon^{-1}  - a_{1}^2(u + a_{2})\\[1em] \dfrac{a_2(a_1 - 1)(a_1 + 1)}{a_1}\epsilon^{-1} + \dfrac{u a_2 + a_2^2 - 1}{a_1}) \to 
    \mqty(-a_{2} + \epsilon (u + a_{2})a_{2} + O(\epsilon^2) \\[1em] \epsilon  \dfrac{-(a_{2} - 1)(a_{2} + 1)a_{1} }{a_{2}(a_{1} - 1)(a_{1} + 1)} + O(\epsilon^2)) \\[1em]
& \to  \mqty( -\dfrac{1}{a_{2}} + O(\epsilon) \\[1em] \dfrac{-a_2^2 + u(a_1^2 - 1)a_2 + a_1^2}{a_1(a_2^2 - 1)} + O(\epsilon))
\end{align*}
As $\epsilon \to 0 $, the sequence becomes 
\begin{equation}\label{singA21}
  \mqty(\infty \\[1em] \infty )  \to  \mqty( - a_{2} \\[1em] 0 
  ) \to  \mqty( -\dfrac{1}{a_{2}} \\[1em]\dfrac{-a_2^2 + u(a_1^2 - 1)a_2 + a_1^2}{a_1(a_2^2 - 1)} )
\end{equation}
Therefore after finite number of steps $K$ there is a non-vanishing 
limit $\lim_{\vb{x} \to \vb{y}} \psi^{K} (\vb{x}) \neq 0 $. Hence the singularity is confined. 

For the case of the singularity $x_2=0$, we take the same procedure with the initial iterates $x_{1} = a_2(1 + a_1 u ) / (\epsilon u - 1)$ and then $\epsilon \to 0 $ give us the 
sequence of iterations  
\begin{equation}\label{singA22}
  \mqty(-a_{2} \\[1em] \infty )  \to  \mqty( \infty \\[1em] (a_1 -a_1a_2^2) / (a_2^2 -1 ) ) \to  \mqty( 0\\[1em] -1/a_1 ) \to \mqty( R\\[1em] R )
\end{equation}
where $R$ is some non-zero regular value. Therefore both singularities are confined. 
\end{exmp}

\begin{rem}\label{A2rem} 
When  $b\neq 0$, any Lyness map $\varphi_{(a,b)}$ 
is conjugate to an ``elementary'' one, of the 
form $\varphi_{(a',1)}$ with a single 
parameter $a'=a/\sqrt{b}$, via rescaling each of the 
coordinates $(x_1,x_2)\in\mathbb{C}^2$ 
by a factor of $\sqrt{b}$. 
The map \eqref{deformedA2} is the composition of two elementary Lyness maps \eqref{genlyness}, with different parameters, namely 
\begin{equation}\label{psicomp}
    \tilde{\varphi} = \varphi_{(a_2,1)}\circ \varphi_{(a_1,1)}. 
\end{equation}
We shall discuss how this composition arises from 
deformed cluster mutations below. 
\end{rem}
Example~\ref{exA2} showed that this particular map possesses the confinement property. In addition to this, one can show that the map \eqref{deformedA2} is Liouville integrable, as there exists a first integral that is invariant under the map, given by
\begin{equation}\label{firstintA2}
  H_{A_2}= x_{1} + \frac{a_1}{a_2}\,x_2 +  
  \frac{1+a_{1}^2}{x_{1}}  + \frac{a_1(1+a_2^2)}{a_2x_2}+ a_1 \left( \frac{x_{1}}{x_{2}} + \frac{x_{2}}{x_{1}} + \frac{1}{x_{1}x_{2}} \right) .
\end{equation}
The level sets of the first integral $H_{A_2}$ form a pencil of biquadratic curves, and  is an example
of a map of Quispel-Roberts-Thompson (QRT) type 
\cite{Duistermaat2010DiscreteIS}. 

The preceding examples, may appear to suggest that the confinement property always leads to integrable maps, but  this is not the case, as was pointed out by  Hietarinta and Viallet  \cite{Hietarinta_1998}, who provided an example of a  non-integrable map in the plane that passes the confinement test. In other words, the 
singularity confinement property is not sufficient for  integrability.

\begin{rem} \label{padic} 
The calculations required to find a 
a singularity confinement pattern can be somewhat 
laborious, but there is an alternative approach based on $p$-adic arithmetic that is usually more convenient in practice. Following  \cite{Kanki_2012}, one can consider 
a dynamical system defined over the field of $p$-adic numbers $\mathbb{Q}_p$, which (for prime $p$) is the completion of $\mathbb{Q}$ with respect to the $p$-adic norm $\abs{\ \cdot \ }_{p}$. If $x_n$ is a coordinate on an orbit at some step $n$, then it admits a well-defined reduction to the finite field $\mathbb{F}_p$ provided that 
$\abs{x_n}_{p}\leq 1$, while a singularity arises at the next step if $\abs{x_{n+1}}_{p} > 1 $. 
Moreover, in practice one can detect singularities by 
computing the prime 
factors of the terms in a rational orbit (defined over $\mathbb{Q}$). 
For instance, taking the Lyness map, given by the recurrence \eqref{Lynessrecurrence}, with a particular set of rational coefficients and initial values $a=1,b=2,x_0=1,x_1=1
\in\mathbb{Q}$, the first few terms of the orbit are factorized as 
$$ 
1,1,3,5,\frac{7}{3},\frac{13}{3\cdot 5},\frac{43}{5\cdot 7}, \frac{3\cdot 113}{7\cdot 13}, \frac{5\cdot 521}{13\cdot 43}, \ldots, 
$$
and for each successive prime that appears, from the numerators/denominators one can read off the same singularity pattern as in \eqref{singp}, wherever there is a zero/pole $\bmod p$, as above for  i.e.\ $p=3,5,7,13,43$ etc. 
\end{rem}

\subsection{Algebraic entropy}\label{s:algentrop}

As noted above, passing the singularity confinement test does not guarantee integrability. However, there is an alternative test which is a stronger indicator, and in some circumstances is known to be sufficient 
for integrability. 
This involves a quantity $\varepsilon$, called \textit{algebraic entropy}, introduced by Bellon and Viallet in \cite{Bellon_1999}, defined by 
\begin{equation}
   \varepsilon =  \lim_{n \to \infty} \frac{\log d_{n} }{n}
\end{equation}
where $d_{n} = \deg(\varphi^{n})$ is the degree 
of the $n$th iterate of the map $\varphi$ (given by the maximum of the degrees of the components of $\varphi^{n}$, written in affine coordinates, or the homogeneous degree 
of all components, for a map given in projective space). 
For a generic map, the degrees of iterates grow exponentially in $n$, so that the entropy is positive.  But in certain special cases, cancellation of common factors can occur under iteration, reducing the degree $d_{n}$. In many examples,  \cite{Bellon_1999, Hietarinta_1998},   
it was observed that
\begin{itemize}
    \item exponential degree growth corresponds to chaotic behaviour, and
    \item polynomial degree growth corresponds to regular behaviour (integrability).
\end{itemize}
Subexponential degree growth is precisely the case of zero entropy, $\varepsilon =0$, and in many examples, this has been shown to coincide with Liouville integrability.

For a general birational map, performing an exact calculation of degree growth, in order to find the algebraic
entropy, can be very complicated. However, it was shown in \cite{2013} that the complexity of this calculation can be significantly reduced in the context of cluster maps, by looking at the tropical analogue of the exchange relation,  
and it was further shown how to find the exact degree growth of cluster maps associated with  period 1 exchange matrices. It turns out that for cluster variables, the total degree is controlled by their denominators, which are given by monomials in the initial cluster, and the maximum  growth rate of degrees of these monomials  
provides a way to determine the algebraic entropy. 

Let us consider a cluster algebra $\mathcal{A}(\vb{x},B)$ of rank $m$. Recall that due to the Laurent phenomenon (Theorem \ref{LP}), any cluster variable 
$\tilde{x}$ generated by a sequence of cluster mutations is expressed as a Laurent polynomial in the initial cluster, taking the form 
\begin{equation}\label{clvar}
    \tilde{x} = \frac{{{\cal N}}(\vb{x})}{\vb{x}^{\vb{d}}}
\end{equation}
where ${{\cal N}}(\vb{x})$ is a polynomial in the initial cluster $\vb{x}=(x_1,\ldots,x_m)$, 
whose  coefficients are integers
(in fact, positive integers  \cite{positiveLS}), not divisible by any of these $x_j$,  
and $\vb{x}^{\vb{d}} = \prod_{j=1}^mx_{j}^{d^{(j)}}$ with \textit{denominator vector} (also called d-vector) $\vb{d}= (d^{(1)},\dots, d^{(m)})^T$. 
To find an iterative relation between d-vectors, it is sufficient to use the fact that each cluster variable $\tilde{x}$ arising from the repeated application of a sequence of exchange relations  is a subtraction-free rational expression in the initial cluster $\vb{x}$. 
Thus, upon substituting ratios of the form \eqref{clvar} directly into the exchange relation \eqref{mu1}, rewritten as 
$$ 
\tilde{x}_k'
 \tilde{x}_k = \prod_{b_{jk}>0} \tilde{x}_j^{b_{jk}} + \prod_{b_{jk}<0} \tilde{x}_j^{-b_{jk}},
$$
and comparing the denominators on each side, it follows  that  $\vb{d}_k'$, the d-vector of the new cluster variable $\tilde{x}_k'$ obtained from a mutation $\mu_k$ applied to a seed $(\tilde{x}_1,\ldots,\tilde{x}_m)$, is related to $\vb{d}_j$ (the d-vectors of the cluster variables 
$\tilde{x}_j$) by 
the  relation 
\begin{equation}\label{maxplusexch}
    \vb{d}_{k}' + \vb{d}_{k} = \max \qty(\sum\limits_{\substack{i=1 \\ b_{ik} > 0}}^mb_{ik}\vb{d}_{i} \ , \ - \sum\limits_{\substack{i=1 \\ b_{ik} < 0}}^m b_{ik} \vb{d}_{i}), 
\end{equation}
which is precisely the tropical analogue of equation 
\eqref{mu1}, in terms of $(\max,+)$  algebra. (For the origin of this tropical relation, see equation (7.7) in \cite{fomin2006cluster}, as well as \cite{FominShapiroThurston}; and for the specific case of period 1 exchange matrices, see Proposition 3.2 in \cite{2013}). 

To relate this to the above notion of algebraic entropy, note that the (total) degree of a general cluster variable $\tilde{x}$ is given by $\max (\deg {\cal N} (\vb{x}),|\vb{d}|)$, where $|\vb{d}|$ denotes the sum of all components of its d-vector, and $\deg \varphi^n$, the degree of an iterate of a  cluster map $\varphi$, is the maximum of the degrees of all the cluster variables that make up its components. With more careful analysis, it can be shown that the numerator degrees $\deg {\cal N}$ have the same asymptotic growth rate as the d-vectors; but in any case, the d-vectors provide a lower bound on   
$\deg \varphi^n$, and hence on the entropy $\varepsilon$.
So henceforth, we will only be concerned with the growth of d-vectors.

\begin{exmp}[Algebraic entropy of a special Somos-7 recurrence] The following exchange matrix
\begin{equation}
    B = \mqty(0 & 1 & 0 & -1 & -1 & 0 & 1 \\ -1 & 0 & 1 & 1 & 0 & -1 & 0 \\ 0 & -1 & 0 & 1 & 1 & 0 & -1 \\ 1 & -1 & -1 & 0 & 1 & 1 & -1\\ 1 & 0 & -1 & -1 & 0 &1 & 0 \\ 0 & 1 & 0 & -1 & -1 & 0 & 1 \\ -1 & 0 & 1 & 1 & 0 & -1 & 0),
\end{equation}
 is cluster mutation periodic with period 1: $\mu_{1}(B) = \rho(B)$ for the cyclic permutation $\rho = (1,2,3,4,5,6,7)$; this gives rise to the cluster map $\varphi=\rho^{-1}\mu_1$, that is 
\begin{equation}\label{somos7clustermap}
\begin{split}
    \varphi : \,(x_1,x_2,x_3,x_4,x_5, x_6, x_7) \mapsto & ( x_2,x_3,x_4,x_5,x_{6}, x_7,x_8) \\
    & = ( x_2,x_3,x_4,x_5,,x_6,x_7,\frac{x_{2}x_{7} + x_{4}x_{5}}{x_{1}})
   \end{split}
\end{equation}
where we renamed the new variable $x_1'$ as $x_{8}$, appearing in the last position after applying a cyclic permutation . If we repeat the same procedure for each iteration of the cluster map $\varphi=\rho^{-1}\mu_1$, labelling the new cluster variable obtained with a shifted index, then we obtain a sequence of cluster variables $(x_n)$ that satisfies the nonlinear recurrence 
\begin{equation} \label{eq:somos-71}
    x_{n+7}x_{n} = x_{n+1}x_{n+6} + x_{n+3}x_{n+4}, 
\end{equation}
 which is  a particular type of \textit{Somos-7 recurrence}.   
  For each $n$, we can write 
  $$ 
   x_n = \frac{{{\cal N}_n}(\vb{x})}{\vb{x}^{\vb{d}_n}}, 
   $$
   where $\vb{x}=(x_1,\ldots,x_7)$ is the initial cluster, 
   and the d-vector 
  $\vb{d}_{n} = (d_{1,n},d_{2,n},\dots,d_{7,n})^T$, whose  integer components $d_{j,n}$ for $1\leq j\leq 7$ corresponding to the exponent of the initial cluster variable $x_j$ in the monomial $\vb{x}^{\vb{d}_n}$. Then the tropical analogue of \eqref{eq:somos-71} is 
\begin{equation}\label{maxexpressionsomos-71}
     \vb{d}_{n + 7} + \vb{d}_{n}  = \max \qty(\vb{d}_{n+6} + \vb{d}_{n+1}\ , \ \vb{d}_{n+4} + \vb{d}_{n+3} ). 
\end{equation}
We are interested in one specific solution of this (max,+) recurrence, corresponding to the initial data specified by the cluster $\vb{x}=(x_1,x_2,\ldots,x_7)$. Upon writing 
$
x_1= \frac{1}{x_1^{-1}}
$ 
we see that the numerator ${\cal N}_1=1$, and 
$\vb{d}_1 =(-1,0,0,0,0,0,0)$, and similarly 
${\cal N}_2=1$, and 
$\vb{d}_2 =(0,-1,0,0,0,0,0)$, etc. Thus we can combine the 7 
initial vectors into a matrix, as follows:
\begin{equation}\label{dinits} 
\Big(\vb{d}_1\,\, \vb{d}_2\,\, \vb{d}_3\,\,\vb{d}_4\,\,\vb{d}_5\,\,\vb{d}_6\,\,\vb{d}_7  \Big) = -\mathbf{I}, 
\end{equation}
i.e.\ minus the $7\times 7$ identity matrix. 
In order to solve the tropical recurrence, and hence determine the growth of the d-vector $\mathbf{d}_n$, we employ a convenient trick, which is to introduce the quantity 
\begin{equation}\label{Usub}
\vb{U}_{n} =  \vb{d}_{n+5} - \vb{d}_{n+3}-\vb{d}_{n+2} +\vb{d}_{n}. 
\end{equation}
Then the key observation is that, after rearranging the terms on each side, \eqref{maxexpressionsomos-71} can be rewritten as a recurrence of second order for $\vb{U}_n$, namely 
\begin{equation}\label{eq:maxsomos7u}
    \vb{U}_{n+2} + \vb{U}_{n} = \max(\vb{U}_{n+1},0) .
\end{equation}
It can be verified directly that the orbit of any initial 
pair $(\vb{U}_1,\vb{U}_2)$ 
has period 5 under iteration of \eqref{eq:maxsomos7u}.
In other words, the sequence  $(\vb{U}_{n})$ satisfies the relation 
$(\cS^{5} - 1)\vb{U}_{n} = 0$,  
where $\cS$ is the shift operator that sends $n \to n+1$. By combining this with the definition of $\vb{U}_n$ in 
\eqref{Usub}, rewritten as 
$\vb{U}_n =(\cS^5-\cS^3-\cS^2+1)\vb{d}_n$, we find that the d-vector sequence satisfies the linear recurrence relation 
\begin{equation}\label{dlin}
(\cS^{5} -1) (\cS^{3} -1)(\cS^{2} -1) \vb{d}_{n} = 0. 
\end{equation}
The characteristic equation for the above recurrence has 1 as a triple root $\cS$, and all other roots have modulus 1, so its general solution has the form 
%
\begin{equation}
\vb{d}_{n} = \vb{a} \,n^2 + O(n)
\end{equation}
for some constant vector $\vb{a} $. This constant is determined from the initial conditions: it follows from 
\eqref{dlin} that 
$ 
    \vb{d}_{n + 9} + \vb{d}_{n + 8} - \vb{d}_{n + 6} - \vb{d}_{n + 5} - \vb{d}_{n + 4} - \vb{d}_{n + 3} + \vb{d}_{n + 1} + \vb{d}_{n}= 60 \, \vb{a}.
$  
With the particular initial conditions \eqref{dinits}, each component of $\vb{d}_n$ is (up to shifting the index) a copy of the same sequence, that is  
$-1,0,0,0,0,0,0,1,1,1,1,2,2,3,3,3,4,5,5,\ldots $, from which any 10 consecutive terms are sufficient to fix the value 
$\vb{a}=\frac{1}{60}(1,1,1,1,1,1,1)^T$. 
The fact that the Somos-7 recurrence relation 
\eqref{eq:somos-71} is homogeneous means that all of the cluster variables $x_n$ can be assigned the same  homogeneous degree 1. Hence,  in this case  all of the numerators ${\cal N}_n(\vb{x})$ are homogeneous polynomials, of total degree one more than that of the corresponding monomial denominator, which 
is $|\vb{d}_n|=\sum_{j=1}^7 d_{j,n}$. 
This implies that the total degree of 
$x_n$, as a rational function of $\vb{x}$, is
$$ 
\deg x_n =\max (\deg{\cal N}_n(\vb{x}),|\vb{d}_n|) 
=|\vb{d}_n|+1\sim \frac{7}{60}n^2 \quad \mathrm{as} \quad n\to\infty. 
$$
Thus we see that $\frac{1}{n}\log\deg x_n\to 0 $, so 
the algebraic entropy of 
the Somos-7 cluster map \eqref{somos7clustermap} is zero. 
\end{exmp}

\begin{rem} The substitution \eqref{Usub} is the (max,+) analogue of the monomial transformation
 \begin{equation}\label{Somos5subs}
    u_{n} = \frac{x_{n+5}x_{n}}{x_{n+3}x_{n+2}},
\end{equation}
which reduces 
the recurrence \eqref{eq:somos-71}  to
\begin{equation}\label{reducedsomos7}
u_{n+1}u_{n-1} = u_{n} + 1. 
\end{equation}
The latter is the Lyness recurrence \eqref{Lynessrecurrence} with $a=b=1$; so the iterates of \eqref{reducedsomos7} are equivalent to those of \eqref{lyness5}. Hence, the fact that 
the orbits of \eqref{eq:maxsomos7u} all have period 5 is the tropical analogue of the Lyness 5-cycle. Applying the substitution \eqref{Somos5subs} to  \eqref{Lynessrecurrence} produces a Somos-7 recurrence with coefficients 
$a,b$, which has the same degree growth as 
\eqref{eq:somos-71}. This is consistent with the fact that the cluster map \eqref{Lynessclustermap} is Liouville integrable.  
\end{rem}

\subsection{Deformation of cluster mutations}\label{ss:deformations} 

In this section, we briefly review the most general 
deformations of cluster mutations which preserve the presymplectic form, introduced by the second author and Kouloukas \cite{hone2021deformations}. Let us consider a generalized  mutation 
$\tilde{\mu}_k$ applied to a cluster $\vb{x}=(x_1,\ldots,x_m)$, of the following form:
\begin{equation}\label{eq:6}
\tilde{\mu}_k: \qquad     x'_{j}=\begin{cases}
      x^{-1}_{k}f_{k}(M^{+}_{k},M^{-}_{k}), & \text{for}\ j=k 
      \\
      x_{j}, & \text{for} \ j \neq k
    \end{cases}
\end{equation}
where 
the exchange relation %
that produces the new variable $x_k'$ is defined by a differentiable 
function $f_{k} : \mathbb{C} \times \mathbb{C} \to \mathbb{C}$,   and 
\begin{equation}\label{Monomials}
    M^{+}_{k} = \prod^{N}_{i=1}x_{i}^{[b_{ik}]_{+}}, \qquad  M^{-}_{k} = \prod^{N}_{i=1}x_{i}^{[-b_{ik}]_{+}}, 
\end{equation}
with $[b]_+:=\max(b,0)$. 
Note that if $f_{k}(M^{+}_{k},M^{-}_{k}) = M^{+}_{k} + M^{-}_{k}$, then the mutation is the ordinary coefficient-free cluster mutation, defined by the usual exchange relation \eqref{mu1}.  In this setting, the key results about preservation of the presymplectic form 
are the following.

\begin{lm}
[\cite{hone2021deformations}] \label{deformth} Let the action of the generalized mutation $\tilde{\mu}_k$ be defined as $(B',\vb{x}') = \tilde{\mu}_{k}(B,\vb{x})$, where $B'$ is given in \eqref{matrixmu} and the components of the new cluster $\vb{x}'=(x_1',\ldots,x_m')$ is defined as in \eqref{eq:6}, then the symplectic form $\omega$ transforms covariantly under the action of mutation, i.e.\
\begin{equation}
    \sum\limits_{i<j} \frac{b'_{ij}}{x'_{i}x'_{j}}\dd x'_{i} \wedge \dd x'_{j}  =  \sum\limits_{i<j} \frac{b_{ij}}{x_{i}x_{j}}\dd x_{i} \wedge \dd x_{j}
\end{equation}
if and only if 
\begin{equation}\label{eq:7}
    f_{k}(M^{+}_{k},M^{-}_{k}) = M^{+}_{k}g_{k}\qty(\frac{M^{-}_{k}}{M^{+}_{k}})
\end{equation}
for some differentiable function $g_{k}:\mathbb{C} \to \mathbb{C}$. 
\end{lm}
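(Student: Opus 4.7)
The plan is to compute $\omega' - \omega$ directly and isolate the condition on $f_k$. Since $x'_j = x_j$ for all $j \neq k$ and $d\log x'_k = -d\log x_k + \eta$ with $\eta := d\log f_k$, I would first split both 2-forms into the part indexed by pairs $i < j$ with $i,j \neq k$ and the part involving $d\log x_k$ (resp.\ $d\log x'_k$). Using $b'_{kj} = -b_{kj}$ for the reversed arrows and the elementary identity $\sum_{\ell \neq k} b_{k\ell}\, d\log x_\ell = d\log(M^-_k/M^+_k)$, the terms purely involving $d\log x_k$ cancel, and a direct calculation reduces $\omega' - \omega$ to
$$\sum_{i<j,\; i,j\neq k} (b'_{ij} - b_{ij})\, d\log x_i \wedge d\log x_j \;-\; \eta \wedge d\log\!\left(\tfrac{M^-_k}{M^+_k}\right).$$
I would then recognise that the first sum, thanks to the tropical formula $b'_{ij} - b_{ij} = [b_{ik}]_+[b_{kj}]_+ - [-b_{ik}]_+[-b_{kj}]_+$ combined with the antisymmetry of the wedge, equals exactly $d\log M^+_k \wedge d\log M^-_k$. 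The preservation condition $\omega' = \omega$ thus becomes the single 2-form identity
$$d\log f_k \wedge d\log\!\left(\tfrac{M^-_k}{M^+_k}\right) \;=\; d\log M^+_k \wedge d\log M^-_k.$$

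Next, I would rewrite this in the factored form $(d\log f_k - d\log M^+_k) \wedge (d\log f_k - d\log M^-_k) = 0$, which is checked by direct expansion of the wedge. This vanishing of the wedge of two closed 1-forms implies by a Frobenius-type argument that $\log(f_k/M^+_k)$ and $\log(f_k/M^-_k)$ are functionally dependent; equivalently, since their difference is $\log(M^+_k/M^-_k)$, the ratio $f_k/M^+_k$ is a function of $M^-_k/M^+_k$ alone, which is precisely the stated condition $f_k = M^+_k\, g_k(M^-_k/M^+_k)$. Conversely, given $f_k$ of this form, $d\log f_k = d\log M^+_k + \phi(u)\, d\log(M^-_k/M^+_k)$ with $u = M^-_k/M^+_k$ and $\phi = u g'_k/g_k$, and substituting into the required identity both sides collapse (using $d\log(M^-_k/M^+_k) \wedge d\log(M^-_k/M^+_k) = 0$) to $d\log M^+_k \wedge d\log M^-_k$. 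This handles both directions.

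The main obstacle I expect is the initial bookkeeping: making sure the contributions from matrix mutation at entries $(i,j)$ with $i,j \neq k$ cleanly combine into $d\log M^+_k \wedge d\log M^-_k$ with the correct sign, and that the cross terms in $\eta \wedge d\log x_j$ assemble into $\eta \wedge d\log(M^-_k/M^+_k)$ rather than its sign-reversed version. Once the single 2-form identity is obtained, the integration step is short, because the particular solution $d\log f_k = d\log M^+_k$ is visible by inspection, and the gauge freedom $d\log f_k \mapsto d\log f_k + c\, d\log(M^+_k/M^-_k)$ is forced by the closedness of both sides to have $c$ a function of $M^+_k/M^-_k$ alone, which integrates to the factor $g_k$.
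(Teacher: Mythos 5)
Your argument is correct: the paper states this lemma without proof (it is imported verbatim from the cited Hone--Kouloukas work), and your computation --- cancelling the $\dd\log x_{k}$ terms, reducing $\omega'-\omega$ to $\dd\log M^{+}_{k}\wedge \dd\log M^{-}_{k}-\dd\log f_{k}\wedge \dd\log\qty(M^{-}_{k}/M^{+}_{k})$, and then integrating the resulting pointwise condition, which is exactly Euler's homogeneity equation $M^{+}_{k}\,\partial_{M^{+}_{k}}f_{k}+M^{-}_{k}\,\partial_{M^{-}_{k}}f_{k}=f_{k}$ with general solution $f_{k}=M^{+}_{k}g_{k}\qty(M^{-}_{k}/M^{+}_{k})$ --- is essentially the derivation given there. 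One small point worth flagging: you correctly use $b'_{kj}=-b_{kj}$ for the arrows incident to $k$, whereas the matrix mutation rule as printed in \eqref{matrixmu} has a sign typo in that case, so your bookkeeping is right and the displayed formula in the paper is not.
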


%
\begin{thm}[\cite{hone2021deformations}]\label{thmsymp}
\label{thm:1}  Let $\tilde{\mu}_{i_{l}},\dotsc,\tilde{\mu}_{i_{2}}\tilde{\mu}_{i_{1}}$ be a sequence of $l$ generalized mutations, where each $\tilde{\mu}_{i_j}$ is of the form \eqref{eq:6}, 
for some function $f_{i_{j}}$ of the form \eqref{eq:7}, and such that
\[ \tilde{\mu}_{i_{l}}\cdots \tilde{\mu}_{i_{2}}\tilde{\mu}_{i_{1}}(B,\vb{x}) = (B,\tilde{\vb{x}}). \]
Then the map $\varphi\colon\, \vb{x} \mapsto \tilde{\vb{x}}$ given by the composition 
$\varphi=\tilde{\mu}_{i_{l}}\cdots \tilde{\mu}_{i_{2}}\tilde{\mu}_{i_{1}}$ preserves the log-canonical presymplectic form \eqref{sympform}, that is 
$\varphi^{*}\omega = \omega$. 
\end{thm}
The above result says that 
exchange matrices that are invariant under the action of a particular sequence of mutations give rise to generalized cluster maps that preserve the presymplectic form (\ref{eq:6}); and the same result holds for generalizations of cluster maps, in the sense of Definition \ref{cmap}, where the mutations $\mu_{i_j}$ are replaced by generalized mutations of the form \eqref{eq:6}. Our main interest is in how to choose the functions $f_k$ so that the generalized cluster map is integrable. Let us consider a couple of examples of deformed integrable maps.
\begin{exmp}[Integrable deformation of type $A_2$] \label{exmpdefmA2} The type $A_{2}$ quiver, as in Example \ref{QA2}, corresponds to the exchange matrix $B=B_{A_2}$ given by \eqref{A2mu1}. 
The matrix is invariant under the action of the sequence of mutations $\mu_{2}\mu_{1}$, i.e.\ $\mu_{2}\mu_{1}(B_{A_{2}}) = B_{A_{2}}$. This composition of mutations enables us define a cluster map $\varphi_{A_{2}} = \mu_{2}\mu_{1}$ that preserves the symplectic form
\begin{equation}\label{sympA2}
    \omega_{A_2} = \frac{1}{x_1 x_2}\dd x_1 \wedge \dd  x_2. 
\end{equation}
As the map satisfies the condition of Theorem \ref{deformth}, one can define a new symplectic map $\tilde{\varphi}=\tilde{\mu}_2\tilde{\mu}_1$, which  the composition of two deformed mutations in the direction $k=1,2$, given by 
\begin{equation}
    \tilde{\mu}_{k}(x_{k}) = x_{k}^{-1}M_{k}^{+} g_{k}\qty(\frac{M_{k}^{-}}{M_{k}^{+}}) 
\end{equation}
with $M_{k}^{\pm}$ as in \eqref{Monomials}.
By choosing linear functions $g_{k}(x) = a_{k} + b_{k}x$ for $k=1,2$, with non-zero coefficients, the mutated variables $ \tilde{\mu}_{k}(x_{k})$ can be written as 
\begin{equation} \label{eq:deformedmu}
    \tilde{\mu}_{k}(x_{k}) = x^{-1}_{k}\qty(b_{k}M_{k}^{-} + a_{k}M_{k}^{+})
\end{equation}
Also, by rescaling each of the cluster variables, $x_{i} \to \lambda_{i} x_{i}$ for $(\lambda_{1},\lambda_{2}) \in (\mathbb{C}^{*})^2$, we can rescale the parameters $b_{1},b_{2}$ and fix them both to be 1, without loss of generality. Thus we recover the deformed cluster map $\tilde{\varphi}_{A_{2}} = \tilde{\mu}_{2}\tilde{\mu}_{1}$, which coincides with the map \eqref{deformedA2}, whose singularity pattern was  considered  in Example \ref{exA2}. As noted before, since this preserves the symplectic form  \eqref{sympA2} and has the first integral \eqref{firstintA2}, this deformed map 
$\tilde{\varphi}_{A_{2}}$ is Liouville integrable.
\end{exmp}
\begin{exmp}[Integrable deformation of type $A_{4}$] \label{exmpA4} With all the arrows oriented in the same direction, going outward from the first, the   Dynkin quiver of type $A_{4}$ is  
\begin{center}
 \begin{tikzpicture}[every circle node/.style={draw,scale=0.6,thick},node distance=15mm]
  \node [draw,circle,fill=red!50,"$1$"] (a1) at (0,0) {};
  \node [draw,circle,fill=red!50,"$2$"] (a2) [right=of a1] {};
  \node [draw,circle,fill=red!50,"$3$" ] (a3) [right=of a2] {};
  \node [draw,circle,fill=red!50,"$4$"] (a4) [right=of a3] {};
  
  \begin{scope}[>=Latex]
  \draw[-> , thick] (a1) edge (a2);
  \draw[-> , thick] (a2) edge (a3);
  \draw[-> , thick] (a3) edge (a4);
    \end{scope}

\end{tikzpicture}
\end{center}
The corresponding skew-symmetric exchange matrix is 
%
\begin{equation}\label{exchA4}
     B_{A_{4}} =  \mqty(0 & 1 & 0 & 0  \\ -1 & 0 & 1 & 0  \\ 0 & -1 & 0 & 1\\ 0 & 0 & -1 & 0 ), 
\end{equation}
which is invariant under mutation at each of the vertices in turn, that is  
\begin{align*}
    \mu_{4}\mu_{3}\mu_{2}\mu_{1}(B_{A_{4}}) = B_{A_{4}}.
\end{align*}
Due to Zamolodchikov periodicity, every orbit of the corresponding cluster map $\varphi_{A_{4}} = \mu_{4}\mu_{3}\mu_{2}\mu_{1}$ is periodic with period 7.
Once again, by choosing linear functions $g_{k}$,  the associated deformed cluster mutations $\tilde{\mu}_k$  take the form \eqref{eq:deformedmu}. Furthermore, 
by exploiting the freedom to rescale each cluster variable, via $x_{i} \to \lambda_{i}x_{i}$  
for $1\leq i\leq 4$ with $(\lambda_{1},\lambda_{2},\lambda_3,\lambda_4) \in (\mathbb{C}^{*})^4$, 
the parameters can be adjusted so that $b_{2}=b_3=1$ and $a_{2}=a_3=1$. The resulting 
deformed map 
$\tilde{\varphi}_{A_{4}}=\tilde{\mu}_4\tilde{\mu}_3\tilde{\mu}_2\tilde{\mu}_1$ then depends on only 4 parameters, that is 
\begin{align*}
    \tilde{\varphi}_{A_{4}} :\quad  (x_{1},x_{2},x_{3},x_{4}) \to (x'_{1},x_{2}',x'_{3},x'_{4}),
\end{align*}
where the mutated variables $x'_{i}$ are given by the following relations: 
\begin{equation} \label{muA4}
\begin{aligned}
  &\tilde{\mu}_{1}: & x_1 x'_1 &= b_{1} +  a_{1}x_2 \\
  &\tilde{\mu}_{2}: &  x_2 x'_2 &=1 +  x'_1x_{3} \\
  &\tilde{\mu}_{3}: & x_3 x'_3 &= 1 + x'_2x_{4} \\
   &\tilde{\mu}_{4}: & x_4 x'_4 &= b_{4} + a_{4}x'_3 \\
   \end{aligned}
\end{equation}

The periodicity of the original  cluster map $\varphi_{A_{4}}$ 
provides a simple way to construct first integrals: 
the cyclically symmetric functions given by  
\begin{equation}\label{firstintA4}
\begin{split}
    &H_{1} = \sum_{j=0}^{6}L_{j}, \quad H_{2} = \prod_{j=0}^{6}L_{j}, 
\end{split}
\end{equation}
where $L_{i} = (\varphi^{*})^{i}(x_{1}) $, 
not only satisfy $\varphi_{A_{4}}^{*} (H_{i}) = H_{i}$ for $i=1,2$, but also 
$\{H_1,H_2\}=0$ with respect to the Poisson bracket corresponding to the symplectic form \eqref{sympform} associated with the nondegenerate exchange matrix \eqref{exchA4}, i.e.\ 
\begin{equation}\label{poissonbracketA4}
    \qty{x_{i},x_{j}} = P_{ij}x_{i}x_{j}
\end{equation}
where 
\begin{equation}\label{pmatrixA4}
   P= \mqty(0 & 1 & 0 & 1  \\ -1 & 0 & 0 & 0 \\ 0 & 0 & 0 & 1 \\ -1 & 0 & -1 & 0 ). 
\end{equation}
%
Hence the undeformed cluster map $\varphi_{A_{4}}$ is Liouville integrable. 

The approach adopted in 
\cite{hone2021deformations} is to find conditions on the parameters $a_k,b_k$ which ensure integrability of 
the deformed map $\tilde{\varphi}_{A_{4}}$. Since the latter preserves the same Poisson structure  \eqref{poissonbracketA4}, the key is 
to modify the first integrals \eqref{firstintA4}. 
The latter are replaced 
with linear combinations 
$$ 
\tilde{H}_{1} = \sum_{j}\alpha_{j}J_{j},  \quad \tilde{H}_{2} = \sum_{j}\beta_{j}K_{j}, 
$$ 
where $J_j,K_j$ denote 
the same Laurent monomials as appear in the original quantities $H_1,H_2$, respectively. 
The requirement that $\tilde{\varphi}_{A_{4}}^*\tilde{H}_i$ for $i=1,2$ puts conditions on the coefficients $\alpha_{j}$, $\beta_{j}$ which allows them to be completely determined, up to overall scale, subject to the necessary conditions 
$b_1=1=b_4$. Then the first integrals are 
\begin{equation}\label{A4ints}
\begin{split}
    \tilde{H}_{1} &= \frac{1}{x_{1}x_{2}x_{3}x_{4}} (a_{1}a_{4}x_{1}x_{2} + a_{1}a_{4}^2 x_{1}x_{2}x_{3} + a_{1}x_{1}x_{2}x_{3} + a_{1}a_{4}x_{1}x_{2}x_{3}^2 + a_{1}a_{4}x_{1}x_{4} + a_{1}a_{4}x_{1}x_{2}^2x_{4} \\ &+ a_{1}a_{4}x_{3}x_{4} + a_{1}a_{4}x_{1}^2x_{3}x_{4} + a_{4}x_{2}x_{3}x_{4} + a_{1}^2a_{4}x_{2}x_{3}x_{4} + a_{4}x_{1}^2x_{2}x_{3}x_{4} + a_{1}a_{4}x_{2}^2x_{3}x_{4} \\ &+ a_{1}a_{4} x_{1}x_{3}^2x_{4} + a_{1}a_{4}x_{1}x_{2}x_{4}^2 + a_{1}x_{1}x_{2}x_{3}x_{4}^2) \\[1em] 
    \tilde{H}_{2} & = \frac{(a_{1} + x_{2})(x_{1}+x_{3})(x_{1}+x_{3})(a_{4}+x_{3})(x_{1}x_{2} + a_{4}x_{1}x_{2}x_{3} + x_{1}x_{4} + x_{3}x_{4} + a_{1}x_{2}x_{3}x_{4})}{x_{1}x_{2}^2x_{3}^2x_{4}}, 
\end{split}
\end{equation}
and a direct calculation shows that 
$\{\tilde{H}_1,\tilde{H}_2\}=0$. Hence, with arbitrary $a_1,a_4$, the conditions $b_1=1=b_4$ are necessary and sufficient for the 
deformed map $\tilde{\varphi}_{A_{4}}$ to be  Liouville integrable. 
\end{exmp}

The examples above show that $\tilde{\varphi}_{A_{2}}$ and $\tilde{\varphi}_{A_{4}}$ are integrable symplectic maps. However, as a result of applying the deformation, the map no longer generates cluster variables, belonging to a Laurent polynomial ring. Thus in general, the deformed map is not a cluster map. To restore the property, we require a process called \textit{Laurentification}, which will now be introduced. 

\subsection{Laurentification}\label{ss:Laurentification}

In this subsection, we introduce specific 
lifts of the aforementioned deformed maps. 
The point is that, in their original coordinates, these deformed maps do not have a cluster algebra structure, since for generic parameter values their iterates are not given by Laurent polynomial expressions; however, by lifting to a higher-dimensional one which does generate Laurent polynomials, we obtain a cluster structure.   
This lifting is called \emph{Laurentification}, a name that was coined in \cite{hamad2014integrable}, 
although the process of using singularity patterns to obtain such lifts had been previously applied to a variety of difference equations, both integrable \cite{Ramani1995BilinearDP} and non-integrable  ones \cite{Hone_2007}. 
In our setting,  procedure helps us restore a feature of the original periodic maps that is lost through the deformation, namely the Laurent property.

Indeed, recall that one of the key features of a cluster algebra is the Laurent phenomenon, where every variable induced by cluster mutation can be expressed as a Laurent polynomial in the initial cluster variables. This implies that a cluster map, which is composed of certain mutations and permutations, has the \textit{Laurent property}, in the sense that all its iterates are Laurent polynomials, as described in Theorem \ref{LP}. 
\begin{rem}
Note that, for a map $\mathbb{C}^m\to \mathbb{C}^m$ with parameters, we can choose arbitrary values for the parameters in $\mathbb{C}$, and just require that the iterates should be Laurent polynomials with coefficients in $\mathbb{C}$. 
However, in what follows, we will show that we can regard the parameters in the deformed maps as being frozen variables, and an 
appropriate lift will produce Laurent polynomials over $\mathbb{Z}$, as in Theorem \ref{LP}. 
\end{rem}

Notice that the deformed map $\tilde{\varphi}_{A_{2}}$ in Example~\ref{exmpdefmA2} with generic choice of parameters $a_{1}, a_{2}$ cannot be a cluster map. This is because after two iterations of the map, beginning from the initial cluster $(x_{1},x_{2})$, the components are given by 
%
\begin{equation}
   (\tilde{\varphi}_{A_{2}})^{2}: \vb{x} \mapsto \mqty(\dfrac{a_1 a_2 + a_1 x_1 + a_1^2 a_2 x_2 + x_1 x_2}{x_2 (1 + a_1 x_2)} \\[1em] \dfrac{x_1 (a_1 a_2^2 + a_1 a_2 x_1 + x_2 + a_1^2 a_2^2 x_2 + a_2 x_1 x_2 + 
   a_1 x_2^2)}{(1 + a_1 x_2) (a_2 + x_1 + a_1 a_2 x_2)}\\  ) , 
\end{equation}
which consists of rational expressions whose denominator is no longer monomial, as the parameters prevent the cancellation with the numerator. Thus this deformation of the $A_2$ cluster map destroys the Laurent property that is present in its undeformed counterpart. 

To recover the Laurent property, one must try to lift the map to a higher-dimensional space, where the Laurent property is restored.

\begin{defn}[Laurentification] Let $\varphi \colon\, \mathbb{C}^{m} \to \mathbb{C}^{m}$ be a birational map.  A birational map $\psi \colon \mathbb{C}^{k} \to \mathbb{C}^{k}$ (for some $k\geq m$) is said to be a \emph{Laurentification} of $\varphi$ if the map $\psi$ is a lift that possesses the Laurent property. In other words, there exists a rational map $\pi : \mathbb{C}^{k} \to \mathbb{C}^{m}$ such that $\varphi \circ \pi = \pi \circ \psi$, and all of the iterates of $\psi$ are 
Laurent polynomials in the $k$ initial data.
\end{defn}
Note that there are several methods which fit the definition of Laurentification. One such, called recursive factorization, was introduced by Hamad and Kamp  \cite{hamad2014integrable}, who showed that certain QRT maps, whose iterates do not lie in a  Laurent polynomial ring, can be transformed into Somos-4 and Somos-5 recurrence relations with periodic coefficients, for which the Laurent property holds.

Here our approach is to find an appropriate rational map $\pi$ defined by dependent variable transformations, that is, expressed by new variables called \textit{tau functions}.  The transformations we use are suggested by the singularity confinement patterns induced by a deformed integrable map. This method was applied to several examples in \cite{kar64512}, \cite{hone2021deformations}. To see the approach in detail in small examples, let us consider the Laurentification of the deformed maps $\tilde{\varphi}_{A_{2}}$ and $\tilde{\varphi}_{A_{4}}$, which are shown in Example \ref{LaurentA2} and Example \ref{LaurentA4} as below.

\begin{exmp}[Laurentification of $\tilde{\varphi}_{A_{2}}$]\label{LaurentA2} 
    The singularity confinement pattern displayed by $\tilde{\varphi}_{A_{2}}$, as shown in Example~\ref{exA2}, suggests the introduction of  a rational map  
\begin{align*}
    \pi :\,  \qty(\tau_{-1},\tau_{0},\tau_{1}, \sigma_{0}, \sigma_{1}, \sigma_{2}, \sigma_{3}) \mapsto  \qty(x_{1},x_{2}) 
\end{align*}
which (for $n=0$)  is defined by the dependent variable transformation
\begin{equation}\label{vartrans}
    \begin{split}
    x_{1,n} = \frac{\sigma_{n}\tau_{n+1}}{\sigma_{n+1}\tau_{n}}, \quad x_{2,n} =\frac{\sigma_{n+3}\tau_{n-1}}{\sigma_{n+2}\tau_{n}}, 
    \end{split}
\end{equation}
where the index $n$ labels the iterates of the map $\tilde{\varphi}=\tilde{\varphi}_{A_{2}}$ in   \eqref{deformedA2},  
and the tau functions $\tau_n$ and $\sigma_n$ are associated with the singularity patterns  \eqref{singA21} and \eqref{singA22}, respectively. Substituting \eqref{vartrans} into the each iteration of \eqref{deformedA2} leads to a lift $\psi$ of the deformed map to the space of tau functions, 
such 
that 
each iteration of $\psi$ sends 
$$ 
\psi: \quad \qty(\tau_{n-1},\tau_{n},\tau_{n+1}, \sigma_{n}, \sigma_{n+1}, \sigma_{n+2}, \sigma_{n+3}) \mapsto\qty(\tau_{n},\tau_{n+1},\tau_{n+2}, \sigma_{n+1}, \sigma_{n+2}, \sigma_{n+3}, \sigma_{n+4}), 
$$
and $\pi \circ \psi = \tilde{\varphi} \circ \pi $. Then $\psi$ is defined by 
the following pair of bilinear equations:
\begin{subequations}
    \begin{equation}\label{exA21}
        \tau_{n+2}\sigma_{n} = \sigma_{n+2}\tau_{n} + a_{1}\sigma_{n+3}\tau_{n-1} 
    \end{equation}
    \begin{equation}\label{exA22}
        \sigma_{n+4}\tau_{n-1} = \sigma_{n+2}\tau_{n+1} + a_{2}\sigma_{n+1}\tau_{n+2}
    \end{equation}
\end{subequations}
If the relations above are to be regarded as exchange relations, then there must be initial data formed by an initial cluster and an exchange matrix. The initial clusters can be extracted from \eqref{exA21} and \eqref{exA22}, as follows. Let us denote 
\begin{equation}\label{seed1}
    \qty(\tau_{-1},\tau_{0},\tau_{1}, \sigma_{0}, \sigma_{1}, \sigma_{2}, \sigma_{3}) = \qty(\tilde{x}_{1},\tilde{x}_{1},\dots, \tilde{x}_{7})
\end{equation}
The presymplectic form $\tilde{\omega}_{A_2}$ on the space of tau functions can be obtained as the pullback 
\begin{equation}
    \tilde{\omega}_{A_{2}} = \pi^{*} \omega_{A_2} = \sum_{i<j} \tilde{b}_{ij} \dd \log \tilde{x}_i \wedge \dd \log \tilde{x}_j
\end{equation}
where the $\tilde{b}_{ij}$ are entries of a new exchange matrix, that is 
\begin{equation} \label{defmexchmA2}
    \tilde{B}_{A_2} = \mqty(0 & 1 & -1 & -1 & 1 & 0 & 0 \\ -1 & 0 & 1 & 1 & -1 & 1 & -1 \\ 1 & -1  & 0 & 0 & 0 & -1 & 1 \\ 1 & -1 & 0 & 0 & 0 & -1 & 1 \\ -1 & 1 & 0 & 0 & 0 & 1 & -1 \\ 0 & -1 & 1 & 1 & -1 & 0 & 0 \\ 0 & 1 & -1 & -1 & 1 & 0 & 0 )
\end{equation}
Next consider an extended initial cluster, with the coefficients $a_1,a_2$ added as a pair of frozen variables, namely  $\tilde{\vb{x}}=\qty(\tau_{-1},\tau_{0},\tau_{1}, \sigma_{0}, \sigma_{1}, \sigma_{2}, \sigma_{3},a_1,a_2)$, and introduce an extended  exchange matrix, given by 
\begin{equation} \label{exddefmexchmA2}
    \hat{B}_{A_2} = \mqty(0 & 1 & -1 & -1 & 1 & 0 & 0 \\ -1 & 0 & 1 & 1 & -1 & 1 & -1 \\ 1 & -1  & 0 & 0 & 0 & -1 & 1 \\ 1 & -1 & 0 & 0 & 0 & -1 & 1 \\ -1 & 1 & 0 & 0 & 0 & 1 & -1 \\ 0 & -1 & 1 & 1 & -1 & 0 & 0 \\ 0 & 1 & -1 & -1 & 1 & 0 & 0 \\ 0 & 1 & 1 & -1 & -1 & 0 & 0 \\ -1 & -1 & 0 & 0 & 0 & 1 & 1)
\end{equation}
The associated quiver may been found in Figure~\ref{QA2first}.

\begin{figure}[h!]
\begin{center}
\resizebox{0.6\textwidth}{!}{%
\begin{tikzpicture}[every circle node/.style={draw,scale=0.6,thick},node distance=15mm]
  \node [draw,circle,fill=red!50,"$5$"] (5) at (0,0) {};
  \node [draw,circle,fill=red!50,"$6$"] (6) [right=of 5] {};
  \node [draw,circle,fill=red!50,"$7$" below] (7) [below right=of 6] {};
  \node [draw,circle,fill=red!50,"$4$" below] (4) [below left=of 5] {};
    \node [draw,circle,fill=red!50,"$3$" below] (3) [below=of 4] {};
      \node [draw,circle,fill=red!50,"$1$" below] (1) [below=of 7] {};
      \node (a) [left=of 5]{};
      \node (b) [right=of 6]{};
      \node  [draw,circle,fill=blue!50,"$8$" left]  (8) [below left=of a]{};
      \node [draw,circle,fill=blue!50,"$9$" right] (9) [below right=of b]{};

 \node [draw,circle,fill=red!50,"$2$" below] (2) at (1,-4) {};
  
  \begin{scope}[>=Latex]
  \draw[-> , thick]  (5) edge (6);  
   \draw[-> , thick]  (4) edge (7); 
    \draw[-> , thick]  (3) edge (1); 
    
     \draw[-> , thick]  (4) edge (1); 
      \draw[-> , thick]  (1) edge (5); 
       \draw[-> , thick]  (1) edge (2); 
       
        \draw[-> , thick]  (7) edge (5); 
         \draw[-> , thick]  (7) edge (2); 
          \draw[-> , thick]  (4) edge (7); 
          \draw[-> , thick]  (3) edge (7); 
          
          \draw[-> , thick]  (6) edge (4); 
          \draw[-> , thick]  (6) edge (3); 
          \draw[-> , thick]  (2) edge (6); 
          
          \draw[-> , thick]  (5) edge (2); 
          \draw[-> , thick]  (2) edge (3); 
          \draw[-> , thick]  (2) edge (4); 
          
          \draw[-> , thick]  (8) edge (2); 
          \draw[-> , thick]  (8) edge (3); 
          \draw[-> , thick]  (4) edge (8); 
          \draw[-> , thick]  (5) edge (8); 
          
          \draw[-> , thick]  (9) edge (6); 
           \draw[-> , thick]  (9) edge (7);
            \draw[-> , thick]  (2) edge (9);  
             \draw[-> , thick]  (1) edge (9);

    \end{scope}

\end{tikzpicture}
}
\end{center}
\caption{The extended quiver $Q_{A_{2}}$ constructed through Laurentification.\label{QA2first}}
\end{figure}
For the above quiver, the cluster mutation in direction 4 acting on the initial clusters \eqref{seed1} produces a new cluster:
$$
   \hat{ \mu}_{4}:\qty(\tau_{-1},\tau_{0},\tau_{1}, \sigma_{0}, \sigma_{1}, \sigma_{2}, \sigma_{3},a_1,a_2) \mapsto \qty(\tau_{-1},\tau_{0},\tau_{1}, \tau_{2}, \sigma_{1}, \sigma_{2}, \sigma_{3},a_1,a_2),
$$ 
where the fourth component of the tuple sees $\sigma_{0}$ replaced by the new variable $\tau_{2}$, defined by the 
exchange relation
\begin{equation}\label{tauA21}
\begin{split}
 \tau_{2}\sigma_{0} = \tau_{-1} \sigma_{3} + a_{1}\tau_{0}\sigma_{2} .
\end{split}
\end{equation} (Here $\hat{\mu}_{j}$ denotes a cluster mutation in the cluster algebra associated with \eqref{exddefmexchmA2}, obtained by Laurentification of the deformed map.) 
Next, applying the mutation in the direction 1, that is  
$$
    \hat{\mu}_{1}:\qty(\tau_{-1},\tau_{0},\tau_{1}, \tau_{2}, \sigma_{1}, \sigma_{2}, \sigma_{3},a_1,a_2) \mapsto \qty(\sigma_4,\tau_{0},\tau_{1}, \tau_{2}, \sigma_{1}, \sigma_{2}, \sigma_{3},a_1,a_2), 
$$ 
produces the new cluster variable $\sigma_4$,  obtained via the exchange relation 
\begin{equation}\label{tauA22}
   \sigma_{4}\tau_{-1} = \sigma_{2} \tau_{1} + a_{2} \sigma_{1}\tau_{2} . 
\end{equation}
Notice that equations \eqref{tauA21} and \eqref{tauA22} are  just the bilinear relations \eqref{exA21} and \eqref{exA22} with $n=0$ respectively. Furthermore, we have that 
$\hat{\mu}_1\hat{\mu}_4 (\hat{B}_{A_2}) = \rho(\hat{B}_{A_2})$, where 
$\rho=(1,2,3,4,5,6,7)$ is a cyclic permutation of the nodes corresponding to the mutable (non-frozen) variables in the cluster $\tilde{\vb{x}}$. Then  we see that 
\begin{equation}\label{seqA2}
\rho^{-1} \hat{\mu}_1\hat{\mu}_4
(\tilde{\vb{x}},\hat{B}_{A_{2}}) = (\psi(\tilde{\vb{x}}),\hat{B}_{A_{2}}) ,
\end{equation} 
so $\psi=\rho^{-1} \hat{\mu}_1\hat{\mu}_4$ is a cluster map. Thus all of the tau functions $\sigma_n,\tau_n$ generated by iteration of $\psi$ are cluster variables, which implies that they are Laurent polynomials in the initial data. Hence the lift $\psi$ possesses the Laurent property, as required. 
\end{exmp}

\begin{exmp}[Laurentification of $\tilde{\varphi}_{A_{4}}$]\label{LaurentA4} 
Paraphrasing the calculations in \cite{hone2021deformations}, the empirical $p$-adic method described in Remark \ref{padic} can be carried out, in order to detect  
four types of singularity patterns 
for the map $\tilde{\varphi}_{A_{4}}=\tilde{\mu}_4\tilde{\mu}_3\tilde{\mu}_2\tilde{\mu}_1$ defined by \eqref{muA4}, as follows: 
\begin{align*}
    (1): \dots &\to \qty(\epsilon,R,R,R) \to \qty(\epsilon^{-1},\epsilon^{-1},\epsilon^{-1},\epsilon^{-1}) \to \qty(R,R,R,\epsilon) \to \dots \\[1em] 
   (2) : \dots &\to \qty(R,R,R,\epsilon) \to \qty(R,R,R,\epsilon^{-1}) \to \qty(R,R,\epsilon^{-1},R) \to \qty(R,\epsilon^{-1},R,R)  \\
     &\to \qty(\epsilon^{-1},R,R,R) \to \qty(\epsilon,R,R,R) \to \dots \\[1em] 
   (3): \dots & \to (R,\epsilon,R,R) \to \dots \\[1em] 
   (4): \dots & \to (R,R,\epsilon, R) \to \dots 
\end{align*}
where the limit $\epsilon\to 0$ in the patterns (3) and (4) corresponds to vanishing modulo primes which can be seen only in the components $x_{2,n}$ and $x_{3,n}$ of the $n$th iterate, respectively. 
Then we define a rational map $\pi$ such that  
\begin{align*}
    \pi: \, \tilde{\vb{x}} = (q_{0},\tau_{-1},\tau_{0},\tau_{1}, \sigma_{0},\sigma_{1},\sigma_{2},\sigma_{3},\sigma_{4},\sigma_{5},p_{0}) \mapsto \vb{x} = (x_{1},x_{2},x_{3},x_{4}), 
\end{align*}
which is defined by the case $n=0$ of the following dependent variable transformation: 
\begin{equation}\label{vartransA4}
    \begin{split}
        x_{1,n} = \frac{\sigma_{n}\tau_{n+1}}{\sigma_{n+1}\tau_{n}} \quad & x_{2,n} = \frac{p_{n}}{\sigma_{n+2}\tau_{n}} \quad  x_{3,n} =\frac{q_{n}}{\sigma_{n+3}\tau_{n}} \quad x_{4,n} =\frac{\sigma_{n+5}\tau_{n-1}}{\sigma_{n+4}\tau_{n}}
    \end{split}
\end{equation}
where tau functions $\sigma_{n}$, $\tau_{n}$, $p_{n}$, $q_{n}$ correspond to the appearance of a zero/pole in the singularity patterns $(1), (2), (3), (4)$,  respectively. With the addition of the 
frozen variables $a_{1}$ and $a_{4}$ to $\tilde{\vb{x}}$, we define a new initial seed $(\hat{\vb{x}},\hat{B}_{A_{4}})$, containing the extended cluster 
\begin{align*}
    \hat{\vb{x}} = (q_{0},\tau_{-1},\tau_{0},\tau_{1}, \sigma_{0},\sigma_{1},\sigma_{2},\sigma_{3},\sigma_{4},\sigma_{5},p_{0},a_{1},a_{4})
\end{align*}
and the 
exchange matrix $\hat{B}_{A_{4}}$ 
corresponding to the quiver depicted  in Figure \ref{fig:ladderA4}. 
  \begin{figure}[h]
\centering
\resizebox{0.6 \textwidth}{!}{%
 \begin{tikzpicture}[every circle node/.style={draw,scale=0.6,thick},node distance=15mm]

  \node [draw,circle,fill=blue!50,"$12$"] (12) at (0,0) {};
  
     \node [draw,circle,fill=red!50,"$6$"] (6) [right= of 12] {};
      \node [draw,circle,fill=red!50,"$7$"] (7) [right=of 6] {};
      \node [draw,circle,fill=red!50,"$8$"] (8) [right=of 7] {};
      \node [draw,circle,fill=red!50,"$9$"] (9) [right=of 8] {};
      \node [draw,circle,fill=red!50,"$10$"right] (10) [below right=of 9] {};
      \node [draw,circle,fill=red!50,"$5$" left] (5) [below left=of 6] {};

      \node [draw,circle,fill=blue!50,"$13$"] (13) [right=of 9] {};

       \node [draw,circle,fill=red!50,"$4$"below] (4) [below right=of 5] {};
       \node [draw,circle,fill=red!50,"$11$"below] (11) [right=of 4] {};
       \node [draw,circle,fill=red!50,"$1$"below] (1) [right=of 11] {};
       \node [draw,circle,fill=red!50,"$2$"below] (2) [right=of 1] {};
       
        \node [draw,circle,fill=red!50,"$3$" below] (3) at (4.4,-3.5) {};
       

  \begin{scope}[>=Latex]
  
  \draw[-> , thick]  (1) edge (2); 
  \draw[-> , thick]  (1) edge (7); 
  \draw[-> , thick]  (9) edge (1);
  \draw[-> , thick]  (11) edge (1);
   \draw[-> , thick]  (1) edge (10);

 \draw[-> , thick]  (2) edge (3);
 \draw[-> , thick]  (2) edge (8);
 \draw[-> , thick]  (2) edge (13);

 \draw[-> , thick]  (4) edge (11);
  \draw[-> , thick]  (12) edge (4);
   \draw[-> , thick]  (7) edge (4);
    \draw[-> , thick]  (3) edge (4);
    
 \draw[-> , thick]  (5) edge (12);
  \draw[-> , thick]  (5) edge (11);
   \draw[-> , thick]  (3) edge (5);
    \draw[-> , thick]  (7) edge (5);
   
   \draw[-> , thick]  (6) edge (12);
    \draw[-> , thick]  (6) edge (7);
     \draw[-> , thick]  (11) edge (6);
      \draw[-> , thick]  (6) edge[bend left= 15] (3);
      
    \draw[-> , thick]  (7) edge (8);
    
     \draw[-> , thick]  (8) edge (11);
      \draw[-> , thick]  (8) edge (9);
       \draw[-> , thick]  (10) edge (8);
       
        \draw[-> , thick]  (13) edge (9);
         \draw[-> , thick]  (3) edge[bend left=15] (9);
         
          \draw[-> , thick]  (13) edge (10);
           \draw[-> , thick]  (10) edge (3);
           
            \draw[-> , thick]  (3) edge (13);
            
             \draw[-> , thick]  (12) edge (3);
    
    \end{scope}
\end{tikzpicture}
}
\caption{Type $A_{4}$ deformed quiver  }
\label{fig:ladderA4}
\end{figure}
Then, with a suitable compostion of mutations 
$\hat{\mu}_j$ applied to this seed, the deformed map $\tilde{\varphi}_{A_{4}}$ is Laurentified by lifting it to the cluster map 
\begin{align*}
    \psi_{A_{4}}  =  \hat{\rho}_{A_{4}}^{-1} \hat{\mu}_{2}\hat{\mu}_{1}
    \hat{\mu}_{11}\hat{\mu}_{5}, \qquad \text{with} \quad \hat{\rho}_{A_{4}} =(2,3,4,5,6,7,8,9,10) ,
\end{align*}
%
such that $\pi\circ  \psi_{A_{4}}=\tilde{\varphi}_{A_{4}}\circ\pi$, 
whose iterates generate cluster variables obtained from the following bilinear relations:
\begin{equation}
    \begin{aligned}
 \tau_{n+2}\sigma_{n} & = \sigma_{n+2}\tau_{n} + a_{1}p_{n} \\
    p_{n+1}p_{n} & = \sigma_{n+3}\sigma_{n+2}\tau_{n}\tau_{n+1} + q_{n} \sigma_{n+1}\tau_{n+2} \\ 
    q_{n+1}q_{n} & = \sigma_{n+4}\sigma_{n+3}\tau_{n}\tau_{n+1} + p_{n+1}\sigma_{n+5}\tau_{n-1}\\
    \sigma_{n+6}\tau_{n-1} & = \sigma_{n+4}\tau_{n+1} + a_{1}q_{n+1}        
    \end{aligned}
\end{equation}
\end{exmp}
%


\section{Liouville integrability of periodic maps of type $A_{2N}$}
\label{s:type-A2N}
\setcounter{equation}{0}

In this section, we begin by introducing the properties of the periodic cluster map of type 
$A_{2N}$, and explain the connection with Coxeter's frieze patterns \cite{MGfriezes}. By 
considering the map from the Poisson point of view, we are able to adapt the argument from  \cite{Fordy_2011} (which concerned affine type algebras) in order to prove that this cluster map 
is Liouville integrable.


By taking an orientation of the  Dynkin diagram of type $A_{2N}$ so that all arrows point in the same direction, a quiver is obtained whose 
associated exchange matrix 
is 
\begin{equation}\label{exchA2N}
    B= \mqty(0 & 1 & 0 & 0 & 0  & \cdots & 0 \\ -1 & 0 & 1 & 0 & 0 & \cdots & 0 \\0 & -1 & 0 & 1 & 0 & \cdots & 0  \\ \vdots & \ddots & \ddots & \ddots & \ddots & \ddots &\vdots \\ 0 & \cdots & 0  & -1 & 0 & 1 & 0 \\ 0 & \cdots &  0 & 0 & -1 & 0 & 1 \\  0 & \cdots &  0 & 0 & 0 & -1& 0 \\ ) .
\end{equation}
This exchange matrix is mutation periodic with period $2N$ under the particular sequence of mutations $\mu_{2N}\mu_{2N-1}\cdots \mu_{2}\mu_{1}$. To begin with, 
our main object of interest is the cluster map 
\begin{equation}\label{muprod}
\varphi_{A_{2N}}= \mu_{2N}\mu_{2N-1}\cdots \mu_{2}\mu_{1}, 
\end{equation} 
which leaves the quiver invariant, and hence also the exchange matrix, so that 
\begin{equation} 
\label{Binv} 
\varphi_{A_{2N}}(B) = B, \end{equation} 
but has a nontrivial action on the cluster 
$\vb{x} = (x_j)$. For convenience, we drop the suffix, so in the rest of this subsection we set 
$\varphi_{A_{2N}}=\varphi$, and we have the 
birational map 
\begin{equation} \label{phiA2N} 
\varphi:\qquad \vb{x} \mapsto \vb{x}'
\end{equation}
defined by the following exchange relations: 
\begin{equation}\label{mueq1}
\begin{aligned}
    x'_{1}x_{1} &= 1 + x_{2},\\ 
    x'_{2}x_{2} &= 1 + x'_{1}x_{3},\\ 
    x'_{3}x_{3} &= 1 + x'_{2}x_{4},\\ 
    & \vdots\\
    x'_{2N-1}x_{2N-1} &= 1 + x'_{2N-2}x_{2N},\\ 
    x'_{2N}x_{2N} &= 1 + x'_{2N-1} .\\
    \end{aligned}
\end{equation}
%


It is well known that the map (\ref{phiA2N}) 
exhibits  
Zamolodchikov periodicity. In the case of 
Y-systems associated with a product  of Dynkin diagrams of finite type, the period is given by the sum of the (dual) Coxeter numbers, which was originally proved in \cite{volkov} for the case of $A_r \times A_s$. For the case at hand, we have $r=2N$, $s=1$, so the Coxeter numbers are 
$h=2N+1$, $h=2$, respectively, giving the period $p=h+h'=2N+3$. However, rather than appealing to the connection with Y-systems in this more general setting, there is a more direct way to 
obtain the periodicity of the map $\varphi$, based on Coxeter's frieze patterns (see \cite{MGfriezes} for a review). 

The sequence of exchange relations 
\eqref{mueq1} defines the first part of a 
frieze of type $A$, of width $2N$ and period $2N+3$.  The shape of the frieze pattern is 
$$ \begin{array}{cccccccccccc}
0 & & 0 & & 0 & & 0 & & 0 & & 0 & \cdots \\ 
& 1 &  & 1 & & 1 & & 1 & & 1 & \cdots & \\
& &  x_1 & & x_1' & & x_1'' & & x_1''' & \cdots && \\
& & &  x_2 & & x_2' & & x_2'' & \ddots & & &\\
& & & &  x_3 & & \ddots && \ddots &&& \\ 
&&&&& \ddots && \ddots &&&& \\
&&&&&& x_{2N} && x_{2N}' && \cdots & \\ 
&&&& \cdots & 1 && 1 && 1 && \cdots \\
&&&&& \cdots & 0 && 0 && 0 & \cdots 
\end{array}
$$
Note that this is an $SL_2$ frieze, so that 
we have the determinant condition 
$ad-bc=1$ 
on each $2\times 2$ diamond 
$\begin{array}{ccc} & b & \\ 
a & & d \\ 
& c& \end{array} $ 
in the pattern.

The $SL_2$ relation defines the action of the map \eqref{phiA2N} on each diagonal in the frieze, sending $\vb{x}\mapsto\vb{x}'$ from the first to the second diagonal shown on the left above, then  $\vb{x}'\mapsto\vb{x}''$ mapping the second diagonal to the third, and so on. 
The manifestation of Zamolodchikov periodicity here is the fact that these diagonals repeat every $2N+3$ steps \cite{cc1, cc2}. Hence, the map \eqref{phiA2N} defined by \eqref{mueq1} is periodic with period $2N+3$, i.e.\ $\varphi^{2N+3} =\mathrm{id}$. As a particular numerical example, here we present the non-zero part of the $A_4$ frieze ($N=2$) obtained by fixing the  
values of the initial cluster variables as 
$\vb{x} = (x_1,x_2,x_3,x_4) = (1,1,1,1)$: 
$$ \begin{array}{cccccccccccccccccccc}
\cdots & 1 &  & 1 & & 1 & & 1 & & 1 && 1 && 1 && 1 && 1&\cdots \\
&& 5 && 1 && 2 && 2 && 2 && 2 && 1 && 5 && \\
&&& 4 && 1 && 3 && 3 && 3 && 1 && 4 &&& \\ 
&&&& 3 && 1 && 4 && 4 && 1 && 3 &&&& \\ 
&&&&& 2 && 1 && 5 && 1 && 2 &&&&& \\ 
&&&&&\cdots & 1 && 1 && 1 && 1 &\cdots&&&&& 
\end{array}
$$
By continuing the above frieze to the left or right using the $SL_2$ rule, it is easily verified that the pattern repeats with period 7. 


For what follows, we need to consider the sequence consisting of the cluster variables which appear on the first nontrivial row of the frieze 
(immediately below the top row of 1s). These are  generated by the action of 
$\varphi$ on $x_1$. So we set 
$L_j = (\varphi^{*})^{j}x_{1}$, and find  
\begin{equation}\label{Lsfunctions}
\begin{split}
    &L_{0} = x_{1}, \quad L_{1} = \frac{1+x_2}{x_1}, \quad L_{2} = \frac{x_{1} + x_{3}}{x_{2}}, \ \cdots, \ L_{2N-1} = \frac{x_{2N-2} + x_{2N}}{x_{2N-1}}, \quad L_{2N}= \frac{1 + x_{2N-1}}{x_{2N}}, \\ &L_{2N+1}= x_{2N},\\
    & L_{2N+2} = \frac{\prod^{2N-1}_{i=1}x_{i} + \prod^{2N-2}_{i=1}x_{i} + \qty(\qty(\prod^{2N-3}_{i=1}x_{i})+ \qty(\prod^{2N-4}_{i=1}x_{i} + \cdots \qty(1+x_2)x_{3})\cdots)x_{2N-1})x_{2N}}{\prod^{2N}_{i=1}x_{i}},  \\
\end{split}
\end{equation}
while from one more application of $\varphi$ to  $L_{2N+2}=x_{2N}'$ we have 
$$ 
L_{2N+3} = (\varphi^*)^{2N+3} x_1 = x_1 = L_0, 
$$
by Zamolodchikov periodicity. 
The fact that $L_{2N+1}=x_{2N}$ implies that these same periodic quantities $L_j$ appear in sequence along the last nontrivial row of the frieze pattern (above the bottom row of 1s). 

The formulae \eqref{Lsfunctions} define a rational map 
\begin{equation}\label{chimap} 
\begin{array}{lrcl}
\chi: & \mathbb{C}^{2N} & \longrightarrow & \mathbb{C}^{2N+3} \\ 
    & \vb{x} & \longmapsto & \vb{L} = (L_0,L_1,\ldots,L_{2N+2}) . 
\end{array} 
\end{equation}
To consider the image of the map $\chi$, it is convenient to write the formulae defining the $L_j$ as a system 
of linear equations, namely 
\begin{equation}\label{Llinsys}
\begin{array}{rcl} 
x_1 - L_0 & = & 0, \\ 
x_2 -L_1 x_1 +1 & = & 0 \\ 
& \cdots & \\ 
x_{2N} - L_{2N-1}x_{2N-1} +x_{2N-2} & = & 0, \\ 
-L_{2N}x_{2N} +x_{2N-1}+1 & = & 0, \\ 
x_{2N} - L_{2N+1} & = & 0  
\end{array} 
\end{equation}
(where we have omitted an equation for $L_{2N+2}$). Regarding the first $2N$ of these equations as an inhomogeneous linear system for the $x_j$, we can easily find the solution by Cramer's rule, expressing the initial cluster variables in terms of the first $2N$ of the $L_j$, as 
\begin{equation}\label{xfromL} 
x_j = (-1)^j \left| 
\begin{array}{ccccc}
-L_0 & 1 & && \\ 
1 & -L_1 & 1 && \\ 
 & \ddots & \ddots & \ddots & \\ 
 && 1 & -L_{j-2} & 1 \\ 
 &&& 1 & -L_{j-1}
\end{array}
\right| , 
\end{equation}
where all omitted matrix entries are zero. 
(This is equivalent to Coxeter and Conway's 
formula in \cite{cc2}, up to changing signs along the diagonal and replacing $L_j\to a_j$.) 

We can further adjoin the next equation in the list \eqref{Llinsys}, and consider it now as being a homogeneous linear system, extending $\vb{x}$ to a vector in $\mathbb{C}^{2N+1}$ by adding an initial 1 at the top (corresponding to the top row of 1s in the frieze). Thus we obtain the polynomial relation 
$E_0 (\bf L)= 0$, where for each $j\in\mathbb{Z}$ we can define the $(2N+1)\times(2N+1)$ determinant 
\begin{equation}\label{Ejdet}
    E_j := 
    \left| 
\begin{array}{ccccc}
-L_j & 1 & && \\ 
1 & -L_{j+1} & 1 && \\ 
 & \ddots & \ddots & \ddots & \\ 
 && 1 & -L_{j+2N-1} & 1 \\ 
1 &&& 1 & -L_{j+2N}
\end{array}
\right| . 
\end{equation}
Instead, if we omit the first equation in the list \eqref{Llinsys}, and consider only the final $2N+1$ linear relations, then we have another 
homogeneous equation for the vector $\vb{x}$  extended with a 1 at the end, which gives 
$E_1(\vb{L}) =0$. Furthermore, by the action of $\varphi$, we have $\varphi^* E_j = E_{j+1}$,  
and hence we find the $2N+3$ polynomial relations 
\begin{equation}\label{affineq}
    E_j (\vb{L}) =0, \qquad j=0,\ldots, 2N+2. 
\end{equation}
However, the first three of these relations define $L_{2N}$, $L_{2N+1}$ and $L_{2N+2}$ 
in terms of $2N$ arbitrary values of $L_0,\ldots,L_{2N-1}$; so there are only three independent relations, which define the image of the map $\chi$ as an affine variety in $\mathbb{C}^{2N+3}$. 

The linear relations \eqref{Llinsys} are also naturally extended to give the solutions of a
 second order linear difference equation with periodic coefficients, satisfied by infinite sequences $({v}_j)_{j\in\mathbb{Z}}$, namely  
\begin{equation}\label{sturml}
    v_{j+1} - L_j \,v_j + v_{j-1} = 0,  
\end{equation}
which can be regarded as a discrete Sturm--Liouville equation \cite{henry}. Indeed, extending the initial cluster $\vb{x}$ along a diagonal of the frieze by the 1s and 0s at either end, we see that a consistent solution of this system is obtained by setting 
\begin{equation}
\label{sol1}
v_{-1}=0, \qquad v_0=1, \qquad  v_j = x_j \quad \mathrm{for}\quad j=1,\ldots, 2N, \qquad v_{2N+1}=1, 
\end{equation}
with the rest of the sequence being determined by the antiperiodic boundary condition 
\begin{equation}\label{antip} 
v_{j+2N+3} = - v_j, \qquad j\in\mathbb{Z}. 
\end{equation}
Moreover, from the action of $\varphi$ we see that $$\varphi^*v_{j+1}-L_{j+1}\, \varphi^* v_j +
\varphi^*v_{j-1}=0,$$ 
which provides another, linearly independent solution of \eqref{sturml} by setting 
$v^\dagger_j=\varphi^* v_{j-1}$, so that 
\begin{equation}
\label{sol2}
v^\dagger_{0}=0, \qquad v^\dagger_1=1, \qquad  v^\dagger_{j+1} = x_j' \quad \mathrm{for}\quad j=1,\ldots, 2N, \qquad v^\dagger_{2N+2}=1, 
\end{equation}
which extends to a sequence $(v_j^\dagger)_{j\in\mathbb{Z}}$ satisfying the same antiperiodic boundary condition.  

In order to obtain a polynomial identity involving all $2N+3$ of the $L_j$,  we take the solution \eqref{sol1} and see that the equation \eqref{sturml} becomes a trivial relation when $j=2N+2$. Adjoining this extra relation to the list \eqref{Llinsys} produces a linear system of size $2N+3$, which can be interpreted as a homogeneous equation for the vector 
$ \vb{v}_0 = (v_0,v_1,\ldots, v_{2N+2})^T
=(1,x_1,x_2,\ldots,x_{2N}, 1,0)^T$, that is 
${\bf M}_0 \, \vb{v}_0 =\mathbf{0}$, with the 
matrix 
\begin{equation}\label{M0}
 {\bf M}_0 = \left(
 \begin{array}{ccccc}
-L_0 & 1 & && -1 \\ 
1 & -L_{1} & 1 && \\ 
 & \ddots & \ddots & \ddots & \\ 
 && 1 & -L_{2N+1} & 1 \\ 
-1 &&& 1 & -L_{2N+2}
\end{array}
 \right) .
\end{equation}
The desired polynomial relation, which we will make use of shortly, is the vanishing of the determinant of this matrix, that is 
\begin{equation}\label{Mdet}
\mathrm{det}\,{\bf M}_0 =0.     
\end{equation}
In fact, this matrix (generically) has a two-dimensional kernel, since 
${\bf M}_0\vb{v}^\dagger_{0}=\mathbf{0} $, 
where 
$ \vb{v}_0^\dagger = (v_0^\dagger,v_1^\dagger,\ldots, v_{2N+2}^\dagger)^T
=(0, 1,x_1',x_2',\ldots,x_{2N}',1)^T$.

We now wish to consider the map \eqref{chimap} from a Poisson point of view, which will lead to a proof of Liouville integrability of the cluster map $\varphi$.  Since $\varphi=\varphi_{A_{2N}}$ is given by the composition of mutations 
\eqref{muprod}, 
and the 
symplectic form (\ref{sympform}) constructed from the exchange matrix \eqref{exchA2N} 
is covariant under the action of mutations, it follows from the invariance property \eqref{Binv} 
that 
$$ 
\varphi^* \om =\om, 
$$
or in other words the cluster map $\varphi$ is a symplectic map. Equivalently, $\varphi$ preserves the log-canonical Poisson  bracket 
\begin{equation}\label{poissA2N}
    \qty{x_{i},x_{j}} = P_{ij} \, x_{i}x_{j}, \qquad P_{ij} = (B^{-1})_{ij}, 
\end{equation} 
which is specified by 
$$ 
P_{2k,2j+1}=-P_{2j+1,2k} = 1 \qquad \mathrm{for} 
\quad k=j+1,j+2,\ldots, 2N, 
$$
with all other matrix entries being zero.

%
 %

Using \eqref{poissA2N}, we can compute the brackets between the functions $L_j$ defined by \eqref{Lsfunctions}. 
To begin with, we have 
$$\qty{L_0, L_1 } = 
\qty{x_1, \frac{x_2+1}{x_1} } = \frac{1}{x_1} \, \qty{x_1,x_2} =-x_2, 
$$
which we can rewrite in terms of $L_0$ and $L_1$, and the calculation of the other brackets 
$\qty{L_0,L_j}$ can be performed similarly, to yield the relations 
\begin{equation}\label{rel1}
\begin{aligned}
    \qty{L_{0},L_{1}} &= - L_{0}L_{1} +1 &&\\
    \qty{L_{0},L_{2j}} &=  L_{0}L_{2j} && \text{for} \ 1\leq j \leq N \\
    \qty{L_{0},L_{2j+1}} &= - L_{0}L_{2j+1} && \text{for} \ 1\leq j \leq N .  
    \end{aligned}
\end{equation}
To find further relations, we can use the fact that $\varphi$ is a Poisson map, and hence 
$$ 
\varphi^* \qty{ L_j,L_k } = \qty{\varphi^* L_j , \varphi^* L_k} = \qty{L_{j+1},L_{k+1}}, 
$$
so proceeding by induction it is straightforward 
to show that the brackets between these quantities are completely specified by 
\begin{equation}\label{Lalg}
\qty{L_j,L_k} 
=(-1)^{j+k}L_jL_k + 
\delta_{k-j-1,0}-\delta_{k-j-2N-2,0} \qquad 
\mathrm{for}\quad 0\leq j<k\leq 2N+2. 
\end{equation}

The formulae \eqref{Lalg} can be regarded as defining a Poisson bracket $\{\ ,\ \}_{\vb{L}} $ on $\mathbb{C}^{2N+3}$, which has the form of a sum  
$\{\ ,\ \}_{\vb{L}} = \{\ ,\ \}_2 + \{\ ,\ \}_0 $ 
of two pieces of homogeneous degrees 2 and 0, respectively. The brackets $\{\ ,\ \}_2$ and 
$\{\ ,\ \}_0$ are compatible: they 
form a bi-Hamiltonian pair,  in the sense that 
each piece separately satisfies the Jacobi identity, as does their sum (or any linear combination). Equivalently, we can write the Poisson bivector defined by the right-hand side  
of \eqref{Lalg} as a sum 
$$ 
\vb{P}^{(\vb{L})}=\vb{P}^{(2)} + \vb{P}^{(0)}, 
$$
where each of the homogeneous parts 
$\vb{P}^{(2)},\vb{P}^{(0)}$ is itself a Poisson 
bivector. 

The preceding observations 
show that the rational map (\ref{chimap}) is also a map of Poisson varieties, which satisfies the 
requirement to be a Poisson morphism, that is 
\begin{equation}\label{pmorphism}
\qty{\chi^* f,\chi^* g } = \chi^* \qty{ f, g}_{\vb{L}}   
\end{equation}
for any pair of regular functions $f,g$ on 
$\mathbb{C}^{2N+3}$. 
However, this is not a regular morphism, because in $\mathbb{C}^{2N}$ the expressions \eqref{Lsfunctions} become singular on the hyperplanes $x_j=0$. Restricting $\chi$ to the complement of these hyperplanes, we have the following result. 

\begin{lm}\label{PoissLA2N} 
Let $\{\ ,\ \}_{\vb{L}}$ be the Poisson bracket 
\begin{equation}\label{pois1}
    \qty{L_{j},L_{k}}_{\vb{L}} 
    =\qty{L_{j},L_{k}}_{2}+\qty{L_{j},L_{k}}_{0} 
\end{equation}
on $\mathbb{C}^{2N+3}$ given 
by the right-hand side  
of \eqref{Lalg}. 
Then (the restriction of) the map $\chi$ defined 
by  
the expressions \eqref{Lsfunctions} is a Poisson 
morphism from 
$\big( (\mathbb{C}^*)^{2N}, \{\ ,\ \}\big) $ 
to 
$\big( \mathbb{C}^{2N+3}, \{\ ,\ \}_{\vb{L}}\big) $.
\end{lm}

We now recall some arguments from 
\cite{Fordy_2011}, further developed in \cite{2013}, where  (up to rescaling) the same Poisson brackets \eqref{Lalg} 
were found to arise for 
periodic quantities 
appearing as coefficients in linear relations for cluster algebras of affine type 
$A^{(1)}_{2N+3}$.  
Moreover, this is a 
particular instance of the  Poisson algebra 
appearing in the (odd) dressing chain for 
Schr\"odinger operators, a well known integrable 
system \cite{Veselov1993DressingCA}. The 
bi-Hamiltonian nature of this algebra means that 
the standard machinery of the Lenard--Magri scheme can be applied \cite{Magri1978ASM, Olver1993},  leading to the construction of a sequence of 
Poisson-commuting first integrals ${\cal I}_j$, 
satisfying the recursive chain of relations 
\begin{equation} \label{MLR}
    \vb{P}^{(2)}\iprod \dd {\cal I}_{k+1} = \vb{P}^{(0)}\iprod \dd {\cal I}_{k}.
\end{equation} 

In the case at hand, the Lenard--Magri chain begins with the Casimir of $\Pbkt{\ }{\ }_2$, that is 
$$
{\cal I}_0 = \prod_{j=0}^{2N+2}L_j, 
$$
and ends with  the Casimir of $\{\ ,\ \}_0$, namely 
$$ 
{\cal I}_{N+1} = \sum_{j=0}^{2N+2}L_j. 
$$
The combined bracket $\Pbkt{\ }{\ }_{\vb{L}}$ 
on $\mathbb{C}^{2N+3}$ is also degenerate, of rank $2N+2$, with a single polynomial Casimir function given (up to scale) by 
\begin{equation}\label{casK} 
{\cal K} = \sum_{k=0}^{N+1} (-1)^k\,{\cal I}_k,  
\end{equation}
where the sum breaks up into  
components with different odd homogeneous degrees:
$$\deg {\cal I}_k=2N-2k+3.$$
The recursion 
\eqref{MLR} then results by extracting the relations of different homogeneity from the identity 
$$ \vb{P}^{(\vb{L})} \iprod \dd {\cal K}=0,$$ 
with the first and last relations being 
$$ 
\vb{P}^{(\vb{2})} \iprod \dd {\cal I}_0=0,
\qquad 
\vb{P}^{(\vb{0})} \iprod \dd {\cal I}_{N+1}=0, 
$$
corresponding to the Casimirs 
of $\{\ ,\ \}_2$ and $\{\ ,\ \}_0$, respectively. 
Using this recursion, a standard argument then shows that all of the quantities ${\cal I}_k$ Poisson commute with one another with respect to both  $\{\ ,\ \}_2$ and $\{\ ,\ \}_0$, and hence also 
with respect to their sum. 

It is convenient to paraphrase the main observations from \cite{Fordy_2011, 2013} about the bracket $\{\ ,\ \}_{\vb{L}}$, including an explicit expression for ${\cal K}$, as follows.  
\begin{thm}\label{thm1.1}
The Poisson bracket $\{,\}_{\vb{L}}$ on 
$\mathbb{C}^{2N+3}$ defined by the right-hand side of \eqref{Lalg} has rank $2N+2$, with the Casimir \eqref{casK} given explicitly by 
\begin{equation}\label{expK}
{\cal K} = 
\prod_{i=0}^{2N+2}\left(1-\frac{\partial^2}{\partial L_i\partial L_{i+1}}\right)\,\prod_{j=0}^{2N+2}L_j
\end{equation}
(with indices read mod $2N+3$). The $N+2$ homogeneous components of ${\cal K}$ are in involution with one another, that is 
$$ 
\{ {\cal I}_j,{\cal I}_k\}_{\vb{L}} =0 
$$
for all $j,k$.
\end{thm}
\begin{rem}
The formula for ${\cal K}$ in \eqref{expK} is the special case $\lambda=-1$ of a more general formula for the 
polynomial 
$$ 
{\cal P} (\lambda) = \tr \mathbf{L}(\lambda) = 
\sum_{k=0}^{N+1} \lambda^k\,{\cal I}_k, 
$$
where (up to conjugation) ${\mathbf L}$ is given by the matrix product 
\begin{equation}\label{mdy}
\mathbf{L}(\lambda) = 
\left(\begin{array}{cc} L_0 & 1 \\ 
\lambda & 0 \end{array}\right) \left(\begin{array}{cc} L_1 & 1 \\ 
\lambda & 0 \end{array}\right) \cdots \left(\begin{array}{cc} L_{2N+2} & 1 \\ 
\lambda & 0 \end{array}\right). 
\end{equation} 
For a proof see Lemma 4.3 in 
\cite{honeward}.  
\end{rem}
\begin{cor}\label{chiI}
Under the map \eqref{chimap}, the pullbacks of the coefficients of the polynomial ${\cal P} (\lambda)$, which are the homogeneous components of $\cal K$, provide a set of first integrals in involution for the $A_{2N}$ cluster map $\varphi$, that is 
$$ 
\{ \chi^*{\cal I}_j,\chi^*{\cal I}_k\} =0 
$$
for all $j,k$.
\end{cor}
\begin{proof} 
By Zamolodchikov periodicity, the action of $\varphi$ on 
the quantities $L_k=(\varphi^*)^k x_1$ sends 
$L_j\to L_{j+1}$, with indices read modulo the period $p=2N+3$. Each of the coefficients of ${\cal P} (\lambda)$ is a   cyclically symmetric homogenous polynomial in the $L_k$,  hence from \eqref{Lsfunctions} the pullbacks $\chi^* {\cal I}_j$ are rational functions of the cluster variables $x_k$ that are first integrals for the map $\varphi$. 
Since $\chi$ is a Poisson morphism, the involutivity then follows immediately from Theorem \ref{thm1.1}.
\end{proof}

The Liouville integrability of each  $A_{2N}$ cluster map $\varphi$ would now appear to be obvious, given that we have a large set of first integrals in involution. However, 
there is a slight subtlety here, because the  Poisson bracket \eqref{poissA2N} %
on $\C^{2N}$ is nondegenerate, so does not admit more than $N$ independent functions in involution. Hence, after pulling back the $N+2$ polynomials ${\cal I}_k$, they can no longer be functionally independent, and there must be at least two relations between them; moreover, for Liouville integrability, exactly $N$ independent functions are required. There is one relation between the first integrals that takes the same form for all $N$, and follows from  the fact that $\cal K$ is a Casimir of the bracket $\{\ ,\ \}_{\bf L}$, while the bracket $\{\ ,\ \}$ has no non-trivial Casimir functions, so $\chi^*\cal K$ must be a constant. A second relation between these invariants has a form that depends explicitly on $N$. 

\begin{lm} \label{trdet}
The image of 
the map $\chi$ lies on the level set ${\cal K}=-2$, fixing the value of the Casimir of 
the bracket $\{\ ,\ \}_{\bf L}$. Equivalently, the pullbacks of the 
functions ${\cal I}_k$ satisfy the relation 
\begin{equation}\label{trm1}
\sum_{k=0}^{N+1} (-1)^k\,\chi^*{\cal I}_k = -2. 
\end{equation} 
In addition, another independent  linear combination of their pullbacks is constant, namely 
\begin{equation}\label{trm2}
\sum_{k=1}^{N+1} k(-1)^{k-1}\,\chi^*{\cal I}_k = 2N+3. 
\end{equation} 
\end{lm}
\begin{proof} 
The linear relation \eqref{sturml} can be rewritten in matrix form as 
$$ 
(v_{j+1}\,\, v_j ) = (v_j\,\, v_{j-1} )\left(\begin{array}{cc} L_j & 1 \\ 
-1 & 0 \end{array}\right).
$$
Starting with $j=0$ and combining $p=2N+3$ iterations of this relation, we obtain 
$$ 
(v_{2N+3}\,\,\, v_{2N+2} ) = (v_0\,\,\, v_{-1} )\, {\bf L}(-1), 
$$
where ${\bf L}(\lambda)$ is the matrix product in 
\eqref{mdy}. The fact that \eqref{sturml} has the particular solution \eqref{sol1} satisfying the antiperiodic boundary condition \eqref{antip} implies that 
$(1\,\, 0 )$ is a left eigenvector of ${\bf L}(-1)$ with eigenvalue $-1$, while the other solution \eqref{sol2} provides a second, independent eigenvector with the same eigenvalue. Hence 
\begin{equation}
\label{lm1}    
{\bf L}(-1) =  - \mathbf{I}  = \left(\begin{array}{cc} -1 & 0 \\ 
0 & -1  \end{array}\right), 
\end{equation}
from which it follows that ${\cal K}={\cal P}(-1)=\tr\mathbf{L}(-1)= -2$ when it is pulled back under $\chi$ to $(\C^*)^{2N}$, 
the phase space of the cluster map $\varphi$; this proves the relation \eqref{trm1}.  For the second relation,  note that the matrix product \eqref{mdy}  has the general structure 
$${\bf L}(\lambda) 
= \left(\begin{array}{cc} U_{N+1}(\lambda) & V_{N+1}(\lambda) \\ 
\lambda W_{N+1}(\lambda) &  \lambda X_{N}(\lambda) \end{array}\right),$$ 
where the subscripts on $U,V$ etc.\ denote the degrees of these polynomials, with $V_{N+1}$ and $W_{N+1}$ both being monic; this is easily shown by induction on $N$. Furthermore, due to \eqref{lm1}, the entries of this matrix can be rewritten in terms of some polynomials of lower degree, in the form  
$${\mathbf{L}}(\lambda) = - \mathbf{I} + (\lambda+1)\, 
\hat{\mathbf{L}}(\lambda)
= \left(\begin{array}{cc} (\lambda+1)\hat{U}_{N}(\lambda)-1 & (\lambda+1)\hat{V}_{N}(\lambda) \\ 
\lambda (\lambda+1)\hat{W}_{N}(\lambda) & (\lambda+1) \hat{X}_{N}(\lambda)-1 \end{array}\right),$$ 
with  $\hat{V}_{N}$ and $\hat{W}_{N}$ both being monic, 
and $\hat{X}_{N}(0)=1$. Upon taking the trace and determinant of the above, we have the two equations 
$$ 
\begin{array}{rcl} 
{\cal P}(\lambda) & = & -2 + (\lambda+1)
\big(\hat{U}_N(\lambda)+\hat{X}_N(\lambda)\big), \\
\det \mathbf{L}(\lambda) & = & (\lambda+1)^2\det \hat{\mathbf{L}}(\lambda) -(\lambda+1)\big(\hat{U}_N(\lambda)+\hat{X}_N(\lambda)\big) +1. 
\end{array} 
$$
On the other hand, by taking the determinant of the 
product in \eqref{mdy}, we have $\det \mathbf{L}(\lambda)=-\lambda^{2N+3}$, so combining this with the above two relations yields the formula 
\begin{equation}\label{Pform} 
{\cal P}(\lambda) = (\lambda+1)^2\det \hat{\mathbf{L}}(\lambda) +\lambda^{2N+3} -1. 
\end{equation}
Then differentiating this with respect to $\lambda$ and evaluating at $\lambda=-1$ gives 
${\cal P}'(-1) = 2N+3 $, which is equivalent to the required 
second relation for (the pullbacks of) the coefficients of the generating polynomial $\cal P$. 
\end{proof} 

\begin{rem}
An alternative way to derive the result of the preceding lemma is by considering the determinant 
$$ 
{\cal D}(\lambda) = \left|
 \begin{array}{rrrll}
-L_0 & 1 & && \lambda \\ 
-\lambda & -L_{1} & 1 && \\ 
 & \ddots & \ddots & \ddots & \\ 
 && -\lambda & -L_{2N+1} & 1 \\ 
-1 &&& -\lambda & -L_{2N+2}
\end{array}
 \right| .
$$
Upon expanding this in powers of $\lambda$, it is found to be 
$$ 
{\cal D}(\lambda) = \lambda^{2N+3} -1 - {\cal P}(\lambda). 
$$
Then the first relation ${\cal P}(-1)=-2$ just follows from 
${\cal D}(-1) =0 $, which coincides with 
\eqref{Mdet}, while the second relation 
${\cal P}'(-1) = 2N+3 $ is a consequence of 
${\cal D}'(-1) =0 $, which holds because the latter derivative can be expressed as a linear combination of minors of \eqref{M0} of size $2N+2$, which all vanish since the matrix $\mathbf{M}_0$ has a two-dimensional kernel. 
\end{rem}

Before proceeding to our conclusion about the integrability of 
the $A_{2N}$ cluster map $\varphi$, 
it is instructive to consider the two examples of lowest dimension, as we now do. 


\begin{exmp}
In the simplest case $N=1$, the $A_2$ cluster map is 
$$ 
\varphi_{A_2}: \qquad \left(\begin{array}{c} x_1 \\ x_2 \end{array}\right) 
\longmapsto 
 \left(\begin{array}{c} x_1' \\ x_2' \end{array}\right)  = 
\left(\begin{array}{c} \frac{x_2+1}{x_1} \\ \frac{x_1'+1}{x_2} \end{array} \right). 
$$
Every orbit has period $p=5$, and 
this is the square of the Lyness map, i.e.\ $\varphi_{A_2}=\varphi\cdot\varphi=\varphi^2$ with $\varphi$ given by  \eqref{lyness5}. Hence it preserves the same invariant $H$, given by the formula \eqref{hlyness}.  
From $L_j=(\varphi_{A_2}^*)^j x_1$ we find the five Laurent polynomials 
$$ 
L_0=x_1, \quad 
L_1= \frac{x_2+1}{x_1}, 
\quad 
L_2= \frac{x_1+1}{x_2}, 
\quad 
L_3 = x_2, 
\quad L_4 = 
 \frac{x_1+x_2+1}{x_1x_2}, 
$$
which in this case provide a complete set of generators for the $A_2$ cluster algebra. The first three of the equations  
\eqref{Ejdet} satisfied by the $L_j$ are 
\begin{equation}\label{A2var}
\begin{array}{rcl}
1+L_0+L_2-L_0L_1L_2 & = & 0, \\ 
1+L_1+L_3-L_1L_2L_3 & = & 0, \\ 
1+L_2+L_4-L_2L_3L_4 & = & 0,
\end{array}
\end{equation}
which define an affine variety in $\C^5$, and there are two more equations obtained from these via shifting indices $\bmod\, 5$, but these are not algebraically independent of \eqref{A2var}. This defines the image of the map $\chi$ in this case.  
The dressing chain first integrals have 
generating polynomial 
$$ 
\begin{array}{rcl}
\lambda^2 {\cal I}_2 +\lambda {\cal I}_1+{\cal I}_0 
& = &   ( L_0 + L_1 + L_2 +L_3 + L_4)\lambda^2  \\ && +(L_0L_1L_2 + 
L_1L_2L_3 +L_2L_3L_4 +L_3L_4L_0 +L_4L_0L_1
) \lambda \\ 
&& + L_0 L_1L_2L_3L_4, 
\end{array}
$$
and, upon pulling this back to the space of cluster variables, the coefficients satisfy the pair of constraints 
$$ 
\begin{array}{rcr}
L_0 + L_1 + L_2 +L_3 + L_4 - (L_0L_1L_2 + 
L_1L_2L_3 +L_2L_3L_4 +L_3L_4L_0 +L_4L_0L_1
) && \\ 
+   L_0L_1L_2L_3L_4 &=&  -2, \\
- 2(L_0 + L_1 + L_2 +L_3 + L_4) + L_0L_1L_2 + 
L_1L_2L_3 +L_2L_3L_4 +L_3L_4L_0 +L_4L_0L_1 & = & 5. 
\end{array}
$$
Under the pullback, the first integrals are given by 
$\chi^*{\cal I}_2 =H$, $\chi^*{\cal I}_1 =2H+5$, $\chi^*{\cal I}_0 =H+3$, and the generating polynomial can be written in terms of the single first integral \eqref{hlyness}, as 
$$ 
{\cal P}(\lambda) = (\lambda+1) \Big( H(\lambda+1)+5\Big) -2. 
$$
\end{exmp}

\begin{exmp} 
When $N=2$, 
the (undeformed) $A_4$ cluster map is given by the 
composition of the four mutations \eqref{muA4} with all the parameters set to 1 (i.e.\ with $a_1=a_4=b_1=b_4=1$). 
Every orbit has period $p=7$. 
In terms of the 7 Laurent polynomials 
$$
L_0=x_1, L_1 = \frac{x_2+1}{x_1}, L_2=\frac{x_1+x_3}{x_2}, L_3 = \frac{x_2+x_4}{x_3}, L_4 = \frac{x_3+1}{x_4}, L_5=x_4
$$
and 
$$ 
L_6 = \frac{x_1x_2x_3 + x_2x_3x_4 + x_1x_2 + x_1x_4 + x_3x_4}{x_1x_2x_3x_4}, 
$$
it has two independent first integrals $H_1,H_2$ given by \eqref{firstintA4}, which can be found  
explicitly in terms of the cluster variables $x_j$ by setting $a_1=a_4=1$ in the 
expressions \eqref{A4ints}. The
two quantities are involution, so $\{H_1,H_2\}=0$ with respect to the invariant log-canonical Poisson structure defined by 
\eqref{poissonbracketA4}
with \eqref{pmatrixA4}. 
The map $\chi$ from the cluster variables to the $L_j$ defines a four-dimensional affine variety in $\C^7$, defined by the vanishing of $E_0,E_1,E_2$ as in \eqref{Ejdet}, while the other 4 conditions $E_j=0$ for $3\leq j\leq 6$ are algebraic consequences of these. 
Moreover, on this variety, the coefficients of the generating polynomial $\lambda^3 {\cal I}_3+\lambda^2 {\cal I}_2 +\lambda {\cal I}_1+{\cal I}_0 $ satisfy the constraints 
$$
\begin{array}{rcr}
-{\cal I}_3 +{\cal I}_2 - {\cal I}_1+{\cal I}_0 & = & -2, \\
3{\cal I}_3-2{\cal I}_2 + {\cal I}_1 & = & 7. 
\end{array} 
$$
From \eqref{firstintA4}, we see that after pulling back to the cluster variables we may identify 
$\chi^* {\cal I}_3 = H_1$, $\chi^* {\cal I}_0 = H_2$, 
and the above constraints then give 
$\chi^* {\cal I}_1 = H_1+2H_2-3$, $\chi^* {\cal I}_2 = 2H_1+H_2-5$. 
Hence, under the pullback, the generating polynomial can be expressed in terms of the two independent first integrals 
$H_1,H_2$, as 
$$ 
{\cal P}(\lambda) = (\lambda+1) \Big( (\lambda+1)\big(H_1\lambda+H_2-5\big)+7\Big) -2.
$$
\end{exmp}

\begin{thm}\label{liouvilleA2N} 
The periodic $A_{2N}$ cluster map $\varphi=\varphi_{A_{2N}}$, as defined by 
\eqref{mueq1}, is  integrable in the 
Liouville sense, having $N$ Poisson-commuting first integrals that are obtained as independent coefficients in 
the equation for its invariant spectral curve, which is the hyperelliptic curve of genus $N$ defined by the singular affine model 
\begin{equation}\label{spec}
\mu^2 -{\cal P}(\lambda )\mu-\lambda^{2N+3}=0. 
\end{equation}
\end{thm}

\begin{proof}
The spectral curve is defined by the equation 
$$ 
\det \big(\mathbf{L}(\lambda ) -\mu \mathbf{I}\big) =0, 
$$
in terms of the matrix product \eqref{mdy} with the quantities $L_j =(\varphi^*)^jx_1$. Under the action of $\varphi$, this product transforms as 
$$
\varphi^* \mathbf{L}(\lambda ) = 
\left(\begin{array}{cc} L_0 & 1 \\ 
\lambda & 0 \end{array}\right)^{-1}\, \mathbf{L}(\lambda )\, \left(\begin{array}{cc} L_0 & 1 \\ 
\lambda & 0 \end{array}\right),  
$$
so the spectral curve is clearly invariant, 
and in terms of its trace and its determinant as found in the proof of Lemma \eqref{trdet}, it is given by the 
formula  \eqref{spec}. 
The non-trivial first integrals appearing in the formula are found from the $N+2$ coefficients of the polynomial ${\cal P}$, which (pulled back by the map $\chi$) are subject to the two linear relations \eqref{trm1} and  \eqref{trm2}. After eliminating $\chi^*{\cal I}_{N+1}$ and $\chi^*{\cal I}_{N}$, say, this means that all the coefficients can be written as linear combinations of the remaining $N$ first integrals 
$\chi^*{\cal I}_{j}$ for $0\leq j \leq N-1$, which Poisson commute by Corollary \ref{chiI}. These $N$ quantities are required to be functionally independent, meaning that 
$\chi^* (\rd {\cal I}_0\wedge \rd {\cal I}_1 \wedge \cdots \wedge \rd {\cal I}_{N-1})\neq 0$: for any fixed $N$ the non-vanishing of this form can be checked at any particular (generic) point in phase space, whence it holds on an open set.  (For instance, it is convenient to take the point 
$(x_1,x_2,\ldots,x_{2N}) =(1,1,\ldots,1)$, where the 
quantities \eqref{Lsfunctions} in the first non-trivial row of the frieze take the values $(L_0,L_1,\ldots,L_{2N+2}) =(1,2,\ldots,2,1,2N+1)$.) The affine curve defined by \eqref{spec} is singular: due to the conditions 
${\cal P}(-1)=-2$ and ${\cal P}'(-1)=2N+3$, the derivatives of this equation with respect to $\lambda$ and $\mu$ both vanish at the point $(-1,-1)$, where the curve has a node. To resolve the singularity, we introduce a new coordinate 
$\hat{\mu}$ and define a monic polynomial $\hat{\cal P}$ of degree $2N+1$ via 
$$
\mu = (\lambda+1)\hat{\mu} +\frac{1}{2}{\cal P}(\lambda), 
\qquad 
(\lambda+1)^2 \hat{\cal P}(\lambda) = 
\lambda^{2N+3}+\frac{1}{4}{\cal P}(\lambda)^2. 
$$
Then the original spectral curve \eqref{spec} is birationally
equivalent to 
$$ 
\hat{\mu}^2 = \hat{\cal P}(\lambda), 
$$
which (at generic points in phase space) defines a smooth hyperelliptic curve of genus $N$, as required. 
\end{proof} 


\begin{rem}
There is a lot more that can be said about the geometry of the periodic cluster map $\varphi_{A_{2N}}$. For instance, the generic level set of the first integrals is an affine component in the Jacobian of the spectral curve \eqref{spec}, on which the map acts as a translation by a torsion point of order $2N+3$. Moreover, the matrix product  \eqref{mdy} reveals the connection with the periodic continued fraction 
$$ 
L_0 + \cfrac{\lambda}{L_1 +\cfrac{\lambda}{\ddots\,  +\,  \cfrac{\lambda}{L_{2N+2}+\cfrac{\lambda}{L_0 +\cfrac{\lambda}{\ddots}}}}}. $$
The latter is a degenerate case of the periodic continued fractions, defined by functions on hyperelliptic curves, that were considered in \cite{grossetveselov}.  
\end{rem}

Having understood the Liouville integrability of the 
$\varphi_{A_{2N}}$ cluster maps, in the next section we seek integrable deformations of them. 

\section{Integrable deformations of $A_{2N}$ cluster maps}
\label{A2Ndef}
\setcounter{equation}{0} 

In order to obtain integrable deformations of the periodic 
cluster maps of type $A_{2N}$, 
we begin with a construction for the case of $A_{6}$, and consider how this generalizes the cases of $A_{2}$ and $A_{4}$ seen earlier.

In the previous examples of deformations, 
when we discussed 
the cluster maps  of types $A_2$ and $A_4$,  
we explained in each case how the singularity confinement patterns of these 
deformed integrable maps suggested   
a way to define a lift to a space of tau functions. 
The tau functions are cluster variables in an enlarged phase space where the deformed
map acts as a cluster map, and the 
Laurent property is restored. 
This begs the question as to 
whether this Laurentification procedure can be successfully applied to the deformed cluster maps associated with general type $A$.  We are able to answer this question positively for deformed cluster algebras of type $A_{2N}$, beginning with the case of $A_6$, and then obtaining the general case by a  method of local expansion applied successively to the quivers of the enlarged algebra.

\subsection{Integrable deformation of the $A_{6}$ cluster map}

\label{ss:type-A6}

In this subsection, we consider the deformation of the periodic cluster map of type $A_{6}$.  The exchange matrix 
is given by 
\begin{equation}\label{exchA6}
     B_{A_{6}} = \mqty(0 & 1 & 0 & 0 & 0 & 0 \\ -1 & 0 & 1 & 0 & 0 & 0 \\ 0 & -1 & 0 & 1 & 0 & 0 \\ 0 & 0 & -1 & 0 & 1 & 0 \\ 0 & 0 & 0 & -1 & 0 & 1 \\ 0 & 0 & 0 & 0 & -1 & 0)
\end{equation}
The corresponding matrix is invariant under a   sequence of 6 cluster mutations, that is  
\[    \mu_{6}\mu_{5}\mu_{4}\mu_{3}\mu_{2}\mu_{1}(B_{A_{6}}) = B_{A_{6}},  \]

Given the initial cluster $\vb{x} = (x_{1},x_{2},x_{3},x_{4},x_{5},x_{6})$, let us denote by $\varphi_{A_{6}}$ the composition of mutations above, i.e.\ $\varphi_{A_{6}} = \mu_{6}\mu_{5}\mu_{4}\mu_{3}\mu_{2}\mu_{1}$. 
By Zamolodchikov periodicity, every orbit of the map $\varphi_{A_{6}}$ has period 9. 
Once more, with $f_{k}(x)=b_{k}M_{k}^{-} + a_{k}M_{k}^{+}$ in \eqref{eq:6}, we define the modified mutations $\tilde{\mu}_{k}(x_{k}) = x^{-1}_{k}\qty(b_{k}M_{k}^{-} + a_{k}M_{k}^{+})$ for $k =1,\dots,6$, which yields deformed map $\tilde{\varphi}_{A_{6}}= \tilde{\mu}_{6}\tilde{\mu}_{5}\tilde{\mu}_{4}\tilde{\mu}_{3}\tilde{\mu}_{2}\tilde{\mu}_{1}$ equivalent to the following exchange relations, 
\begin{equation}\label{deformedmuA6}
\begin{aligned}
    \tilde{ \mu}_{1}: &&  x_1 x'_1 & = b_{1} +  a_{1}x_2 && \\
  \tilde{ \mu}_{i} : && x_i x'_i &=  1 +  x'_{i-1}x_{i+1} &&   (2\leq i \leq 4)\\
   \tilde{ \mu}_{5} : && x_{5} x'_{5} &= b_{5} + a_{5}x'_4x_{6} &&\\
    \tilde{ \mu}_{6} : && x_{6} x'_{6} &= b_{6} + a_{6}x'_5 &&
\end{aligned}
\end{equation}
where the parameters $b_{i}$ and $a_{i}$ for $i=2,3,4$ are rescaled to 1 via  $x_{i} \to \lambda_{i}x_{i}$ $\lambda_{i} \in \mathbb{C}^{*}$ for each cluster variable. 

For the original undeformed cluster map $\varphi_{A_{6}}$, with all coefficients $a_j,b_j$ above equal to 1, we know from 
the previous section that it is Liouville integrable, with three independent Poisson-commuting first integrals obtained as a subset of the homogeneous components of the function ${\cal K}$ in \eqref{casK} with $N=3$. 
In particular, we can choose 
\begin{equation}\label{firstintA61}
    H_{1}  = \sum_{j=0}^{8}L_{j}, \quad H_{2}  = \prod_{j=0}^{8}L_{j}, \quad H_{3}  = \sum_{j=0}^{8}L_{j}L_{j+1}(L_{j+2} + L_{j+4} + L_{j+6}) + \sum_{j=0}^{2}L_{j}L_{j+3}L_{j+6}
\end{equation}
where $L_{i} = (\varphi_{A_{6}}^{*})^{i}(x_{1})$. These $H_j$ for $j=1,2,3$ correspond to the quantities ${\cal I}_4,{\cal I}_2,{\cal I}_0$ appearing in ${\cal K}$, pulled back to the cluster variables. 
To obtain deformed first integrals $\tilde{H}_j$ 
for the map defined by \eqref{deformedmuA6}, we adopt the same approach as in Example~\ref{exmpA4}, attaching an arbitrary  coefficient to each Laurent monomial appearing in $H_j$. Then the invariance requirement 
$\tilde{\varphi}_{A_{6}}^{*}\tilde{H}_j = \tilde{H}_j$ for $j=1,2,3$ determines 
a system of conditions on these coefficients and on the parameters $a_j,b_j$, which can be solved 
by using a computer algebra system (we used Maple\texttrademark\ \cite{Maple}). 
We find that the system has a solution if and only if the parameters are fixed as $b_{1} = 1 = b_{5} $ and $b_{6}a_{5}^2=1$, in which case three independent first integrals are given as follows:
\begin{equation}
	\begin{aligned}
   \tilde{H}_{1} &= \frac{1}{a_{5}^3 a_{6} x_{1} x_{2}x_{3}x_{4}x_{5}x_{6}}\qty(\begin{aligned}a_{1}&a_{5}^2a_{6}x_{1}x_{2}x_{3}x_{4} + a_{1}x_{1}x_{2}x_{3}x_{4}x_{5} + a_{1}a_{5}^4a_{6}^2 x_{1}x_{2}x_{3}x_{4}x_{5} \\ &+ a_{1}a_{5}^2a_{6}x_{1}x_{2}x_{3}x_{4}x_{5}^2 + a_{1}a_{5}^3a_{6}x_{1}x_{2}x_{3}x_{6} + a_{1}a_{5}^3a_{6}x_{1}x_{2}x_{3}x_{4}^2x_{6} \\
    &+a_1 a_5^3 a_6 x_1 x_2 x_5 x_6 + a_1 a_5^3 a_6 x_1 x_2 x_3^2 x_5 x_6 + 
 a_1 a_5^3 a_6 x_1 x_4 x_5 x_6 \\
 &+ a_1 a_5^3 a_6 x_1 x_2^2 x_4 x_5 x_6 + 
 a_1 a_5^3 a_6 x_3 x_4 x_5 x_6 + a_1 a_5^3 a_6 x_1^2 x_3 x_4 x_5 x_6 \\
 &+ a_5^3 a_6 x_2 x_3 x_4 x_5 x_6 + a_1^2 a_5^3 a_6 x_2 x_3 x_4 x_5 x_6 + 
 a_5^3 a_6 x_1^2 x_2 x_3 x_4 x_5 x_6 \\
 &+ a_1 a_5^3 a_6 x_2^2 x_3 x_4 x_5 x_6 + 
 a_1 a_5^3 a_6 x_1 x_3^2 x_4 x_5 x_6 + a_1 a_5^3 a_6 x_1 x_2 x_4^2 x_5 x_6 \\
 &+ a_1 a_5^3 a_6 x_1 x_2 x_3 x_5^2 x_6 + a_1 a_5^4 a_6 x_1 x_2 x_3 x_4 x_6^2 + 
 a_1 a_5^2 x_1 x_2 x_3 x_4 x_5 x_6^2\end{aligned}) \\ & \\
    \tilde{H}_{2} & = \qty(a_{1} + x_{2})\qty(\frac{x_{1} + x_{3}}{x_{2}})\qty(\frac{x_{2} + x_{4}}{x_{3}})\qty(\frac{x_{3} + x_{5}}{x_{4}})\qty(\frac{x_{4} + a_{5}x_{6}}{x_{5}})\qty(\frac{x_{5} + a_{5}^{2}a_{6}}{a_{5}}) \\
     & \qquad \cdot \qty(\frac{\splitfrac{a_{5}^2a_{6}x_{1}x_{2}x_{3}x_{4}x_{5} + a_{1}a_{5}x_{2}x_{3}x_{4}x_{5}x_{6} + a_{5}x_{1}x_{2}x_{3}x_{6} }{+ a_{5}x_{1}x_{2}x_{5}x_{6} + a_{5}x_{1}x_{4}x_{5}x_{6} + a_{5}x_{3}x_{4}x_{5}x_{6} + x_{1}x_{2}x_{3}x_{4}}}{a_{5}x_{1}x_{2}x_{3}x_{4}x_{5}x_{6}}) \\ & \\
  \tilde{H}_{3} & =\frac{P}{a_{1} x_{2}^{2} x_{4}^{2} x_{5}^{2} a_{5}^{3} x_{6} x_{3}^{2} x_{1} a_{6}}
  \end{aligned}
  \end{equation}
  where
  \begin{equation*}
  \begin{aligned}
  	P & = x_{2} x_{3}^{2} x_{6} a_{6}^{3} (a_{1} (x_{2}+x_{4}) x_{3}+x_{1} (a_{1}+x_{2}) x_{4}+a_{1} x_{2} x_{5}) x_{5} x_{4} x_{1} a_{5}^{7}+x_{3} ((a_{1} (x_{2}+x_{4}) x_{3}^{2} \\
  	&+((x_{2} x_{1}+a_{1} (x_{1}+x_{5})) x_{4}+x_{2}^{2} x_{5}) x_{3}+x_{1} x_{5} (x_{2}+x_{4}) (a_{1}+x_{2})) x_{2} x_{5} x_{4}^{2} x_{1} a_{6} \\
  	&+x_{6}^{2} ((x_{5} (a_{1}+x_{2}) (a_{1} x_{2}+1) x_{4}+a_{1} x_{2} x_{1}) (x_{2}+x_{4}) x_{3}^{2}+(((a_{1}^{2}+2) x_{2}+2 a_{1}) x_{5} x_{1} x_{4}^{2} \\
  	&+x_{2} (a_{1} x_{5} (x_{1}+x_{5}) x_{2}^{2} +((a_{1}^{2}+1) x_{5}^{2}+x_{1}^{2}) x_{2}+a_{1} (x_{1}+x_{5})^{2}) x_{4}+2 a_{1} x_{1} x_{2}^{2} x_{5}) x_{3} \\
  	&+((a_{1} x_{2}^{2} x_{5}+a_{1} x_{1}+x_{2} x_{1}) x_{4}+a_{1} x_{2} x_{5}) x_{5} x_{1} (x_{2}+x_{4}))) a_{6}^{2} a_{5}^{6}\\
  	&+x_{6} a_{6}^{2} x_{4} ((x_{2}+x_{4})(x_{5} (a_{1}+x_{2}) (a_{1} x_{2}+1) x_{4}+ (x_{2} x_{5}^{2}+2 a_{1}) x_{2} x_{1}) x_{3}^{3} \\
  	&+((a_{1} x_{2}^{2} x_{5}+(a_{1}^{2} x_{1}+x_{5} a_{1}^{2}+2 x_{1}+x_{5}) x_{2}+(2x_{1}+x_{5}) a_{1}) x_{5} x_{4}^{2} + x_{2} (a_{1} x_{2}^{2} x_{5}+2x_{2} x_{1}\\
  	&+a_{1} ( x_{5}^{3}+2 x_{1}+3 x_{5})) x_{1} x_{4}+x_{2}^{2} ((x_{1} x_{5}+x_{5}^{2}+1) x_{2}+a_{1} (x_{1} x_{5}+2)) x_{5} x_{1}) x_{3}^{2} \\
  	&+(((x_{5} a_{1}^{2}+x_{1}+2 x_{5}) x_{2}+a_{1} (x_{1}+2 x_{5})) x_{4} \\
  	&+x_{2} ((x_{1} x_{5}^{2}+x_{1}+x_{5}) x_{2}+a_{1} x_{1} (x_{5}^{2}+1))) x_{5} x_{1} (x_{2}+x_{4}) x_{3}+x_{1}^{2} x_{5}^{2} (x_{2}+x_{4})^{2} (a_{1}+x_{2})) a_{5}^{5} \\
  	&+x_{3} a_{6} (x_{2} (a_{1} (x_{5}^{2}+1) (x_{2}+x_{4}) x_{3}^{2}+((x_{5}^{2}+1) (x_{2} x_{1}+a_{1} (x_{1}+x_{5})) x_{4}\\
  	&+(x_{5}^{3}+x_{5}) x_{2}^{2}+a_{1}^{2} x_{2} x_{5}^{2}) x_{3}+x_{1} x_{5} (x_{5}^{2}+1) (x_{2}+x_{4}) (a_{1}+x_{2})) x_{4}^{2} x_{1} a_{6}\\
  	&+x_{6}^{2} x_{5} ((x_{5} (a_{1}+x_{2}) (a_{1} x_{2}+1) x_{4}+a_{1} x_{2} x_{1}) (x_{2}+x_{4}) x_{3}^{2} \\
  	&+(x_{1} (x_{2}^{2} a_{1}^{2}+(a_{1}^{2}+2) x_{5} x_{2}+2 a_{1} x_{5}) x_{4}^{2}\\
  	&+x_{2} (a_{1} x_{5} (x_{1}+x_{5}) x_{2}^{2} +((a_{1}^{2}+1) x_{5}^{2}+x_{1}^{2}) x_{2}+a_{1} (x_{1}+x_{5})^{2}) x_{4}+2 a_{1} x_{1} x_{2}^{2} x_{5}) x_{3} \\
  	&+((a_{1} x_{2}^{2} x_{5}+a_{1} x_{1}+x_{2} x_{1}) x_{4}+a_{1} x_{2} x_{5}) x_{5} x_{1} (x_{2}+x_{4}))) a_{5}^{4} +(a_{1} x_{1} (x_{2}+x_{4}) (a_{1} x_{5}+2) x_{3}^{2}\\
  	&+(a_{1} x_{1} x_{2} (a_{1}+x_{5}) x_{4}^{2} +(a_{1} x_{5} (a_{1}+x_{5}) x_{2}+x_{5}^{2}+a_{1} (x_{1}^{2}+1) x_{5}+2 x_{1}^{2}) (a_{1}+x_{2}) x_{4}\\
  	&+a_{1} x_{1} x_{2} (a_{1} x_{5}^{2}+a_{1}+3 x_{5})) x_{3}\\
  	&+a_{1} x_{1} x_{5} (x_{2}+x_{4}) (x_{2} x_{4}+1) (a_{1}+x_{5})) x_{2} x_{3} x_{6} a_{6} x_{5} x_{4} a_{5}^{3}+x_{2} x_{3} ((a_{1} (x_{2}+x_{4}) x_{3}^{2}\\
  	&+((x_{2} x_{1}+a_{1} (x_{1}+x_{5})) x_{4}+x_{2} (x_{2} x_{5}+a_{1}^{2} (x_{5}^{2}+1))) x_{3}+x_{1} x_{5} (x_{2}+x_{4}) (a_{1}+x_{2})) a_{6}\\
  	&+a_{1}^{2} x_{2} x_{3} x_{5} x_{6}^{2}) x_{5} x_{4}^{2} x_{1} a_{5}^{2}+a_{1}^{2} x_{2}^{2} x_{4}^{2} x_{5}^{2} x_{3}^{2} x_{1}
  \end{aligned}
  \end{equation*}

A computer-aided calculation enables us to verify the following
\begin{thm}\label{A6thm} The conditions $b_{1} = 1 = b_{5} $ and $b_{6}a_{5}^2=1$ on the parameters are necessary and sufficient conditions for 
$\tilde{H}_{1},\tilde{H}_{2}$ and  $\tilde{H}_{3}$ to be first integrals that are preserved by the type $A_{6}$ deformed map, i.e.\ $\tilde{\varphi}_{A_{6}}^{*}(\tilde{H}_{j}) = \tilde{H}_{j}$ for $j=1,2,3$, and these quantities are in involution with respect to the Poisson bracket \eqref{poissA2N}. Hence $\tilde{\varphi}_{A_{6}}$ is a Liouville integrable map whenever these conditions on the parameters hold. 
\end{thm}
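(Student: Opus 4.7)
The plan is to verify the theorem in three stages: invariance of each $\tilde{I}_i$ under $\tilde{\varphi}_{A_6}$ together with identification of the necessary parameter constraints, involutivity with respect to the log-canonical Poisson bracket, and functional independence so that the Liouville criterion of Section~\ref{ss:discrete-int-sys} applies.

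For the invariance step, the approach mirrors the $A_4$ construction sketched in Example~\ref{exmpA4}. Each $\tilde{I}_i$ is written as a $\mathbb{C}$-linear combination of a predetermined monomial basis---the monomials appearing in the undeformed $I_1, I_2, I_3$ of \eqref{firstintA61}---with undetermined scalar coefficients. Substituting the exchange relations and clearing denominators converts the identity $\tilde{\varphi}_{A_6}^{*}(\tilde{I}_i) - \tilde{I}_i = 0$ into a polynomial identity in $x_1, \dotsc, x_6$ whose coefficients are polynomial expressions in the deformation parameters $a_1, a_5, a_6, b_1, b_5, b_6$ and the unknown scalars. Setting each such coefficient to zero yields a linear system whose consistency is controlled precisely by the parameter relations $b_1 = b_5 = 1$ and $b_6 a_5^2 = 1$; solving then simultaneously pins down the monomial coefficients (reproducing the displayed expressions) and shows that these constraints are both necessary and sufficient for invariance.

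For involutivity, I would compute $\qty{\tilde{I}_i, \tilde{I}_j}$ directly from the log-canonical rule $\qty{\log x_i, \log x_j} = P_{ij}$, with $P = B_{A_6}^{-1}$, which for Laurent polynomials $f, g$ yields
\[ \qty{f, g} = \sum_{i,j} P_{ij}\, x_i \pdv{f}{x_i}\, x_j \pdv{g}{x_j}. \]
Since $P$ has integer entries and each $\tilde{I}_i$ is a finite sum of Laurent monomials, this reduces to a sizable but purely symbolic calculation that all three brackets vanish identically. A more structural alternative is to re-express the $\tilde{I}_i$ in terms of the $L_j$, whose mutual brackets were tabulated in Section~\ref{ss:initial-analysis}, reducing the check to the analogue of the undeformed Magri--Lenard involutivity together with cancellation of the deformation corrections.

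Functional independence then follows because the $\tilde{I}_i$ agree with the $I_i$ of \eqref{firstintA61} when the parameters are returned to their undeformed values, and the $I_i$ are known to be independent integrals of motion for $\varphi_{A_6}$; consequently the Jacobian of $(\tilde{I}_1, \tilde{I}_2, \tilde{I}_3)$ has generic rank $3$, a Zariski-open condition stable under deformation. Since $B_{A_6}$ is invertible, the associated Poisson tensor has full rank $6$, so no Casimirs are required and three functionally independent commuting integrals suffice for Liouville integrability. The principal obstacle throughout is the sheer size of the symbolic expressions, especially those built from $I_3$, so the verification is tractable only by computer algebra; conceptually, however, the argument is a direct extension of the $A_4$ blueprint.
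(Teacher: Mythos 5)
Your proposal matches the paper's approach: the authors likewise obtain the $\tilde{I}_i$ by placing undetermined coefficients on the monomials of the undeformed $I_1,I_2,I_3$, imposing invariance under $\tilde{\varphi}_{A_6}$ to force the constraints $b_1=b_5=1$ and $b_6a_5^2=1$, and then verifying invariance and involution by computer algebra --- the paper offers no argument beyond this computer-aided verification. Your added remarks on functional independence (via specialisation to the undeformed parameter values) and on the nondegeneracy of $B_{A_6}$ make explicit details the paper leaves implicit, but do not change the route.
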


The deformed map $\tilde{\varphi}_{A_{6}}$ is a Liouville integrable symplectic map, but it cannot be a cluster map because the generated variables stop being Laurent polynomials after some iterations. Therefore, once again, we look to lift the deformed map to a higher dimensional space by Laurentification. Following the same process as in the previous examples, we study the singularity structures of the deformed map $\tilde{\varphi}_{A_{6}}$ by observing the $p$-adic properties of orbits defined over $\mathbb{Q}$ (see Remark \ref{padic}). Then we observe the following singularity patterns: 
\begin{equation}\label{singpA6}
    \begin{aligned}
         \text{(1)} : &&\dots &\to (\epsilon, R,R,R,R,R)\to (\epsilon^{-1},\epsilon^{-1},\epsilon^{-1},\epsilon^{-1},\epsilon^{-1},\epsilon^{-1}) \to (R,R,R,R,R,\epsilon) \\ 
         \text{(2)} :&&\dots &\to ( R,R,R,R,R,\epsilon) \to  ( R,R,R,R,R,\epsilon^{-1}) \to ( R,R,R,R,\epsilon^{-1},R) \\
        &&&\to ( R,R,R,\epsilon^{-1},R,R) \to ( R,R,\epsilon^{-1},R,R,R) \to ( R,\epsilon^{-1},R,R,R,R)\\
        &&& \to ( \epsilon^{-1},R,R,R,R,R) \to ( \epsilon,R,R,R,R,R) \to \dots \\
        \text{(3)} :&&\dots &\to (R, \epsilon,R,R,R,R) \to \dots \\ 
        \text{(4)} :&&\dots &\to (R, R , \epsilon,R,R,R)\to \dots \\
        \text{(5)} :&&\dots &\to (R, R , R, \epsilon,R,R)\to \dots \\
        \text{(6)} :&&\dots &\to (R, R ,R, R, \epsilon,R)\to \dots \\
    \end{aligned}
\end{equation}
%
By introducing six different tau functions $\tau_{n}$, $\sigma_{n}$, $p_{n}$, $r_{n}$, $q_{n}$, $w_{n}$, one corresponding to 
each pattern, 
we construct a rational map $\pi_{A_{6}}: \mathbb{C}^{15} \to \mathbb{C}^{6} $, analogous to \eqref{vartransA4} in Example~\ref{LaurentA4}. 
The map is defined by  the dependent variable transformation  
\begin{equation}\label{vartransA6}
    \begin{split}
 \pi_{A_{6}}:\,\,   x_{1,n} = \frac{\sigma_{n}\tau_{n+1}}{\sigma_{n+1}\tau_{n}} \quad & x_{2,n} = \frac{p_{n}}{\sigma_{n+2}\tau_{n}} \quad  x_{3,n} =\frac{r_{n}}{\sigma_{n+3}\tau_{n}} \quad  x_{4,n} =\frac{q_{n}}{\sigma_{n+4}\tau_{n}}\\[1em]
    & x_{5,n} =\frac{w_{n}}{\sigma_{n+5}\tau_{n}} \quad x_{6,n} =\frac{\sigma_{n+7}\tau_{n-1}}{\sigma_{n+6}\tau_{n}}, 
    \end{split}
\end{equation}
where $\tau_n$ and $\sigma_n$ represent patterns (1) and (2) respectively, while the variables $p_n,r_n,q_n,w_n$ correspond to patterns (3)-(6).

Upon substituting \eqref{vartransA6} directly into the components \eqref{deformedmuA6} of $\tilde{\varphi}_{A_{6}}$ with the conditions $b_{6}a_{5}^{2} = 1 $, $b_{i} = 1 = a_{j} $ for $i=1,\dotsc, 5$ and $j = 2,3,4$, one obtains the following system of equations: 
\begin{equation}\label{systmA6}
\begin{aligned}
    \tau_{n+2}\sigma_{n} &= \sigma_{n+2}\tau_{n} + a_{1}p_{n} \\
    p_{n+1}p_{n} &= \sigma_{n+3}\sigma_{n+2}\tau_{n}\tau_{n+1} + r_{n} \sigma_{n+1}\tau_{n+2} \\ 
    r_{n+1}r_{n} &= \sigma_{n+4}\sigma_{n+3}\tau_{n}\tau_{n+1} + q_{n}p_{n+1}\\
    q_{n+1}q_{n} &= \sigma_{n+5}\sigma_{n+4}\tau_{n}\tau_{n+1} + w_{n}r_{n+1} \\
    w_{n+1}w_{n} &= \sigma_{n+6}\sigma_{n+5}\tau_{n}\tau_{n+1} + a_{5}\sigma_{n+7}q_{n+1}\tau_{n-1} \\
    \sigma_{n+8}\tau_{n-1} &= b_{6}\sigma_{n+6}\tau_{n+1} + a_{6}w_{n+1}
    \end{aligned}
\end{equation}
Once again, we begin by presenting the initial data as 
\begin{align*}
(\tilde{x}_{1},\tilde{x}_{2},&\tilde{x}_{3},\tilde{x}_{4},\tilde{x}_{5},\tilde{x}_{6},\tilde{x}_{7},\tilde{x}_{8},\tilde{x}_{9},\tilde{x}_{10},\tilde{x}_{11},\tilde{x}_{12},\tilde{x}_{13},,\tilde{x}_{14},\tilde{x}_{15}) \\ 
&= (q_{0}, w_{0},\tau_{-1},\tau_{0},\tau_{1}, \sigma_{0},\sigma_{1},\sigma_{2},\sigma_{3},\sigma_{4},\sigma_{5},\sigma_{6},\sigma_{7}, p_{0},r_{0}) 
\end{align*}
Then a new exchange matrix can be found by reading off the coefficients of the presymplectic form $\pi_{A_{6}}^{*}\omega_{A_{6}}$ (the pullback of the symplectic form under $\pi_{A_{6}}$). 
This matrix is given by 
\begin{equation}\label{exchmA6}
    \tilde{B}_{A_6}=  {\footnotesize\left(
\begin{array}{*{15}c}
 0 & 1 & 0 & 0 & 0 & 0 & 0 & 0 & 1 & 0 & -1 & 0 & 0 & 0 & -1\\ 
 -1 & 0 & 1 & 0 & 0 & 0 & 0 & 0 & 0 & 1 & 0 & -1 & 1 & 0 & 0 \\
  0 &  -1 & 0 & 1 & 0 & 0 & 0 & 0 & 0 & 0 & 1 & 0 & 0 & 0 & 0\\ 
   0 &  0 &  -1& 0 & 1 & 1 & -1 & 0 & 0 & 0 & 0 & 1 & -1 & 0 & 0\\
 0 & 0 & 0 & -1 & 0 & 0 & 0 & -1 & 0 & 0 & 0 & 0 & 0 & 1 & 0\\
 0 & 0 & 0 & -1& 0 & 0 & 0 & -1 & 0 & 0 & 0 & 0 & 0 & 1 & 0 \\ 
 0 &  0 & 0 & 1 & 0  & 0 & 0 & 1 & 0 & 0 & 0 & 0 & 0 & -1 & 0 \\ 
 0 & 0 & 0 & 0 & 1 & 1 & -1  & 0 & 1 & 0 & 0 & 0 & 0 & 0& -1 \\ 
 -1 & 0 & 0 & 0 & 0& 0 & 0 & -1& 0 & 1 & 0 & 0 & 0 & 1 & 0 \\ 
 0 & -1 & 0 & 0 & 0 & 0 & 0 & 0 & -1  & 0 & 1 & 0 & 0 & 0 & 1 \\ 
 1 & 0 & -1 & 0 & 0 & 0 & 0 & 0 & 0 & -1 & 0 & 1 & -1 & 0 & 0 \\ 
 0 & 1 & 0 & -1 & 0 & 0 & 0 & 0 & 0 & 0 & -1 & 0  & 0 & 0 & 0\\
 0 & -1 & 0 & 1 & 0 & 0 & 0 & 0 & 0 & 0 & 1 & 0 & 0 & 0 & 0 \\ 
 0 & 0 & 0 & 0 & -1 & -1 & 1 & 0 & -1 & 0 & 0 & 0 & 0 & 0 & 1 \\
  1 & 0 & 0 & 0 & 0 & 0 & 0 & 1 & 0 & -1 & 0 & 0 & 0 & -1& 0 \\
\end{array}
\right)}
\end{equation}
If this case were similar to that of type $A_{2}$ and $A_{4}$, then one would expect to be able to find an extended exchange matrix which contained entries corresponding to frozen variables $a_{1}, a_{5}$ and $a_{6}$. Then it should be invariant under a certain sequence of mutations, generating cluster variables expressed by the relations \eqref{systmA6}. However, after a few iterations of the recurrence \eqref{systmA6}, one finds a variable whose denominator is given by 
\begin{align*}
    a_{5}^2\sigma_{0}\tau_{-1}p_{0}r_{0}q_{0}w_{0}
\end{align*}
which shows that the denominator contains the ``frozen'' variable $a_5$. 
Under cluster mutation, the frozen variables  only appear in the numerators of Laurent polynomials arising as cluster variables. This indicates that with the condition $b_{6}a_{5}^{2} = 1 $, we cannot generate cluster variables corresponding to the recurrence \eqref{systmA6}, without putting a further constraints on the parameters: if we set $a_5=1$, which implies $b_{6} = 1$, then we are 
led to the following theorem.
\begin{thm}\label{thmforA6}
    The sequence of mutations in the cluster algebra defined by the exchange matrix \eqref{exchmA6} with two frozen variables $a_{1},a_{6}$ generates the  sequences of tau functions $(\sigma_{n})$, $(p_{n})$, $(r_{n})$, $(w_{n})$, $(q_{n})$, $(\tau_{n})$ satisfying   the relations 
    \begin{equation}\label{eq:systm1}
\begin{aligned}
    \tau_{n+2}\sigma_{n} &= \sigma_{n+2}\tau_{n} + a_{1}p_{n} \\
    p_{n+1}p_{n} &= \sigma_{n+3}\sigma_{n+2}\tau_{n}\tau_{n+1} + r_{n} \sigma_{n+1}\tau_{n+2} \\ 
    r_{n+1}r_{n} &= \sigma_{n+4}\sigma_{n+3}\tau_{n}\tau_{n+1} + q_{n}p_{n+1}\\
    q_{n+1}q_{n} &= \sigma_{n+5}\sigma_{n+4}\tau_{n}\tau_{n+1} + w_{n}r_{n+1} \\
    w_{n+1}w_{n} &= \sigma_{n+6}\sigma_{n+5}\tau_{n}\tau_{n+1} + \sigma_{n+7}q_{n+1}\tau_{n-1} \\
    \sigma_{n+8}\tau_{n-1} &= \sigma_{n+6}\tau_{n+1} + a_{6}w_{n+1}, 
    \end{aligned}
\end{equation}
and belonging to the Laurent polynomial ring \[ \mathbb{Z}\qty[a_1,a_6,\sigma_{0}^{\pm},\sigma_{1}^{\pm},\sigma_{2}^{\pm},\sigma_{3}^{\pm},\sigma_{4}^{\pm},\sigma_{5}^{\pm},\sigma_{6}^{\pm},\sigma_{7}^{\pm},\tau_{-1}^{\pm},\tau_{0}^{\pm},\tau_{1}^{\pm}, p_{0}^{\pm},r_{0}^{\pm},w_{0}^{\pm},q_{0}^{\pm} ] \] 
\end{thm}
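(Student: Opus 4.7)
The plan is to realise the six recurrences of \eqref{eq:systm1} as exchange relations produced by a single periodic sequence of mutations in the cluster algebra associated with $\tilde{B}_{A_6}$, extended by two frozen rows carrying $a_1$ and $a_6$, and then to invoke the Laurent phenomenon. Inspection of the initial seed shows that the six new tau-variables $\tau_2,p_1,r_1,q_1,w_1,\sigma_8$ should arise in that order from mutation at the positions holding $\sigma_0,p_0,r_0,q_0,w_0,\tau_{-1}$; I therefore take $\mu_3\mu_2\mu_1\mu_{15}\mu_{14}\mu_6$ as the candidate one-step mutation sequence and first seek the frozen rows that make the resulting exchange relations agree with \eqref{eq:systm1}.

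To check the base case $n=0$ I will read column $6$ of \eqref{exchmA6}: the only positive entries appear at rows $4,8$ and the only negative entry at row $14$, contributing $\tau_0\sigma_2$ and $p_0$ respectively, so that $\mu_6$ yields $\tau_2\sigma_0=\sigma_2\tau_0+a_1p_0$ once the $a_1$-row is fixed with a single $-1$ in column $6$. I will then update the exchange matrix by \eqref{matrixmu}, verify that the new column $14$ reproduces the second line of \eqref{eq:systm1} under $\mu_{14}$, and iterate this procedure for the remaining mutations at nodes $15,1,2$ and $3$. At each stage only entries in the two-step neighbourhood of the mutated node change, which keeps the computation tractable, and the final mutation must pick up $a_6$ as its unique frozen contribution, which in turn determines the $a_6$-row.

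To promote the base case to all $n$, I will verify that the composite $\mu_3\mu_2\mu_1\mu_{15}\mu_{14}\mu_6$ is mutation-periodic up to the cyclic permutation $\rho$ of positions $3,\dotsc,13$ that carries the output seed $(q_1,w_1,\sigma_8,\tau_0,\tau_1,\tau_2,\sigma_1,\dotsc,\sigma_7,p_1,r_1)$ into the positional pattern of the initial seed with every index shifted by one; an induction on $n$ then reproduces \eqref{eq:systm1} for every $n\ge 0$. The Laurent phenomenon now places every iterated tau-variable in the stated Laurent polynomial ring, with non-negative integer coefficients because every exchange relation along the way is a sum of two monomials whose coefficients lie in $\{1,a_1,a_6\}$.

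The step I expect to be the main obstacle is the matrix bookkeeping in the middle paragraph: six applications of \eqref{matrixmu} to the $17\times 15$ extended exchange matrix, each followed by a check that the column being mutated carries exactly the signed monomials required by the next line of \eqref{eq:systm1}, and all with the evolution of the $a_1$- and $a_6$-rows tracked in parallel. This is purely combinatorial but error-prone by hand, and I would certify it with a computer algebra system. The remaining pieces---the choice of mutation sequence, the periodicity check and the concluding invocation of the Laurent phenomenon---are then routine.
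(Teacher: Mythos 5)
Your proposal follows essentially the same route as the paper: the same extension of \eqref{exchmA6} by two frozen rows for $a_1$ and $a_6$, the same mutation sequence $\mu_{3}\mu_{2}\mu_{1}\mu_{15}\mu_{14}\mu_{6}$, the same verification that this composite equals a cyclic permutation of the nodes $3,\dotsc,13$ (so that iterating shifts all subscripts by one), and the same concluding appeal to the Laurent phenomenon. The only difference is presentational --- you track the six applications of \eqref{matrixmu} on the extended exchange matrix directly, whereas the paper performs the equivalent check on the quiver and displays the mutated quiver graphically --- and your reading of column $6$ and the required frozen entries matches the paper's $\hat{B}_{A_6}$.
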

%
\begin{proof} Let us extend the initial data by inserting the frozen variables:  
\begin{align*}
&(\tilde{x}_{1},\tilde{x}_{2},\tilde{x}_{3},\tilde{x}_{4},\tilde{x}_{5},\tilde{x}_{6},\tilde{x}_{7},\tilde{x}_{8},\tilde{x}_{9},\tilde{x}_{10},\tilde{x}_{11},\tilde{x}_{12},\tilde{x}_{13},\tilde{x}_{14},\tilde{x}_{15},\tilde{x}_{16},\tilde{x}_{17})\\
&=(q_{0},w_{0},\tau_{-1},\tau_{0},\tau_{1},\sigma_{0},\sigma_{1},\sigma_{2},\sigma_{3},\sigma_{4},\sigma_{5},\sigma_{6},\sigma_{7},p_{0},r_{0},a_{1},a_{6})    
\end{align*} 
We add two new rows, whose entries correspond to the frozen variables, to the exchange matrix \eqref{exchmA6} to define the extended exchange matrix
    \begin{equation} \label{extdefexchA6}
         \hat{B}_{A_{6}}=  {\footnotesize\left(
\begin{array}{*{15}c}
 0 & 1 & 0 & 0 & 0 & 0 & 0 & 0 & 1 & 0 & -1 & 0 & 0 & 0 & -1\\ 
 -1 & 0 & 1 & 0 & 0 & 0 & 0 & 0 & 0 & 1 & 0 & -1 & 1 & 0 & 0 \\
  0 &  -1 & 0 & 1 & 0 & 0 & 0 & 0 & 0 & 0 & 1 & 0 & 0 & 0 & 0\\ 
   0 &  0 &  -1& 0 & 1 & 1 & -1 & 0 & 0 & 0 & 0 & 1 & -1 & 0 & 0\\
 0 & 0 & 0 & -1 & 0 & 0 & 0 & -1 & 0 & 0 & 0 & 0 & 0 & 1 & 0\\
 0 & 0 & 0 & -1& 0 & 0 & 0 & -1 & 0 & 0 & 0 & 0 & 0 & 1 & 0 \\ 
 0 &  0 & 0 & 1 & 0  & 0 & 0 & 1 & 0 & 0 & 0 & 0 & 0 & -1 & 0 \\ 
 0 & 0 & 0 & 0 & 1 & 1 & -1  & 0 & 1 & 0 & 0 & 0 & 0 & 0& -1 \\ 
 -1 & 0 & 0 & 0 & 0& 0 & 0 & -1& 0 & 1 & 0 & 0 & 0 & 1 & 0 \\ 
 0 & -1 & 0 & 0 & 0 & 0 & 0 & 0 & -1  & 0 & 1 & 0 & 0 & 0 & 1 \\ 
 1 & 0 & -1 & 0 & 0 & 0 & 0 & 0 & 0 & -1 & 0 & 1 & -1 & 0 & 0 \\ 
 0 & 1 & 0 & -1 & 0 & 0 & 0 & 0 & 0 & 0 & -1 & 0  & 0 & 0 & 0\\
 0 & -1 & 0 & 1 & 0 & 0 & 0 & 0 & 0 & 0 & 1 & 0 & 0 & 0 & 0 \\ 
 0 & 0 & 0 & 0 & -1 & -1 & 1 & 0 & -1 & 0 & 0 & 0 & 0 & 0 & 1 \\
  1 & 0 & 0 & 0 & 0 & 0 & 0 & 1 & 0 & -1 & 0 & 0 & 0 & -1& 0 \\
  0 & 0 & 0 & 1 & 1 & -1 & -1 & 0 & 0 & 0 & 0 & 0 & 0 & 0 & 0 \\
   0 & 0 & -1 & -1 & 0 & 0 & 0 & 0 & 0 & 0 & 0 & 1 & 1 & 0 & 0 \\
\end{array}
\right)}
        \end{equation}
which can be depicted by the quiver in Figure \ref{fig:MuQuiverA6}. 

\begin{figure}[h]
\centering
\resizebox{0.7 \textwidth}{!}{%
 \begin{tikzpicture}[every circle node/.style={draw,scale=0.6,thick},node distance=15mm]

 \node [draw,circle,fill=blue!50,"$12$"] (12) at (12,0) {};
  
     \node [draw,circle,fill=red!50,"$7$"] (6) [right= of 12] {};
      \node [draw,circle,fill=red!50,"$8$"] (7) [right=of 6] {};
       \node [draw,circle,fill=green!50,"$9$"] (a1) [right=of 7] {};
        \node [draw,circle,fill=green!50,"$10$"] (a2) [right=of a1] {};
      
      \node [draw,circle,fill=red!50,"$11$"] (8) [right=of a2] {};
      \node [draw,circle,fill=red!50,"$12$"] (9) [right=of 8] {};
      \node [draw,circle,fill=red!50,"$13$"right] (10) [below right=of 9] {};
      \node [draw,circle,fill=red!50,"$6$" left] (5) [below left=of 6] {};

      \node [draw,circle,fill=blue!50,"$17$"] (13) [right=of 9] {};

       \node [draw,circle,fill=red!50,"$5$"below] (4) [below right=of 5] {};
       \node [draw,circle,fill=red!50,"$14$"below] (11) [right=of 4] {};
        \node [draw,circle,fill=green!50,"$15$"below] (b1) [right=of 11] {};
         \node [draw,circle,fill=green!50,"$1$"below] (b2) [right=of b1] {};
       
       \node [draw,circle,fill=red!50,"$2$"below] (1) [right=of b2] {};
       \node [draw,circle,fill=red!50,"$3$"below] (2) [right=of 1] {};
       
        \node [draw,circle,fill=red!50,"$4$" below] (3) at (18,-3.5) {};

 \node [draw,circle,fill=blue!50,"$12$"] (12) at (12,0) {};
  
     \node [draw,circle,fill=red!50,"$7$"] (6) [right= of 12] {};
      \node [draw,circle,fill=red!50,"$8$"] (7) [right=of 6] {};
       \node [draw,circle,fill=red!50,"$9$"] (a1) [right=of 7] {};
        \node [draw,circle,fill=red!50,"$10$"] (a2) [right=of a1] {};
      
      \node [draw,circle,fill=red!50,"$11$"] (8) [right=of a2] {};
      \node [draw,circle,fill=red!50,"$12$"] (9) [right=of 8] {};
      \node [draw,circle,fill=red!50,"$13$"right] (10) [below right=of 9] {};
      \node [draw,circle,fill=red!50,"$6$" left] (5) [below left=of 6] {};

      \node [draw,circle,fill=blue!50,"$17$"] (13) [right=of 9] {};

       \node [draw,circle,fill=red!50,"$5$"below] (4) [below right=of 5] {};
       \node [draw,circle,fill=red!50,"$14$"below] (11) [right=of 4] {};
        \node [draw,circle,fill=red!50,"$15$"below] (b1) [right=of 11] {};
         \node [draw,circle,fill=red!50,"$1$"below] (b2) [right=of b1] {};
       
       \node [draw,circle,fill=red!50,"$2$"below] (1) [right=of b2] {};
       \node [draw,circle,fill=red!50,"$3$"below] (2) [right=of 1] {};
       
        \node [draw,circle,fill=red!50,"$4$" below] (3) at (18,-3.5) {};
       
       \node (a) at (18,-5.5) { (b) $Q_{A_{6}}$};

  \begin{scope}[>=Latex]
  
  \draw[-> , thick]  (1) edge (2); 
 
  \draw[-> , thick]  (9) edge (1);

   \draw[-> , thick]  (1) edge (10);

 \draw[-> , thick]  (2) edge (3);
 \draw[-> , thick]  (2) edge (8);
 \draw[-> , thick]  (2) edge (13);

 \draw[-> , thick]  (4) edge (11);
  \draw[-> , thick]  (12) edge (4);
   \draw[-> , thick]  (7) edge (4);
    \draw[-> , thick]  (3) edge (4);
    
 \draw[-> , thick]  (5) edge (12);
  \draw[-> , thick]  (5) edge (11);
   \draw[-> , thick]  (3) edge[bend left=35] (5);
    \draw[-> , thick]  (7) edge (5);
   
   \draw[-> , thick]  (6) edge (12);
    \draw[-> , thick]  (6) edge (7);
     \draw[-> , thick]  (11) edge (6);
      \draw[-> , thick]  (6) edge[bend left= 25] (3);

      \draw[-> , thick]  (8) edge (9);
       \draw[-> , thick]  (10) edge (8);
       
        \draw[-> , thick]  (13) edge (9);
         \draw[-> , thick]  (3) edge[bend left=25] (9);
         
          \draw[-> , thick]  (13) edge (10);
           \draw[-> , thick]  (10) edge[bend left= 35] (3);
           
            \draw[-> , thick]  (3) edge (13);
            
             \draw[-> , thick]  (12) edge (3);
        

 \draw[-> , thick]  (7) edge (a1);
  \draw[-> , thick]  (a1) edge (a2);
   \draw[-> , thick]  (a2) edge (8);
    \draw[-> , thick]  (11) edge (b1);
     \draw[-> , thick]  (b1) edge (b2);
      \draw[-> , thick]  (b2) edge (1);
      
      \draw[-> , thick]  (b1) edge (7);
     \draw[-> , thick]  (b2) edge (a1);
      \draw[-> , thick]  (1) edge (a2);

      \draw[-> , thick]  (a1) edge (11);
	\draw[-> , thick]  (a2) edge (b1);  
	\draw[-> , thick]  (8) edge (b2);

    \end{scope}

\end{tikzpicture}
}

\caption{Quiver corresponding to $\hat{B}_{A_{6}}$  }
\label{fig:MuQuiverA6}
\end{figure}

We then apply the mutation sequence $\mu_{3}\mu_{2}\mu_{1}\mu_{15}\mu_{14}\mu_{6}$ to this quiver.  If we arrange the nodes and edges of the mutated quiver as per Figure \ref{fig:arrangedQuiverA6}, then one can see that this is identical to the initial quiver except that specific labels are shifted by 1. Thus the block mutation $\mu_{3}\mu_{2}\mu_{1}\mu_{15}\mu_{14}\mu_{6}$ is equivalent to permuting the labels of the nodes in $Q_6$ and hence
\begin{equation}\label{eq:Qinv}
    \mu_{3}\mu_{2}\mu_{1}\mu_{15}\mu_{14}\mu_{6}(Q_{6}) = \rho_{6} (Q_{6})
\end{equation}
where $\rho_{6} = (3,4,5,6,7,8,9,10,11,12,13)$ is permutation of the labellings. Thus if we apply the mutations in the same order as previously, once again the structure of the quiver remains the same except the labels of the nodes are shifted. Thus if we take the inverse of the permutation on each side of \eqref{eq:Qinv}, then one has
\begin{equation}
\psi_{A_{6}}:=\rho_{6}^{-1}\mu_{3}\mu_{2}\mu_{1}\mu_{15}\mu_{14}\mu_{6} (Q_{A_{6}}) = Q_{A_{6}}
\end{equation}
and it is clear that the composition of mutations on the left-hand side is a cluster map. The first iteration of the map gives rise to a new seed, which contains cluster variables that are expressed by \eqref{eq:systm1} with $n=0$: 

\begin{equation}
\begin{split}
    &\psi_{A_{6}} : (q_{0},w_{0},\tau_{-1},\tau_{0},\tau_{1},\sigma_{0},\sigma_{1},\sigma_{2},\sigma_{3},\sigma_{4},\sigma_{5},\sigma_{6},\sigma_{7},p_{0},r_{0},a_{1},a_{6}) \\ 
    & \qquad \to (q_{1},w_{1},\tau_{0},\tau_{1},\tau_{2},\sigma_{1},\sigma_{2},\sigma_{3},\sigma_{4},\sigma_{5},\sigma_{6},\sigma_{7},\sigma_{8},p_{1},r_{1},a_{1},a_{6})    
\end{split}
\end{equation}
Notice that the subscript of the variables $q, w, \tau, \sigma$ is shifted by 1. Therefore successive applying the map $\psi_{A_{6}}$ will induce a series of seeds that consist of the cluster variables 
\begin{equation}(q_{n},w_{n},\tau_{n-1},\tau_{n},\tau_{ n+1},\sigma_{n} , \sigma_{n+1},\sigma_{n+2},\sigma_{n+3},\sigma_{n+4},\sigma_{n+5},\sigma_{n+6},\sigma_{n+7}, p_{n},r_{n},a_{1},a_{6})    
\end{equation}
for $n\in \mathbb{Z}$, satisfying \eqref{eq:systm1}.  Therefore every tau function, generated by the system of recurrences in \eqref{eq:systm1}, can be obtained by applying $\psi_{A_{6}}$ repeatedly. Hence, for each $n$, the tau functions $\tau_{n},\sigma_{n}, p_{n}, w_{n}, q_{n}, r_{n}$ are cluster variables belonging to the stated Laurent polynomial ring, by 
Theorem \ref{LP}. 

  \begin{figure}[H]
\centering
\resizebox{0.7 \textwidth}{!}{%
 \begin{tikzpicture}[every circle node/.style={draw,scale=0.6,thick},node distance=15mm]

 \node [draw,circle,fill=blue!50,"$16$"] (12) at (12,0) {};
  
     \node [draw,circle,fill=red!50,"$8$"] (6) [right= of 12] {};
      \node [draw,circle,fill=red!50,"$9$"] (7) [right=of 6] {};
       \node [draw,circle,fill=red!50,"$10$"] (a1) [right=of 7] {};
        \node [draw,circle,fill=red!50,"$11$"] (a2) [right=of a1] {};
      
      \node [draw,circle,fill=red!50,"$12$"] (8) [right=of a2] {};
      \node [draw,circle,fill=red!50,"$13$"] (9) [right=of 8] {};
      \node [draw,circle,fill=red!50,"$3$"right] (10) [below right=of 9] {};
      \node [draw,circle,fill=red!50,"$7$" left] (5) [below left=of 6] {};

      \node [draw,circle,fill=blue!50,"$17$"] (13) [right=of 9] {};

       \node [draw,circle,fill=red!50,"$6$"below] (4) [below right=of 5] {};
       \node [draw,circle,fill=red!50,"$14$"below] (11) [right=of 4] {};
        \node [draw,circle,fill=red!50,"$15$"below] (b1) [right=of 11] {};
         \node [draw,circle,fill=red!50,"$1$"below] (b2) [right=of b1] {};
       
       \node [draw,circle,fill=red!50,"$2$"below] (1) [right=of b2] {};
       \node [draw,circle,fill=red!50,"$4$"below] (2) [right=of 1] {};
       
        \node [draw,circle,fill=red!50,"$5$" below] (3) at (18,-3.5) {};

  \begin{scope}[>=Latex]
  
  \draw[-> , thick]  (1) edge (2); 
 
  \draw[-> , thick]  (9) edge (1);

   \draw[-> , thick]  (1) edge (10);

 \draw[-> , thick]  (2) edge (3);
 \draw[-> , thick]  (2) edge (8);
 \draw[-> , thick]  (2) edge (13);

 \draw[-> , thick]  (4) edge (11);
  \draw[-> , thick]  (12) edge (4);
   \draw[-> , thick]  (7) edge (4);
    \draw[-> , thick]  (3) edge (4);
    
 \draw[-> , thick]  (5) edge (12);
  \draw[-> , thick]  (5) edge (11);
   \draw[-> , thick]  (3) edge[bend left=35] (5);
    \draw[-> , thick]  (7) edge (5);
   
   \draw[-> , thick]  (6) edge (12);
    \draw[-> , thick]  (6) edge (7);
     \draw[-> , thick]  (11) edge (6);
      \draw[-> , thick]  (6) edge[bend left= 25] (3);

      \draw[-> , thick]  (8) edge (9);
       \draw[-> , thick]  (10) edge (8);
       
        \draw[-> , thick]  (13) edge (9);
         \draw[-> , thick]  (3) edge[bend left=25] (9);
         
          \draw[-> , thick]  (13) edge (10);
           \draw[-> , thick]  (10) edge[bend left= 35] (3);
           
            \draw[-> , thick]  (3) edge (13);
            
             \draw[-> , thick]  (12) edge (3);
        

 \draw[-> , thick]  (7) edge (a1);
  \draw[-> , thick]  (a1) edge (a2);
   \draw[-> , thick]  (a2) edge (8);
    \draw[-> , thick]  (11) edge (b1);
     \draw[-> , thick]  (b1) edge (b2);
      \draw[-> , thick]  (b2) edge (1);
      
      \draw[-> , thick]  (b1) edge (7);
     \draw[-> , thick]  (b2) edge (a1);
      \draw[-> , thick]  (1) edge (a2);

      \draw[-> , thick]  (a1) edge (11);
	\draw[-> , thick]  (a2) edge (b1);  
	\draw[-> , thick]  (8) edge (b2);

    \end{scope}

\end{tikzpicture}
}
\caption{ Mutated quiver $Q'_{A_{6}} = \mu_{3}\mu_{2}\mu_{1}\mu_{15}\mu_{14}\mu_{6} (Q_{A_{6}})$. It has the same structure as Figure \ref{fig:MuQuiverA6} with permuted labellings.}
\label{fig:arrangedQuiverA6}
\end{figure}

\end{proof}

\subsection{Local expansion}\label{ss:local-expansion}

To investigate generalizing the integrable cluster map to arbitrary even rank, we begin by exploring the relation between deformed quivers/exchange matrices of type $A_{4}$ and type $A_{6}$.  Recall that in Example \ref{LaurentA4}, we showed that Laurentification of deformed type $A_{4}$ lead us to a new cluster algebra defined by pair of the initial cluster  $(q_{0},\tau_{-1},\tau_{0},\tau_{1},\sigma_{0},\sigma_{1},\sigma_{2},\sigma_{3},\sigma_{4},\sigma_{5},p_{0},a_{1},a_{4})$ and the exchange matrix illustrated by Figure \ref{fig:ladderA4}. 

Comparison between $Q_{A_{4}}$ and $Q_{A_{6}}$ indicates that $ Q_{A_6}$ can be obtained from $Q_{A_{4}}$ by local expansion, as illustrated in Figure~\ref{fig:Localexpansion}, that is by removing edges between the four-cycle formed by the nodes $1$, $7$, $8$ and $11$ in $Q_{A_{4}}$ and including new nodes and edges in the quiver as per Figure~\ref{fig:Q4toQ6}.
\begin{figure}[H]
    \centering
    \begin{subfigure}{0.4\textwidth}
        \includegraphics[width=\hsize, valign=m]{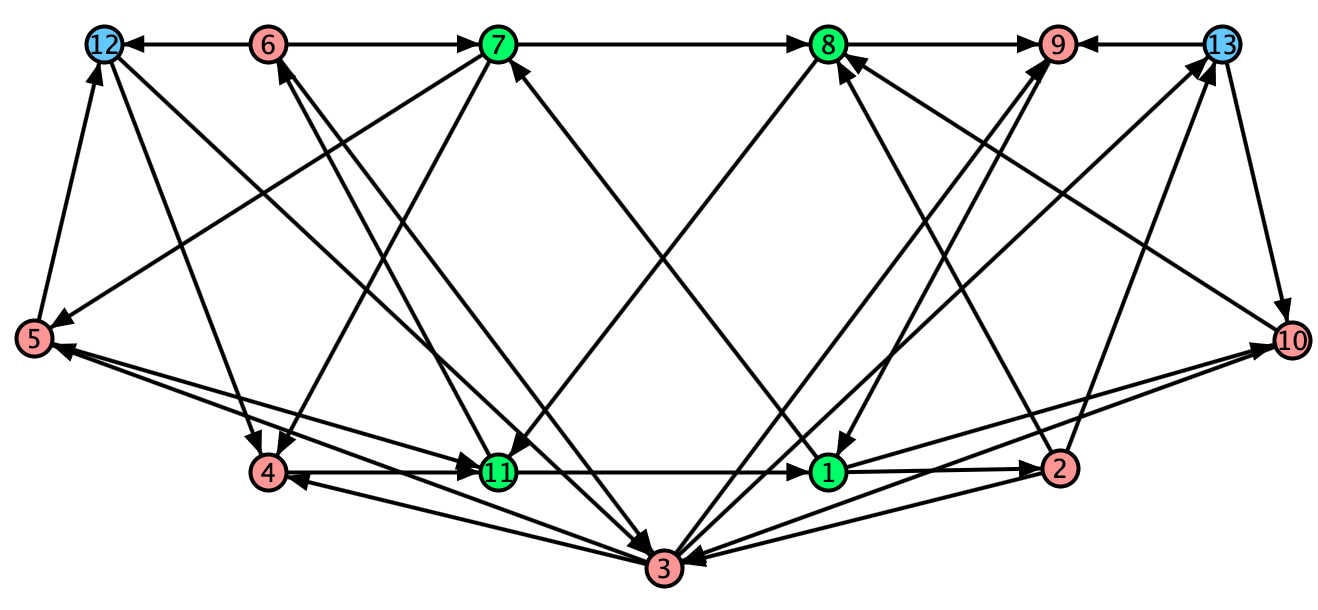}
        \caption{$Q_{A_{4}}$}
    \end{subfigure}
\qquad\tikz[baseline=-\baselineskip]\draw[thick,->] (0,1) -- ++ (1,0);\qquad
    \begin{subfigure}{0.4\textwidth}
    \includegraphics[width=\hsize, valign=m]{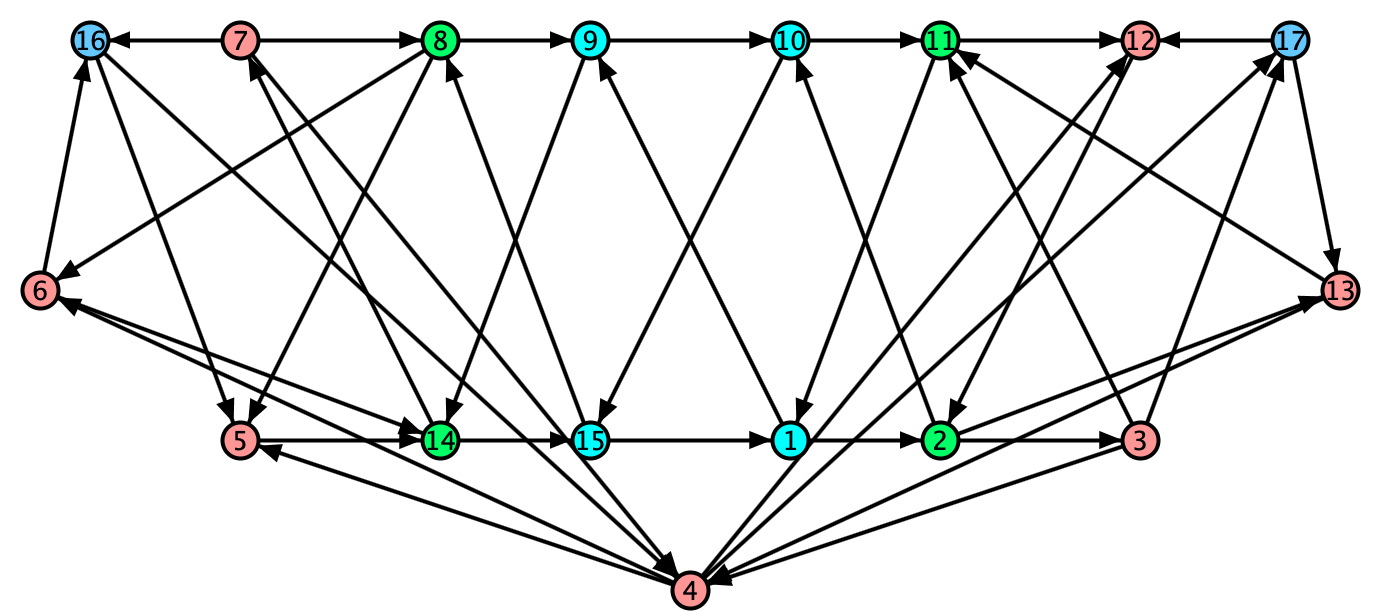}
    \caption{$Q_{A_{6}}$}
\end{subfigure}
    \caption{Extension from $Q_{A_{4}}$ to $Q_{A_{6}}$}
    \label{fig:Localexpansion}
\end{figure}
\begin{figure}[H]
    \centering
    \begin{subfigure}{0.34\textwidth}
        \includegraphics[width=\hsize, valign=m]{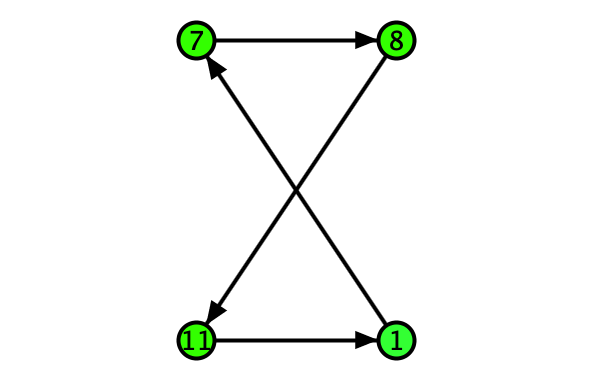}
        \caption{Subquiver in $Q_{A_{4}}$}
    \end{subfigure}
\qquad\tikz[baseline=-\baselineskip]\draw[ultra thick,->] (0,2) -- ++ (1,0);\qquad
    \begin{subfigure}{0.4\textwidth}
    \includegraphics[width=\hsize, valign=m]{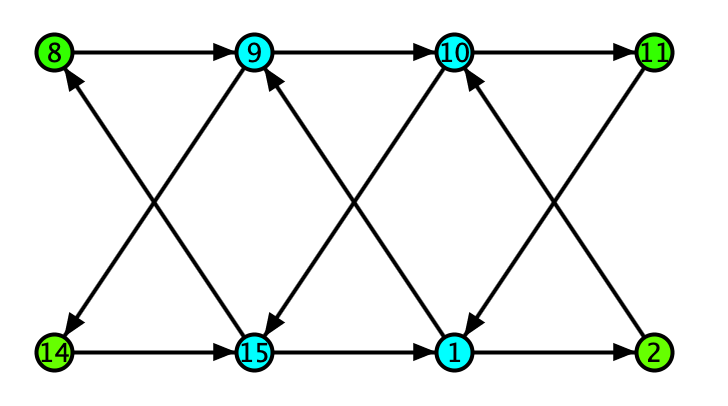}
    \caption{Subquiver in $Q_{A_{6}}$}
\end{subfigure}
    \caption{Local expansion of the subquiver in $Q_{A_{4}}$}
    \label{fig:Q4toQ6}
\end{figure}

Recall that each node in the deformed quiver corresponds to a tau function, e.g. for $Q_{A_{4}}$, the sequence of nodes $(1,2,3,4,5,6,7,8,9,10,11,12,13)$ corresponds to the sequence of functions $(q_{0},\tau_{-1},\tau_{0},\tau_{1},\sigma_{0},\sigma_{1},\sigma_{2},\sigma_{3},\sigma_{4},\sigma_{5},p_{0},a_{1},a_{4})$. Figure \ref{fig:Q4toQ6} shows that  $Q_{A_{6}}$ can be built from $Q_{A_{4}}$ by carrying out the local expansion on the four-cycle subquiver with nodes corresponding to the functions $\sigma_{3}$, $\sigma_{4}$, $q_{0}$ and $p_{0}$. From the cluster point of view, the local expansion is equivalent to relabelling $\sigma_3$, $\sigma_4$, $\sigma_5$ as $\sigma_{5},\sigma_{6},\sigma_{7}$ respectively and inserting  $\sigma_3$, $\sigma_4$, $p_{1}$ and $q_{1}$ in a way that the cluster becomes $(q_{1}, q_{0},\tau_{-1},\tau_{0},\tau_{1},\sigma_{0},\sigma_{1},\sigma_{2},\sigma_{3},\sigma_{4},\sigma_{5}\sigma_{6},\sigma_{7},p_{0}, p_{1}, a_{1},a_{4}) $.

We will show that this pattern continues: one can recursively apply the same local expansion by a four-cycle quiver to obtain the deformed quiver  $Q_{A_{2N}}$ with nodes corresponding to 
\begin{align*}
    (q_{N-2}\dots&,q_{1},q_{0},\tau_{-1},\tau_{0},\tau_{1},\sigma_{0},\sigma_{1},\sigma_{2},\sigma_{3},\dots, \sigma_{2N+1},p_{0},p_{1},\dots, p_{N-2},a_{1},a_{2N}) \\
    &=(1,2,3,\dots, 4N+3,4N+4,4N+5)
\end{align*}

What does this expansion tell us? The local expansion above gives insight into the structure of the tau functions in the $x_{i}$ variables. Let us compare the tau functions in type $A_4$ and type $A_6$ cases. In the setting $(\tilde{x}_{1},\tilde{x}_{2},\dots,\tilde{x}_{11}) = (q_{0},\tau_{-1},\tau_{0},\tau_{1},\sigma_{0},\dots,\sigma_{5},p_{0})$, the variables $x_{i,n}$, induced by the deformed map associated to type $A_4$, are defined as 
\begin{align*}
    x_{1,n} = \frac{\tilde{x}_{5}\tilde{x}_{4}}{\tilde{x}_{6}\tilde{x}_{3}}, \quad  x_{2,n} = \frac{\tilde{x}_{11}}{\tilde{x}_{7}\tilde{x}_{3}}, \quad  x_{3,n} = \frac{\tilde{x}_{1}}{\tilde{x}_{8}\tilde{x}_{3}}, \quad  x_{4,n} = \frac{\tilde{x}_{10}\tilde{x}_{2}}{\tilde{x}_{9}\tilde{x}_{3}}
\end{align*}
The local expansion above $(q_{0},\tau_{-1},\tau_{0},\tau_{1},\sigma_{0},\dots,\sigma_{5},p_{0}) \to \qty(q_{1},q_{0},\tau_{-1},\tau_{0},\tau_{1},\sigma_{0},\dots,\sigma_{7},p_{0},p_{1})$ is equivalent to shifting the subscript of the variables $\tilde{x}_{i} \to \tilde{x}_{i+1}$ for $i = 1,2,\dots,7,8$ and $\tilde{x}_{j} \to \tilde{x}_{j+3}$ for $i = 8,9, 10,11$ and imposing the new variables 
\begin{align*}
    x_{3} = \frac{\tilde{x}_{15}}{\tilde{x}_{9}\tilde{x}_{4}}, \quad x_{4} = \frac{\tilde{x}_{1}}{\tilde{x}_{10}\tilde{x}_{4}}
\end{align*}
Then one obtains the variable transformation in \eqref{vartransA6} whose tau functions are denoted as $\qty( \tilde{x}_{1} , \tilde{x}_{2},\dots,\tilde{x}_{15}) = \qty(q_{1},q_{0},\tau_{-1},\tau_{0},\tau_{1},\sigma_{0},\dots,\sigma_{7},p_{0},p_{1})$. The recursive local expansion constructs the following $x_i$ variables associated to the type $A_{2N}$ deformed map, 
\begin{equation}\label{vartransA2N}
    \begin{split}
    &x_{1} = \frac{\tilde{x}_{N+3}\tilde{x}_{N+2}}{\tilde{x}_{N+4}\tilde{x}_{N+1}}, \quad x_{2} = \frac{\tilde{x}_{3N+5}}{\tilde{x}_{N+5}\tilde{x}_{N+1}},\quad x_{3} = \frac{\tilde{x}_{3N+6}}{\tilde{x}_{N+6}\tilde{x}_{N+1}}, \dots, \\ & x_{N} = \frac{\tilde{x}_{4N+3}}{\tilde{x}_{2N+3}\tilde{x}_{N+1}}, x_{N+1} = \frac{\tilde{x}_{1}}{\tilde{x}_{2N+4}\tilde{x}_{N+1}}, \quad x_{N+2} = \frac{\tilde{x}_{2}}{\tilde{x}_{2N+5}\tilde{x}_{N+1}},\dots , \\ & x_{2N-1} = \frac{\tilde{x}_{N-1}}{\tilde{x}_{3N+2}\tilde{x}_{N+1}}, \quad x_{2N} = \frac{\tilde{x}_{3N+4}\tilde{x}_{N}}{\tilde{x}_{3N+3}\tilde{x}_{N+1}}
    \end{split}
    \end{equation}
where \[ \qty(\tilde{x}_{1},\tilde{x}_{2},\dots ,\tilde{x}_{4N+3}) =\qty(q_{N-2}\dots,q_{1},q_{0},\tau_{-1},\tau_{0},\tau_{1},\sigma_{0},\sigma_{1},\sigma_{2},\sigma_{3},\dots, \sigma_{2N+1},p_{0},p_{1},\dots, p_{N-2},a_{1},a_{2N}). \]

The symplectic form associated to $A_{2N}$ is defined by 
\begin{align*}
    \omega = \sum_{i<j} b_{ij} \dd \log x_{i} \wedge \dd \log x_{j} 
\end{align*}
where $b_{ij}$ are entries of the exchange matrix $B_{A_{2N}}$ \eqref{exchA2N}, defined by the following
\begin{align*}
    (B_{A_{2N}})_{ij} = \begin{cases}
        1 & \text{if} \ j=i+1 \\
        -1& \text{if} \ i=j+1 \\
        0 & \text{otherwise}
    \end{cases} 
\end{align*}

Now,
\begin{align*}
    \pi^{*}\omega 
    &= \tilde{\omega} \\
    &= \dd \log \qty(\frac{\tilde{x}_{N+3}\tilde{x}_{N+2}}{\tilde{x}_{N+4}\tilde{x}_{N+1}}) \wedge \dd \log \qty(\frac{\tilde{x}_{3N+5}}{\tilde{x}_{N+5}\tilde{x}_{N+1}}) + \dd \log\qty(\frac{\tilde{x}_{3N+5}}{\tilde{x}_{N+5}\tilde{x}_{N+1}}) \wedge  \log \qty(\frac{\tilde{x}_{3N+6}}{\tilde{x}_{N+6}\tilde{x}_{N+1}})\\ 
    &  + \sum_{l=6}^{N+2} \dd \log \qty(\frac{\tilde{x}_{3N+l}}{\tilde{x}_{N+l}\tilde{x}_{N+1}}) \wedge \dd \log \qty(\frac{\tilde{x}_{3N+(l+1)}}{\tilde{x}_{N+(l+1)}\tilde{x}_{N+1}}) + \dd \log \qty(\frac{\tilde{x}_{4N+3}}{\tilde{x}_{2N+3}\tilde{x}_{N+1}}) \wedge \dd \log \qty(\frac{\tilde{x}_{1}}{\tilde{x}_{2N+4}\tilde{x}_{N+1}})\\
    &+ \sum_{m}\dd \log \qty(\frac{\tilde{x}_{m}}{\tilde{x}_{2N+3 + m}\tilde{x}_{N+1}}) \wedge \dd \log \qty(\frac{\tilde{x}_{m+1}}{\tilde{x}_{2N+3 + (m+1)}\tilde{x}_{N+1}}) \\ 
    &+ \dd \log \qty(\frac{\tilde{x}_{N-1}}{\tilde{x}_{3N+2}\tilde{x}_{N+1}}) \wedge \dd \log \qty(\frac{\tilde{x}_{3N+4}\tilde{x}_{N}}{\tilde{x}_{3N+3}\tilde{x}_{N+1}})
\end{align*}
To simplify the calculation,  let us define $\alpha_{i} = \dd \log \tilde{x}_{i}$ and $f_{j} = \alpha_{3N+j} - \alpha_{N+j}$ and $g_{k} = \alpha_{k} - \alpha_{k+2N+3}$. Then the pre-symplectic form can be re-written as
\begin{align*}
    \tilde{\omega} &= (\alpha_{N+3} + \alpha_{N+2} - \alpha_{N+4} - \alpha_{N+1}) \wedge \qty(f_{5} - \alpha_{N+1}) \\ 
     &+ \qty(\sum_{l=5}^{N+2} f_{l} \wedge f_{l+1} - f_{l} \wedge \alpha_{N+1} - \alpha_{N+1} \wedge f_{l+1}) + f_{4N+3} \wedge g_{1}- \alpha_{N+1} \wedge g_{1} \\ 
     &+ \qty(\sum_{m=1}^{N-2} g_{m} \wedge g_{m+1} - g_{m} \wedge \alpha_{N+1} - \alpha_{N+1} \wedge g_{m+1}) \\ 
     &+ (g_{N-1} + \alpha_{N+1})\wedge (g_{N} - g_{N+1})
\end{align*}
Combining and cancelling, we obtain 
\begin{align*}
    \tilde{\omega} &= (\alpha_{N+3} + \alpha_{N+2} - \alpha_{N+4}) \wedge \qty(f_{5} - \alpha_{N+1})  \\ 
     &+ \qty(\sum_{l=5}^{N+2} f_{l} \wedge f_{l+1} ) +  f_{4N+3} \wedge g_{1}+ \qty(\sum_{m=1}^{N-2} g_{m} \wedge g_{m+1} ) \\ 
     &+ g_{N-1}\wedge \qty(\alpha_{3N+4} + g_{N}) - \alpha_{N+1}\wedge \qty(\alpha_{3N+4} + g_{N}) - g_{N-1} \wedge \alpha_{N+1}
\end{align*}      
Therefore $\tilde{\omega}$ is expressed as 
\begin{align*}
    \sum_{r<s} \tilde{b}_{rs} \alpha_{r} \wedge \alpha_{s}
\end{align*}
whose coefficients are entries of the $(4N+3) \times (4N+3)$ exchange matrix 
\begin{equation}\label{deformmatrixA2N}
    B_{A_{2N}} =
  \left(\begin{array}{@{}c|c@{}}
  \vb{A}_{2N} & \vb{B}_{2N} \\\hline
   -\vb{B}_{2N}^{T}& \vb{C}_{2N} 
  \end{array}\right)
\end{equation}
which is composed of four block skew-symmetric matrices: a $(2N+3) \times (2N+3)$ matrix $\vb{A}_{2N}$, a $(2N + 3)\times 2N$ matrix $\vb{B}_{2N}$ and a $2N\times 2N$ matrix $\vb{C}_{2N}$ where
\begin{align*}
    &\scalemath{0.8}{\vb{A}_{2N}/ \vb{C}_{2N} =
\left(\begin{array}{@{}ccc|cccc|ccc@{}}
   0 & 1 &0  & 0& & & & \cdots& 0& 1  \\
    -1 & \ddots &\ddots  & & & & &  \\& 
    \ddots & 0 & 1 & 0 & \cdots  & 0 & 0 \\ \hline &  0 & -1 &  & &  & &   0 \\ &  0 & 0 &  & &   \vb{A}_{4} / \vb{C}_{4} & & 0 \\ & & \vdots & &  & & & \vdots \\  & & 0 & & & & & 1 \\ \hline  & & 0 &0 & 0& \cdots& -1 &0 &\ddots & 0 \\
    0 & & 0 &0 & 0& \cdots& 0 &\ddots & \ddots & 1\\
      -1& & 0 &0 & 0& \cdots& 0 &0 & -1 & 0  \\
  \end{array}\right)},\quad 
  \scalemath{0.8}{\vb{B}_{2N} =
  \left(\begin{array}{@{}ccc|cccc|ccc@{}}
   0 & -1 &0  & 0& & & & \cdots& 0& -1  \\
    1 & \ddots &\ddots  & & & & &  \\& 
    \ddots & 0 & -1 & 0 & \cdots  & 0 & 0\\ \hline &  0 & 1 &  & &  & &   0 \\ &  0 & 0 &  & &   \vb{B}_{4} & & 0 \\ & & \vdots & &  & & & \vdots \\  & & 0 & & & & & -1 \\ \hline  & & 0 &0 & 0& \cdots& 1 &0 &\ddots & 0 \\ 0
     & & 0 &0 & 0& \cdots& 0 &\ddots & \ddots & -1\\1
      & & 0 &0 & 0& \cdots& 0 &0 & -1 & 0  \\
  \end{array}\right)} \\[1em]
  \end{align*}
Here, the block matrix has $(\vb{A}_{2N})_{ij}=(\vb{A}_{4})_{ij}$ for $N-2 < i,j < N+6 $, $(\vb{B}_{2N})_{mn}=(\vb{B}_{4})_{mn}$ for $N < m < N+6 $ and $3(N-2) + 7 < n < 3(N-2) + 12$ and $(\vb{C}_{2N})_{rs}=(\vb{C}_{4})_{rs}$ for $3(N-2) +7 <r,s < 3(N-2) + 12$, and
\begin{align*}
\vb{A}_{4} = \mqty(0 & 1 & 0 & 0 & 0 & 0 & 0 \\ -1 & 0 & 1 & 0 & 0 & 0 & 0 \\ 0 & -1 & 0 & 1 & 1 & -1 & 0 \\ 0 & 0 & -1 & 0 & 0 & 0 & -1 \\ 0 & 0 & -1 & 0 & 0 & 0 & -1 \\ 0 & 0 & 1 & 0 & 0 & 0 & 1 \\ 0 & 0 & 0 & 1 & 1 & -1 & 0 ), \qquad \vb{B}_{4} = \mqty(0 & -1 & 1 & 0 \\ 1 & 0 & 0 & 0 \\ 0 & 1 & -1 & 0 \\ 0 & 0 & 0 & 1 \\ 0 & 0 & 0 & 1\\ 0 & 0 & 0 & -1\\ 0 & 0 & 0 & 0), \qquad \vb{C}_{4} = \mqty(0 & 1 & -1 & 0 \\ -1 & 0 & 0 & 0\\ 1 & 0 & 0 & 0\\ 0 & 0 & 0 & 0 )
\end{align*}
Based on the pattern of local expansion from $Q_{A_{4}}$ to $Q_{A_{6}}$, we introduce frozen variables and extend the matrix by extra rows $\vb{b}_{1}$ and $\vb{b}_{2}$ in which entries are 
\begin{equation}
\begin{split}
     &(\vb{b}_{1})_{i}= \delta_{i,N+1} + \delta_{i,N+2} - \delta_{i,N+3} - \delta_{i,N+4}\\[1em]
     &(\vb{b}_{2})_{i}= -\delta_{i,N} - \delta_{i,N+1} + \delta_{i,3N+3} + \delta_{i,3N+4}
    \end{split}
\end{equation}
so that
\begin{equation}\label{exchmA2N}
\begin{split}
    \tilde{B}_{A_{2N}} = \includegraphics[scale=0.45,height=7cm,valign=c]{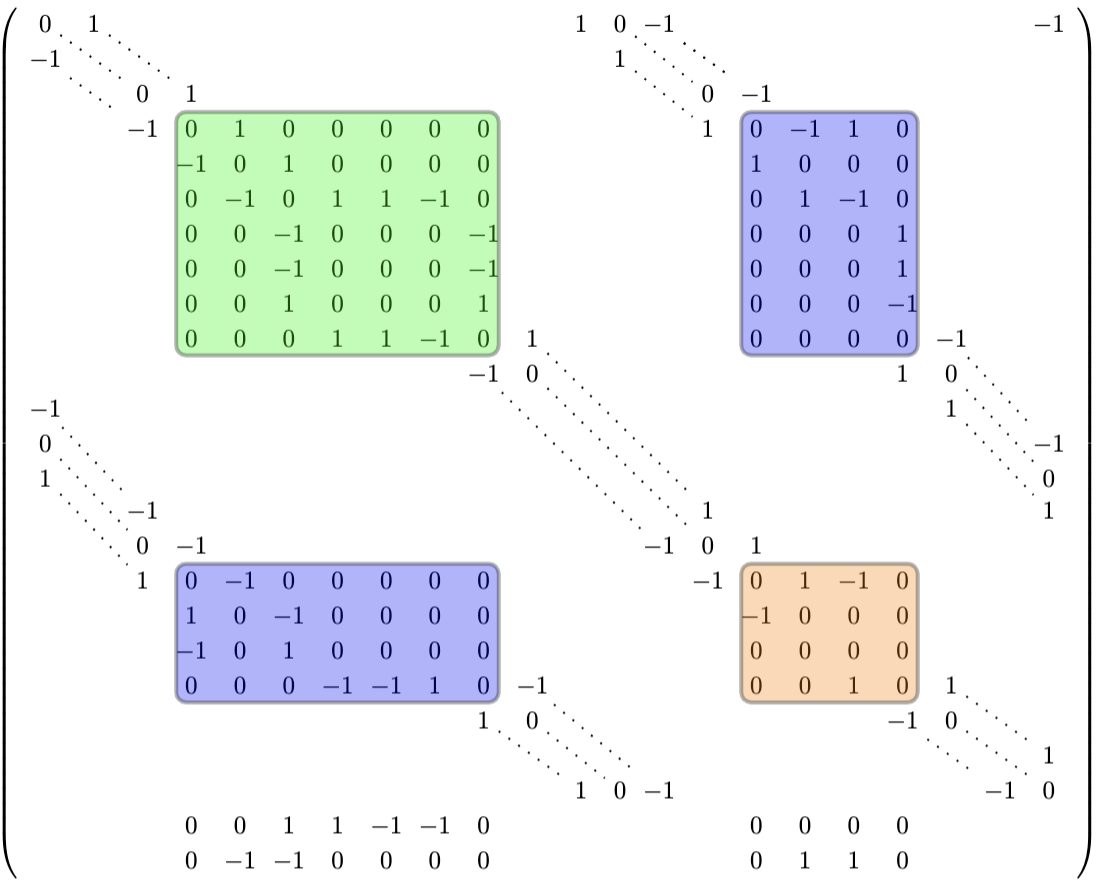}
\end{split}
  \end{equation}
  

By this calculation and formalisation, we have obtained a description of local expansion in terms of exchange matrices, corresponding to the graphical description in terms of quivers.

For the example of $A_{4}$ to $A_{6}$, the following exchange matrix 
\begin{equation}
    \tilde{B}_{A_{4}} =
  \left(\begin{array}{@{}c|c@{}}
   \vb{A}_{4} & \vb{B}_{4} \\\hline
   -\vb{B}_{4}^{T}& \vb{C}_{4} 
  \end{array}\right)
  \end{equation}
represents a quiver in which the edges of a subquiver (which is the four-cycle) have been removed i.e.\ the beginning and end of off-diagonal entries in each block matrix are set to be zero.  The local expansion in Figure~\ref{fig:Localexpansion} is equivalent to inserting new columns and rows in each block matrix, i.e. 
\begin{align*}
    &\vb{A}_{6} =
  \left(\begin{array}{@{}c|cccc|c@{}}
    0 & 1 & 0 & \cdots  & 0 & 1 \\ \hline  -1 &  & &  & &   0 \\ 0 & & &   \vb{A}_{4} & & 0 \\ \vdots & &  & & & \vdots \\ 0 & & & & & 1 \\ \hline -1 &0 & 0& \cdots& -1&0
  \end{array}\right), \qquad 
   \vb{B}_{6} =
  \left(\begin{array}{@{}c|cccc|c@{}}
    0 & -1 & 0 & \cdots  & 0 & -1 \\ \hline  1 &  & &  & &   0 \\ 0 & & \vb{B}_{4} &   & & 0 \\ \vdots & &  & & & \vdots \\ 0 & & & & & -1 \\ \hline 1 &0 & 0& \cdots& 1&0
  \end{array}\right) \\[1em]
  & \vb{C}_{6} =
  \left(\begin{array}{@{}c|cccc|c@{}}
    0 & 1 & 0 & \cdots  & 0 & 1 \\ \hline  -1 &  & &  & &   0 \\ 0 & & &   \vb{C}_{4} & & 0 \\ \vdots & &  & & & \vdots \\ 0 & & & & & 1 \\ \hline -1 &0 & 0& \cdots& -1&0
  \end{array}\right)
  \end{align*}
Such extension leads to the exchange matrix \eqref{exchmA6}, now written as 
\begin{equation}
    \tilde{B}_{A_{6}} =
  \left(\begin{array}{@{}c|c@{}}
   \vb{A}_{6} & \vb{B}_{6} \\\hline
   -\vb{B}_{6}^{T}& \vb{C}_{6} \\
  \end{array}\right)
  \end{equation}
%

\subsection{The deformed type $A_{2N}$  cluster map}
\label{ss:type-A2N-Laurent}

Earlier we proved that the Laurent property of type $A_{6}$ deformed map can be restored by lifting it to a higher-dimensional cluster map defined on the space of tau functions, which is done by finding the particular sequence of mutations preserving the structure of the quiver up to a permutation of the labels. In this section, we use a similar procedure to show that there exists a sequence of mutations such that the structure of the candidate deformed quiver $Q_{A_{2N}}$ (or exchange matrix $\tilde{B}_{A_{2N}}$, as constructed in the previous section) is preserved and show that the corresponding cluster variables can be produced by a two-parameter family of deformed cluster maps corresponding to type $A_{2N}$.

%
\begin{exmp}[Deformed quiver $Q_{A_{8}}$]\label{exmptypeA8}
Let us consider the initial seed formed by the initial cluster  $\qty(\tilde{x}_{1},\tilde{x}_{2},\dots ,\tilde{x}_{4N+3}) =\qty(q_{2},q_{1},q_{0},\tau_{-1},\tau_{0},\tau_{1},\sigma_{0},\sigma_{1},\dots, \sigma_{9},p_{0},p_{1},p_{2},a_{1},a_{2N})$ and exchange matrix
\begin{equation}\label{exchmA8}
    \tilde{B}_{A_{8}} =
  \left(\begin{array}{@{}c|c@{}}
   \vb{A}_{8} & \vb{B}_{8} \\\hline
   -\vb{B}_{8}^{T}& \vb{C}_{8} \\\hline
   \vb{b}^{(1)}_{8} &\vb{b}^{(2)}_{8}
  \end{array}\right) = \includegraphics[scale=0.45,height=7cm,valign=c]{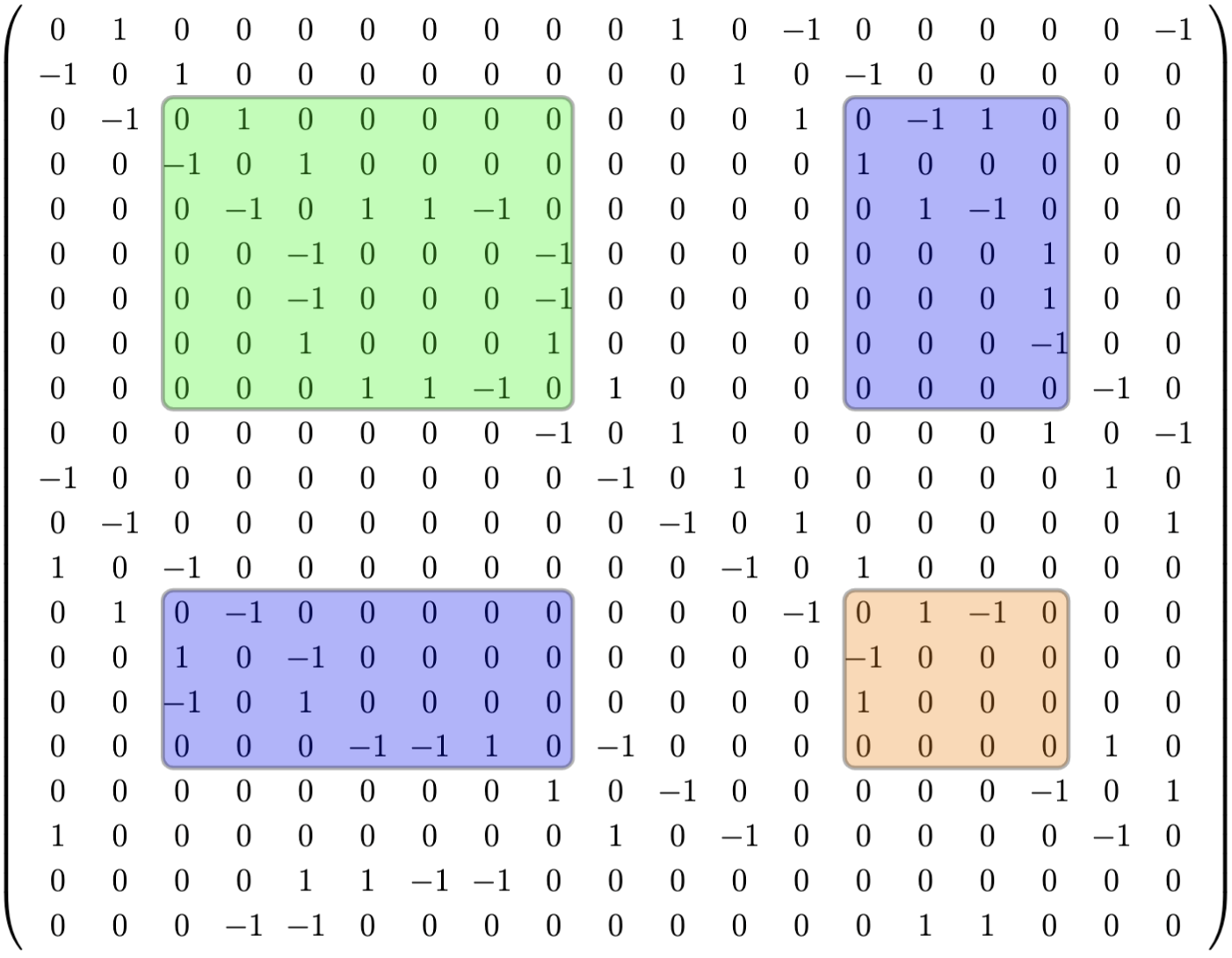}
  \end{equation}
where 
\begin{align*}
    &\scalemath{0.8}{\vb{A}_{8} / \vb{C}_{8} =
  \left(\begin{array}{@{}cc|cccc|cc@{}}
  0 & 1 & 0 & 0 & \cdots  & 0 & 0&  1 \\
    -1 & 0 & 1 & 0 & \cdots  & 0 & 0&  0 \\ \hline  0 & -1 &  & &  & & 0&  0 \\ 0 & 0 & & &   \vb{A}_{4} / \vb{C}_{4} & & 0& 0 \\ 0 & \vdots & & & & & \vdots & \vdots\\0 & 0 & & & & & 1 &  0\\ \hline 0 & 0 &0 & 0& \cdots& -1& 0& 1\\
    -1& 0 & 0 & 0 & \cdots  & 0 &  -1 &  0 \\
  \end{array}
  \right)}, \qquad 
  \scalemath{0.8}{ \vb{B}_{8} =
  \left(\begin{array}{@{}cc|cccc|cc@{}}
  0 & -1 & 0 & 0 & \cdots  & 0 & 0&  -1 \\
    1 & 0 & -1 & 0 & \cdots  & 0 & 0&  0 \\ \hline  0 & 1 &  & &  & & 0&  0 \\ 0 & 0 & &\vb{B}_{4} &    & & 0& 0 \\ 0 & \vdots & \phantom{-1+1} &  & & & \vdots & \vdots\\0 & 0 & & & & & -1 &  0\\ \hline 0 & 0 &0 & 0& \cdots& 1& 0& -1\\
    1& 0 & 0 & 0 & \cdots  & 0 &  1 &  0 \\
  \end{array}\right)} \\[1em]
  & \vb{b}^{(1)}_{8} + \vb{b}^{(2)}_{8} = \left( \begin{array}{c c c c c c c c c c c c c c  c c c c c }0 & 0&  0& 0& 1 & 1 & -1 & -1& 0 & 0 & 0 & 0 & 0 & 0 & 0 & 0 & 0 & 0 & 0 \\ 0 & 0 & 0 & -1 & -1& 0 & 0 & 0 & 0 & 0 & 0 & 0 & 0 & 0 & 1 & 1 & 0& 0 & 0 
  \end{array} \right)\\
  \end{align*}
Reading off the exchange matrix,  the corresponding deformed quiver $Q_{A_{8}}$ can be drawn and this is depicted in Figure~\ref{fig:QuiverA8}. 
 \begin{figure}[h]
\centering
\includegraphics[scale=0.4]{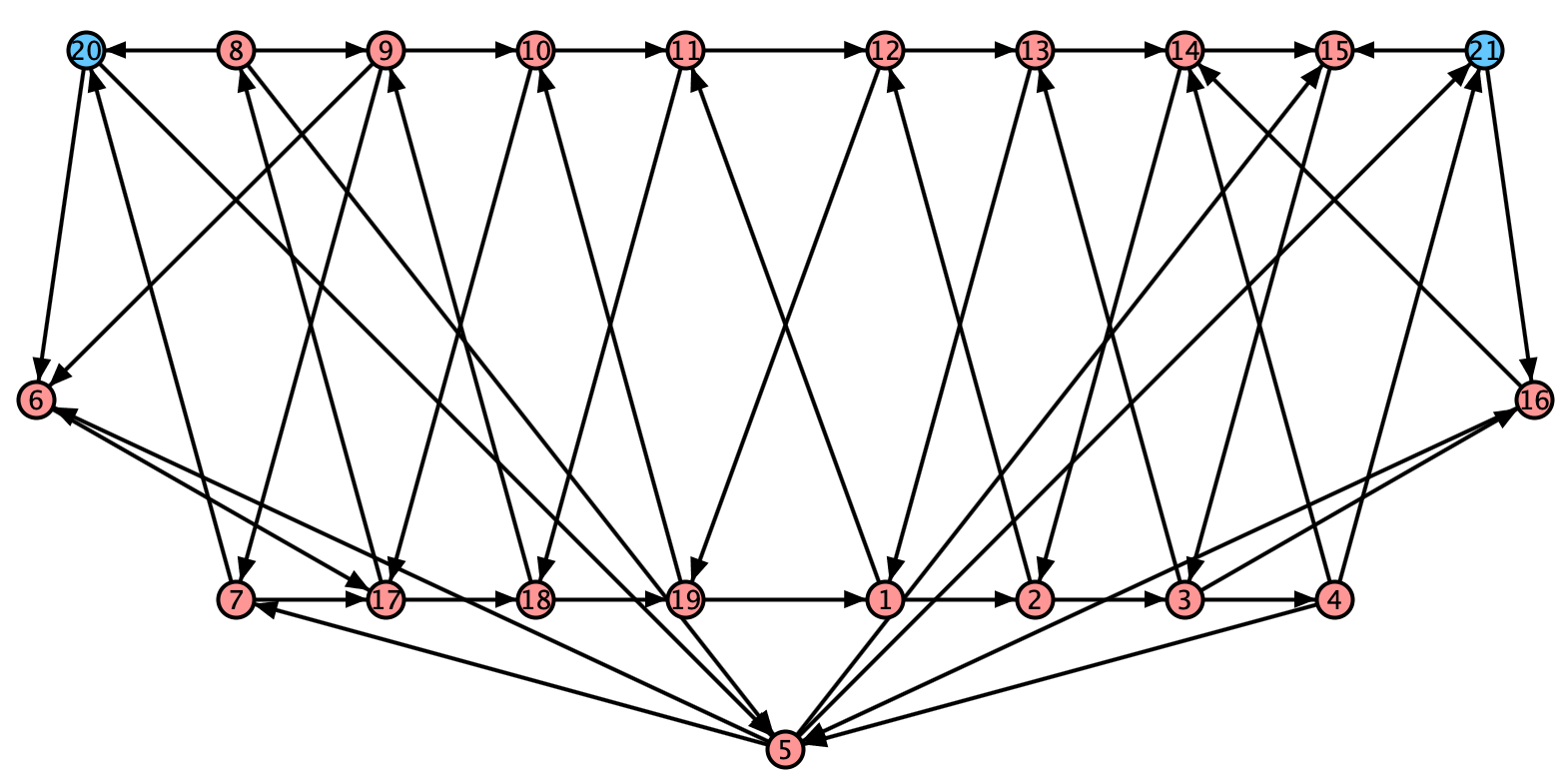}
\caption{ (Candidate) deformed quiver $Q_{A_{8}}$}
\label{fig:QuiverA8}
\end{figure}
Recall that the composition of mutations $\mu_{3}\mu_{2}\mu_{1}\mu_{15}\mu_{14}\mu_{6}$ maintains the form of the deformed quiver $Q_{A_{6}}$ except that the particular labellings of the nodes are permuted. Such mutation periodicity was already observed in the type $A_{4}$ case, in which the relevant sequence of mutations is $\mu_{2}\mu_{1}\mu_{11}\mu_{5}$. 
Comparing the cases, one can deduce the pattern of mutations for the type $A_{8}$, which is given by $\mu_{4}\mu_{3}\mu_{2}\mu_{1}\mu_{19}\mu_{18}\mu_{17}\mu_{7}$.  Then performing the iteration of matrix mutations above gives rise to the following exchange matrix:    
\begin{equation}\label{transformedexchmA8}
\includegraphics[scale=0.45,height=7cm,valign=c]{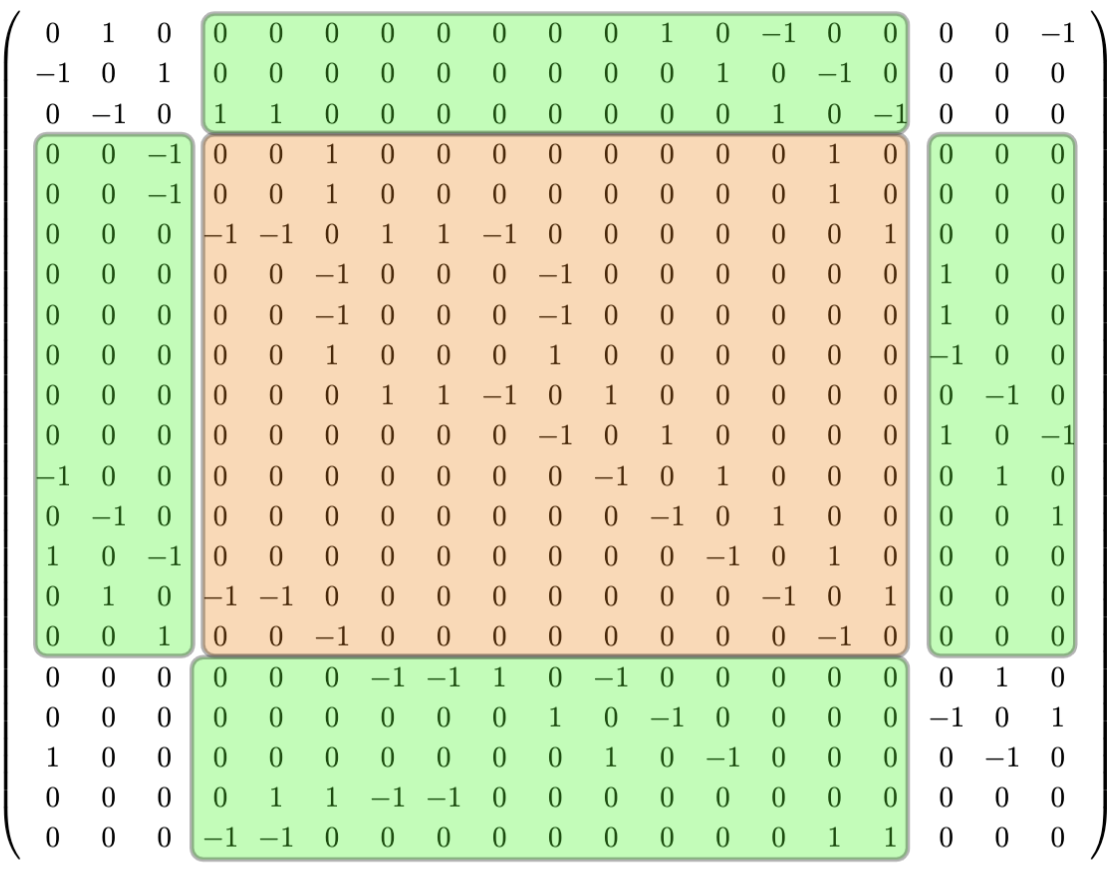}
  \end{equation}
Let us compare the mutated matrix \eqref{transformedexchmA8} with \eqref{exchmA8}. Then one can see that there have been changes in certain regions in the matrix, highlighted above. In the submatrix defined by the orange region, the transformation described by mutations is equivalent to cyclic permuting of the matrix, 
i.e.\ for the $13$-cycle $\rho=(4,5,6,7,8,9,10,11,12,13,14,15,16)$ cyclic permutation, the entries satisfy 
\begin{align*}
    &  b_{\rho(i)\rho(j)} =  b_{i+1,j+1} \quad  \text{for} \ 4\leq i,j\leq 16,
\end{align*}
with 
\begin{equation} \label{setting1}
    b_{i,\rho(16)} = b_{i,1} \ (\text{or} \ b_{\rho(16),j} = b_{1,j}) \ \text{for} \ i,j\in\qty{4,5,\dots,16}.
\end{equation}
 As for the green highlighted submatrices, the entries in the upper and lower rows are shifted to the right by 1. For the left and right end column vectors, the entries are moved downwards by 1, i.e letting $I_{1}=\qty{1,2,3}$ and $I_{2}=\qty{17,18,19}$ and then 
\begin{align*}
b_{l,\rho(j)} = b_{l,j+1} \quad \ \text{for} \ l\in I_{1} \cup I_{2}, \ j\in \qty{4,\dots,16}
\end{align*}
in agreement with \eqref{setting1}. Such a transformation is seen to be the cyclic permutation of the labels $\qty{4,5,\dots,16}$ of the deformed quiver $Q_{A_{8}}$.

Thus by using the same procedure that appeared in the type $A_{6}$ case,  one can construct the cluster map $\psi_{A_{8}} = \rho_{A_{8}}^{-1}\mu_{4}\mu_{3}\mu_{2}\mu_{1}\mu_{19}\mu_{18}\mu_{17}\mu_{7}$ which generates the set of cluster variables 
\begin{equation}\label{eq:systmA8}
\begin{aligned}
    \tau_{n+2}\sigma_{n} &= \sigma_{n+2}\tau_{n} + a_{1}p_{n} \\
    p_{0,n+1}p_{0,n} &= \sigma_{n+3}\sigma_{n+2}\tau_{n}\tau_{n+1} + p_{1,n} \sigma_{n+1}\tau_{n+2} \\ 
    p_{1,n+1}p_{1,n} &= \sigma_{n+4}\sigma_{n+3}\tau_{n}\tau_{n+1} + p_{2,n}p_{0,n+1}\\
    p_{2,n+1}p_{2,n} &= \sigma_{n+5}\sigma_{n+4}\tau_{n}\tau_{n+1} + q_{2,n}p_{1,n+1} \\
    q_{2,n+1}q_{2,n} &= \sigma_{n+6}\sigma_{n+5}\tau_{n}\tau_{n+1} + q_{1,n}p_{2,n+1} \\
    q_{1,n+1}q_{1,n} &= \sigma_{n+7}\sigma_{n+6}\tau_{n}\tau_{n+1} + q_{1,n}q_{1,n+1} \\
    q_{0,n+1}q_{0,n} &= \sigma_{n+8}\sigma_{n+7}\tau_{n}\tau_{n+1} + \sigma_{n+9}q_{1,n+1}\tau_{n-1} \\
    \sigma_{n+9}\tau_{n-1} &= \sigma_{n+7}\tau_{n+1} + a_{8}q_{0,n+1}
    \end{aligned}
\end{equation}
Similarly to the situation of the type $A_6$ case,  the exchange matrix \eqref{exchmA8} can be constructed through pullback of the original symplectic form  by the rational map which is analogous to the variable transformation 
\begin{equation}\label{DVTA8}
    \begin{split}
       &x_{1,n} = \frac{\sigma_{n}\tau_{n+1}}{\sigma_{n+1}\tau_{n}} \quad  x_{2,n} = \frac{p_{0,n}}{\sigma_{n+2}\tau_{n}} \quad  x_{3,n} =\frac{p_{1,n}}{\sigma_{n+3}\tau_{n}} \quad  x_{4,n} =\frac{p_{2,n}}{\sigma_{n+4}\tau_{n}}\\[1em]
    & x_{5,n} =\frac{q_{2,n}}{\sigma_{n+5}\tau_{n}} \quad x_{6,n} =  \frac{q_{1,n}}{\sigma_{n+6}\tau_{n}} \quad x_{7,n} =\frac{q_{0,n}}{\sigma_{n+7}\tau_{n}} \quad x_{8,n} =\frac{\sigma_{n+9}\tau_{n-1}}{\sigma_{n+8}\tau_{n}} 
    \end{split}
\end{equation}
Then by manipulation of equations in \eqref{eq:systmA8} and imposing \eqref{DVTA8}, we obtain the original two-parameter family of maps $\tilde{\varphi}_{A_{8}}$ defined on the initial variables $(x_1,x_2,\dots, x_8)$. Hence it has been shown that the candidate deformed quiver/exchange matrix that emerged from the constructive approach allows us to define the cluster map $\psi_{A_{8}}$ which is a Laurentified deformed map $\tilde{\varphi}_{A_{8}}$.  
    
\end{exmp}

As we have seen in the example, the deformed map $\tilde{\varphi}_{A_{8}}$ can be lifted, through Laurentification, to the cluster map $\psi_{A_{8}}$ that preserves the deformed exchange matrix \eqref{exchmA8}. We now extend this procedure to show that the deformation of type $A_{2N}$ cluster maps $\varphi_{A_{2N}}$ can be lifted to a cluster map $\psi_{A_{2N}}$.

Following the cases of type $A_{4}$, $A_{6}$ and $A_{8}$, one can extrapolate that the quiver $Q_{A_{2N}}$ is mutation periodic under the specific sequence of mutation i.e. $\mu_{i_{1}}\mu_{i_{2}}\cdots\mu_{i_{2N}} (Q_{A_{2N}}) = \rho ( Q_{A_{2N}})$. 
By considering the mutations formed in the same way as the ones considered in type $A_{4}$, $A_{6}$ and $A_{8}$, we have the following statement. 
\begin{prop}\label{propnA2N}
    For each deformed quiver $Q_{A_{2N}}$ with vertices \begin{align*}
    (q_{N-2}\dots&,q_{1},q_{0},\tau_{-1},\tau_{0},\tau_{1},\sigma_{0},\sigma_{1},\sigma_{2},\sigma_{3},\dots, \sigma_{2N+1},p_{0},p_{1},\dots, p_{N-2},a_{1},a_{2N}) \\
    &=(1,2,3,\dots, 4N+3,4N+4,4N+5)
\end{align*}
    we have invariance up to the permutation $ \rho = (N, N+1, \dots, 3N+4) $ under mutation:
\begin{equation}\label{matrixmuA2N}
    \mu_{N}\tilde{\mu}\mu_{N+3} (Q_{A_{2N}}) = \rho(Q_{A_{2N}}) 
    \end{equation}
\end{prop}
where $\tilde{\mu} = \mu_{N-1}\cdots\mu_{1}\mu_{4N+3}\cdots \mu_{3N+6}\mu_{3N+5}$.
\begin{proof}

To explicitly observe the periodicity phenomenon, it is more convenient to approach it from the quiver perspective rather than the exchange matrix.
\begin{figure}[H]

\begin{center}
\resizebox{1\textwidth}{!}{%
 \begin{tikzpicture}[every circle node/.style={draw,scale=0.6,thick},node distance=10mm]

\node [draw,circle,fill=blue!50,"\footnotesize{$4N+4$}"] (aa1) at (0,0) {};
  \node [draw,circle,fill=red!50,"\footnotesize{$N+4$}"] (aa2) [right=of aa1] {};
  \node [draw,circle,fill=red!50,"\footnotesize{$N+5$}"] (aa3) [right=of aa2] {};
  \node [draw,circle,fill=green!50,"\footnotesize{$N+6$}"] (aa4) [right=of aa3] {};
  \node [draw,circle,fill=green!50,"\footnotesize{$N+7$}"] (aa5) [right=of aa4] {};
  \node [draw,circle,fill=red!50,"\footnotesize{$N+2$}" left]  (cc1) [below left=of aa1] {};
  \node  (aa6) [right=of aa5] {};

  \node [draw,circle,fill=red!50,"\footnotesize{$N+3$}" below] (bb1) at (1.25,-2.5) {} ; 
  \node [draw,circle,fill=red!50,"\footnotesize{$3N+5$}" below] (bb2) [right=of bb1] {};
  \node [draw,circle,fill=green!50,"\footnotesize{$3N+6$}" below] (bb3) [right=of bb2] {};
   \node [draw,circle,fill=green!50,"\footnotesize{$3N+7$}" below] (bb4) [right=of bb3] {};
   \node (bb5) [right=of bb4] {};

    \node [draw,circle,fill=red!50,"\footnotesize{$N+1$}" below] (dd1) at (10,-4) {} ;
    
  \begin{scope}[>=Latex]
  
\draw[-> , thick]  (aa1) edge (cc1);  
\draw[-> , thick]  (aa1) edge (dd1);  
\draw[-> , thick]  (aa2) edge (aa1);  
\draw[-> , thick]  (bb1) edge (aa1);  

\draw[-> , thick]  (cc1) edge (bb2);  
\draw[-> , thick]  (dd1) edge (cc1);  
\draw[-> , thick]  (aa1) edge (cc1);  
\draw[-> , thick]  (aa3) edge (cc1);  

\draw[-> , thick]  (bb1) edge (aa1);  
\draw[-> , thick]  (bb1) edge (bb2);  
\draw[-> , thick]  (dd1) edge (bb1);  
\draw[-> , thick]  (aa3) edge (bb1);  

\draw[-> , thick]  (aa2) edge (aa1);  
\draw[-> , thick]  (aa2) edge (dd1);  
\draw[-> , thick]  (aa2) edge (aa3);  
\draw[-> , thick]  (bb2) edge (aa2); 

\draw[-> , thick]  (aa3) edge (aa4);  
\draw[-> , thick]  (bb3) edge (aa3);  

\draw[-> , thick]  (aa4) edge (bb2);  
\draw[-> , thick]  (aa4) edge (aa5);  
\draw[-> , thick]  (bb4) edge (aa4); 

\draw[-> , thick]  (aa5) edge (bb3);
\draw[-> , thick]  (aa5) edge (aa6);

\draw[-> , thick]  (bb2) edge (bb3); 
\draw[-> , thick]  (bb3) edge (bb4);
\draw[-> , thick]  (bb4) edge (bb5);

\draw[-> , thick]  (aa6) edge (bb4);
\draw[-> , thick]  (bb5) edge (aa5);

\end{scope}

 \node [draw,circle,fill=green!50,"\footnotesize{$3N+1$}"] (a3) at (15.5,0) {};
  \node [draw,circle,fill=red!50,"\footnotesize{$3N+2$}"] (a4) [right=of a3] {};
  \node [draw,circle,fill=red!50,"\footnotesize{$3N+3$}"] (a5) [right=of a4] {};
  \node [draw,circle,fill=blue!50,"\footnotesize{$4N+5$}"] (a6) [right=of a5] {};
  \node [draw,circle,fill=red!50,"\footnotesize{$3N+4$}" right] (c2) [below right=of a6] {};
  \node (a1) [left=of a3] {};
   \node (a0) [left=of a1] {};
   \node (A0) [draw,circle,fill=green!50,"\footnotesize{$2N+5$}"] [left=of a0] {};
   \node (A1) [draw,circle,fill=green!50,"\footnotesize{$2N+4$}"][left=of A0] {};
   \node (A2) [draw,circle,fill=green!50,"\footnotesize{$2N+3$}"][left=of A1] {};
   \node (A3)[left=of A2] {};

  \node [draw,circle,fill=green!50,"\footnotesize{$N-2$}" below](b2) at (15.5,-2.5) {};
  \node (b1) [left=of b2] {};
  \node (b0) [left=of b1] {};
  \node (B0)[draw,circle,fill=green!50,"\footnotesize{$2$}" below] [left=of b0] {};
  \node (B1)[draw,circle,fill=green!50,"\footnotesize{$1$}" below] [left=of B0] {};
  \node (B2)[draw,circle,fill=green!50,"\footnotesize{$4N+3$}" below] [left=of B1] {};
  \node (B3) [left=of B2] {};
  \node [draw,circle,fill=red!50,"\footnotesize{$N-1$}" below] (b3) [right=of b2] {};
   \node [draw,circle,fill=red!50,"\footnotesize{$N$}" below] (b4) [right=of b3] {};

\draw[decorate sep={1mm}{5mm},fill] (6.5,-1.25) -- (7.5,-1.25);
\draw[decorate sep={1mm}{4mm},fill] (13.25,-1.25) -- (14.25,-1.25);

  \begin{scope}[>=Latex]

    \draw[-> , thick]  (a1) edge  (a3);
   \draw[-> , thick]  (a3) edge  (a4);
    \draw[-> , thick]  (a4) edge  (a5);
    \draw[-> , thick]  (a6) edge  (a5);

     \draw[-> , thick]  (b1) edge  (b2);
   \draw[-> , thick]  (b2) edge  (b3);
   \draw[-> , thick]  (b3) edge  (b4);

 \draw[-> , thick]  (b2) edge  (a1);
   \draw[-> , thick]  (a4) edge  (b2);

 \draw[-> , thick]  (b3) edge  (a3);
   \draw[-> , thick]  (b3) edge  (c2);
    \draw[-> , thick]  (a5) edge  (b3);

     \draw[-> , thick]  (b4) edge  (dd1);
   \draw[-> , thick]  (b4) edge  (a4);
   \draw[-> , thick]  (b4) edge  (a4);
\draw[-> , thick]  (b4) edge  (a6);

\draw[-> , thick]  (c2) edge  (a4);
\draw[-> , thick]  (c2) edge[bend right=5]  (dd1);
\draw[-> , thick]  (b4) edge  (a4);
\draw[-> , thick]  (a6) edge  (c2);
    
   \draw[-> , thick]  (dd1) edge  (a5); 
   \draw[-> , thick]  (b4) edge  (a4);

\draw[-> , thick]  (dd1) edge  (a6); 

\draw[-> , thick]  (a3) edge  (b1); 

\draw[-> , thick]  (B0) edge  (b0); 
\draw[-> , thick]  (B1) edge  (B0);
\draw[-> , thick]  (B2) edge  (B1); 
\draw[-> , thick]  (B3) edge  (B2);

\draw[-> , thick]  (b0) edge  (A0); 
\draw[-> , thick]  (a0) edge  (B0);
\draw[-> , thick]  (B0) edge  (A1); 
\draw[-> , thick]  (A0) edge  (B1);
\draw[-> , thick]  (B1) edge  (A2); 
\draw[-> , thick]  (A1) edge  (B2);
\draw[-> , thick]  (B2) edge  (A3); 
\draw[-> , thick]  (A2) edge  (B3);

\draw[-> , thick]  (A3) edge  (A2); 
\draw[-> , thick]  (A2) edge  (A1); 
\draw[-> , thick]  (A1) edge  (A0); 
\draw[-> , thick]  (A0) edge  (a0);

    \end{scope}

\end{tikzpicture}
}
\end{center}
 \caption{Quiver, $Q_{A_{2N}}$, corresponding to \eqref{exchmA2N}.The red vertices indicate the nodes that already existed in $Q_{A_{4}}$ while the green vertices represent those newly included through local expansion. }
    \label{fig:QA2N}
\end{figure}
 Applying the mutations $\mu_{3N+6}\mu_{3N+5}\mu_{N+3}$ sequentially yields the following quiver.  
\begin{center}
\resizebox{0.85\textwidth}{!}{%
 \begin{tikzpicture}[every circle node/.style={draw,scale=0.6,thick},node distance=10mm]
  \node [draw,circle,fill=blue!50,"{\footnotesize{$4N+4$}}"] (a1) at (0,0) {};
  \node [draw,circle,fill=red!50,"{\footnotesize{$N+4$}}"] (a2) [right=of a1] {};
  \node [draw,circle,fill=red!50,"{\footnotesize{$N+5$}}"] (a3) [right=of a2] {};
  \node [draw,circle,fill=green!50,"{\footnotesize{$N+6$}}"] (a4) [right=of a3] {};
  \node [draw,circle,fill=green!50,"{\footnotesize{$N+7$}}"] (a5) [right=of a4] {};
  \node [draw,circle,fill=red!50,"{\footnotesize{$N+2$}}" left]  (c1) [below left=of a1] {};
     \node (a6)[right=of a5]{};
  
  \node [draw,circle,fill=red!50,"{\footnotesize{$N+3$}}" below] (b1) at (1.25,-2.5) {} ; 
  \node [draw,circle,fill=red!50,"{\footnotesize{$3N+5$}}" below] (b2) [right=of b1] {};
  \node [draw,circle,fill=green!50,"{\footnotesize{$3N+6$}}" below] (b3) [right=of b2] {};
   \node [draw,circle,fill=green!50,"{\footnotesize{$3N+7$}}" below] (b4) [right=of b3] {};
   \node (b5)[right=of b4]{};
    \node (b6) at (6.5,-3) {};
    \node (b7) at (6.5,-3.2) {};
    \node (b8) at (6.5,-3.4) {};
    \node (b9) at (6.5,-3.6) {};

    \node [draw,circle,fill=red!50,"{\footnotesize{$N+1$}}" below] (d1) at (5,-4) {} ;

\draw [-{Latex[length=3mm]}] (-3.5,-1.2) -- (-2.5,-1.2) node[midway,sloped,above] {$\mu_{N+3}$};

  \begin{scope}[>=Latex]

  \draw[-> , thick]  (a2) edge (a1); 
    \draw[-> , thick]  (a2) edge (a3); 
  \draw[-> , thick]  (a3) edge (a4); 
  \draw[-> , thick]  (a4) edge (a5); 

    \draw[-> , thick]  (a3) edge (c1); 
    \draw[-> , thick]  (c1) edge (b2);
    \draw[-> , thick]  (a1) edge (c1); 

    \draw[-> , thick]  (a1) edge (b1); 
    \draw[-> , thick]  (a2) edge (d1); 
    \draw[-> , thick]  (a3) edge[bend right=35] (a1); 
    
    \draw[- , thick]  (d1) edge (b6);
     \draw[- , thick]  (d1) edge (b7);
     \draw[-> , thick]  (b8) edge (d1);
     \draw[-> , thick]  (b9) edge (d1);

    \draw[-> , thick]  (b2) edge (b1);
    \draw[-> , thick]  (b2) edge (b3); 
    \draw[-> , thick]  (b3) edge (b4); 
    \draw[-> , thick]  (b1) edge (d1); 
    \draw[-> , thick]  (d1) edge (b2); 

    \draw[-> , thick]  (b1) edge (a3);
    \draw[-> , thick]  (b2) edge (a2);

    \draw[-> , thick]  (a3) edge (b2); 
    \draw[-> , thick]  (a4) edge (b2); 
    \draw[-> , thick]  (b3) edge (a3); 
    \draw[-> , thick]  (a5) edge (b3); 
    \draw[-> , thick]  (b4) edge (a4); 

    \draw[-> , thick] (d1) edge[bend left=35] (c1);
    \draw[-> , thick]  (a5) edge (a6); 
     \draw[-> , thick]   (b4) edge (b5) ;
     \draw[-> , thick]   (b5) edge (a5) ;
    \draw[-> , thick]   (a6) edge (b4); 
    \end{scope}
\draw [-{Latex[length=3mm]}] (6.5,-1.2) -- (7.5,-1.2) node[midway,sloped,above] {$\mu_{3N+5}$};

 \node [draw,circle,fill=blue!50,"{\footnotesize{$4N+4$}}"] (a1) at (10,0) {};
  \node [draw,circle,fill=red!50,"{\footnotesize{$N+4$}}"] (a2) [right=of a1] {};
  \node [draw,circle,fill=red!50,"{\footnotesize{$N+5$}}"] (a3) [right=of a2] {};
  \node [draw,circle,fill=green!50,"{\footnotesize{$N+6$}}"] (a4) [right=of a3] {};
  \node [draw,circle,fill=green!50,"{\footnotesize{$N+7$}}"] (a5) [right=of a4] {};
  \node [draw,circle,fill=red!50,"{\footnotesize{$N+2$}}" left]  (c1) [below left=of a1] {};
     \node (a6)[right=of a5]{};
  
  \node [draw,circle,fill=red!50,"{\footnotesize{$N+3$}}" below] (b1) at (11.25,-2.5) {} ; 
  \node [draw,circle,fill=red!50,"{\footnotesize{$3N+5$}}" below] (b2) [right=of b1] {};
  \node [draw,circle,fill=green!50,"{\footnotesize{$3N+6$}}" below] (b3) [right=of b2] {};
   \node [draw,circle,fill=green!50,"{\footnotesize{$3N+7$}}" below] (b4) [right=of b3] {};
   \node (b5)[right=of b4]{};
    \node (b6) at (16.5,-3) {};
    \node (b7) at (16.5,-3.2) {};
    \node (b8) at (16.5,-3.4) {};
    \node (b9) at (16.5,-3.6) {};

    \node [draw,circle,fill=red!50,"{\footnotesize{$N+1$}}" below] (d1) at (15,-4) {} ;

  \begin{scope}[>=Latex]
  \draw[-> , thick] (a2) edge (a1);
  \draw[-> , thick] (a3) edge [bend right=60] (a1);
  \draw[-> , thick]  (a4) edge [bend right=60] (a2); 
  \draw[-> , thick]  (a4) edge (a3); 
  \draw[-> , thick]  (a4) edge (a5); 
   \draw[-> , thick]  (b4) edge (a4); 
   \draw[-> , thick]  (a1) edge  (c1);
    \draw[-> , thick]  (a3) edge (c1); 
    \draw[-> , thick]  (c1) edge (b3); 
    \draw[-> , thick]  (c1) edge (a2); 
    \draw[-> , thick]  (b2) edge (c1);
    \draw[-> , thick]  (c1) edge (b1); 

     \draw[- , thick]  (d1) edge (b6);
     \draw[- , thick]  (d1) edge (b7);
     \draw[-> , thick]  (b8) edge (d1);
     \draw[-> , thick]  (b9) edge (d1);

    \draw[-> , thick]  (b3) edge (b2); 
    \draw[-> , thick]  (b3) edge (b4); 
    \draw[-> , thick]  (b1) edge (b2);
    
    \draw[-> , thick]  (d1) edge (b3); 
     \draw[-> , thick]  (b2) edge  (d1); 
    
    \draw[-> , thick]  (a4) edge (b1); 
     \draw[-> , thick]  (b2) edge (a4); 
    \draw[-> , thick]  (a1) edge (b1);
    \draw[-> , thick]  (a2) edge (b2); 
    \draw[-> , thick]  (b2) edge (a3); 
    \draw[-> , thick]  (a4) edge (b3); 
    \draw[-> , thick]  (a5) edge (b3); 

    \draw[-> , thick] (d1) edge[bend left=35] (c1);
    \draw[-> , thick]  (a5) edge (a6); 
    \draw[-> , thick]  (b4) edge (b5); 
    \draw[-> , thick]  (a6) edge (b4); 
    \draw[-> , thick]  (b5) edge (a5); 
    \draw[-> , thick]  (a5) edge (b2);

    \end{scope}

\end{tikzpicture}
}
\end{center}

\begin{center}
\resizebox{0.5\textwidth}{!}{%
 \begin{tikzpicture}[every circle node/.style={draw,scale=0.6,thick},node distance=10mm]
 
 \node [draw,circle,fill=blue!50,"{\footnotesize{$4N+4$}}"] (a1) at (0,0) {};
  \node [draw,circle,fill=red!50,"{\footnotesize{$N+4$}}"] (a2) [right=of a1] {};
  \node [draw,circle,fill=red!50,"{\footnotesize{$N+5$}}"] (a3) [right=of a2] {};
  \node [draw,circle,fill=green!50,"{\footnotesize{$N+6$}}"] (a4) [right=of a3] {};
  \node [draw,circle,fill=green!50,"{\footnotesize{$N+7$}}"] (a5) [right=of a4] {};
  \node [draw,circle,fill=red!50,"{\footnotesize{$N+2$}}" left]  (c1) [below left=of a1] {};
     \node (a6)[right=of a5]{};
  
  \node [draw,circle,fill=red!50,"{\footnotesize{$N+3$}}" below] (b1) at (1.25,-2.5) {} ; 
  \node [draw,circle,fill=red!50,"{\footnotesize{$3N+5$}}" below] (b2) [right=of b1] {};
  \node [draw,circle,fill=green!50,"{\footnotesize{$3N+6$}}" below] (b3) [right=of b2] {};
   \node [draw,circle,fill=green!50,"{\footnotesize{$3N+7$}}" below] (b4) [right=of b3] {};
   \node (b5)[right=of b4]{};
   \node (b6) at (6.5,-3) {};
    \node (b7) at (6.5,-3.2) {};
    \node (b8) at (6.5,-3.4) {};
    \node (b9) at (6.5,-3.6) {};

    \node [draw,circle,fill=red!50,"{\footnotesize{$N+1$}}" below] (d1) at (5,-4) {} ;

\draw [-{Latex[length=3mm]}] (-3.5,-1.2) -- (-2.5,-1.2) node[midway,sloped,above] {$\mu_{3N+6}$};

  \begin{scope}[>=Latex]
  \draw[-> , thick] (a2) edge (a1);
  \draw[-> , thick] (a3) edge [bend right=60] (a1);
  \draw[-> , thick]  (a4) edge [bend right=60] (a2); 
  \draw[-> , thick]  (a3) edge (a4); 
  \draw[-> , thick]  (a4) edge (a5); 
   \draw[-> , thick]  (a1) edge  (c1);
    \draw[-> , thick]  (a3) edge (c1); 
    \draw[-> , thick]  (b3) edge (c1); 
    \draw[-> , thick]  (c1) edge (a2); 
    \draw[-> , thick]  (c1) edge (b4);
    \draw[-> , thick]  (c1) edge (b1); 
    \draw[-> , thick]  (b1) edge (b2);
    \draw[-> , thick]  (b2) edge (b3); 
    \draw[-> , thick]  (b4) edge (b3); 
    \draw[-> , thick]  (b1) edge (b2);

    \draw[- , thick]  (d1) edge (b6);
     \draw[- , thick]  (d1) edge (b7);
     \draw[-> , thick]  (b8) edge (d1);
     \draw[-> , thick]  (b9) edge (d1);
    
    \draw[-> , thick]  (b3) edge (d1); 
     \draw[-> , thick]  (d1) edge (b4); 
    
    \draw[-> , thick]  (a4) edge (b1); 
    \draw[-> , thick]  (a1) edge (b1);
    \draw[-> , thick]  (a2) edge (b2); 
    \draw[-> , thick]  (b2) edge (a3); 
    \draw[-> , thick]  (b3) edge (a4); 
    \draw[-> , thick]  (b3) edge (a5); 
   \draw[-> , thick]  (a5) edge (b4); 

    \draw[-> , thick] (d1) edge[bend left=35] (c1);
    \draw[-> , thick]  (a5) edge (a6); 
    \draw[-> , thick]  (b4) edge (b5); 
    \draw[-> , thick]  (a6) edge (b4); 
    \draw[-> , thick]  (b5) edge (a5); 
    \draw[-> , thick]  (a5) edge (b2);

    \end{scope}

\end{tikzpicture}
}
\end{center}
Let us consider on the subquiver of the mutated quiver $\mu_{3N+6}\mu_{3N+5}\mu_{N+3}(Q_{A_{2N}})$ (1) as shown below. By applying the mutation at the node $3N+6$, we get the following quiver (2).  
\begin{center}
\resizebox{1\textwidth}{!}{%
 \begin{tikzpicture}[every circle node/.style={draw,scale=0.6,thick},node distance=10mm]

\node at (3.7,-5.5) {(1)};
\node at (13.7,-5.5) {(2)};

  \node (a1) at (0,0) {};
  
  \node [draw,circle,fill=green!50,"{\footnotesize{$N+6$}}"] (a4) [right=of a3] {};
  \node [draw,circle,fill=green!50,"{\footnotesize{$N+7$}}"] (a5) [right=of a4] {};
  \node [draw,circle,fill=red!50,"{\footnotesize{$N+2$}}" left]  (c1) [below left=of a1] {};
     \node [draw,circle,fill=green!50,"{\footnotesize{$N+8$}}"](a6)[right=of a5]{};
  
  \node [draw,circle,fill=red!50,"{\footnotesize{$N+3$}}" below] (b1) at (1.25,-2.5) {} ; 
  \node [draw,circle,fill=red!50,"{\footnotesize{$3N+5$}}" below] (b2) [right=of b1] {};
  \node [draw,circle,fill=green!50,"{\footnotesize{$3N+6$}}" below] (b3) [right=of b2] {};
   \node [draw,circle,fill=green!50,"{\footnotesize{$3N+7$}}" below] (b4) [right=of b3] {};
   \node[draw,circle,fill=green!50,"{\footnotesize{$3N+8$}}" below] (b5)[right=of b4]{};

    \node [draw,circle,fill=red!50,"{\footnotesize{$N+1$}}" below] (d1) at (5,-4) {} ;

  \begin{scope}[>=Latex]

  \draw[-> , thick]  (a4) edge (a5);

    \draw[-> , thick]  (b2) edge (c1); 

    \draw[-> , thick]  (c1) edge (b3);

    \draw[-> , thick]  (b1) edge (b2);
    \draw[-> , thick]  (b3) edge (b2); 
    \draw[-> , thick]  (b3) edge (b4); 
    \draw[-> , thick]  (b1) edge (b2); 
    \draw[-> , thick]  (b2) edge (d1); 
    \draw[-> , thick]  (d1) edge (b3); 
    \draw[-> , thick]  (a4) edge (b1);

    \draw[-> , thick]  (b2) edge (a4); 
    \draw[-> , thick]  (a4) edge (b3); 
    \draw[-> , thick]  (a5) edge (b3); 
    \draw[-> , thick]  (b4) edge (a4);

    \draw[-> , thick]  (a5) edge (a6); 
     \draw[-> , thick]   (b4) edge (b5) ;
     \draw[-> , thick]   (b5) edge (a5) ;
    \draw[-> , thick]   (a6) edge (b4); 
    \end{scope}
    
\draw [-{Latex[length=3mm]}] (6.5,-1.2) -- (7.5,-1.2) node[midway,sloped,above] {$\mu_{3N+6}$};

\node (a1) at (10,0) {};
  \node (a2) [right=of a1] {};
  \node (a3) [right=of a2] {};
  \node [draw,circle,fill=green!50,"{\footnotesize{$N+6$}}"] (a4) [right=of a3] {};
  \node [draw,circle,fill=green!50,"{\footnotesize{$N+7$}}"] (a5) [right=of a4] {};
  \node [draw,circle,fill=red!50,"{\footnotesize{$N+2$}}" left]  (c1) [below left=of a1] {};
     \node [draw,circle,fill=green!50,"{\footnotesize{$N+8$}}"](a6)[right=of a5]{};
  
  \node [draw,circle,fill=red!50,"{\footnotesize{$N+3$}}" below] (b1) at (11.25,-2.5) {} ; 
  \node [draw,circle,fill=red!50,"{\footnotesize{$3N+5$}}" below] (b2) [right=of b1] {};
  \node [draw,circle,fill=green!50,"{\footnotesize{$3N+6$}}" below] (b3) [right=of b2] {};
   \node [draw,circle,fill=green!50,"{\footnotesize{$3N+7$}}" below] (b4) [right=of b3] {};
   \node[draw,circle,fill=green!50,"{\footnotesize{$3N+8$}}" below] (b5)[right=of b4]{};

    \node [draw,circle,fill=red!50,"{\footnotesize{$N+1$}}" below] (d1) at (15,-4) {} ;

  \begin{scope}[>=Latex]

  \draw[-> , thick]  (a4) edge (a5);

    \draw[-> , thick]  (b3) edge (c1); 
    
    \draw[-> , thick]  (c1) edge (b4);
   
    \draw[-> , thick]  (b1) edge (b2);
    \draw[-> , thick]  (b2) edge (b3); 
    \draw[-> , thick]  (b4) edge (b3); 
    \draw[-> , thick]  (b1) edge (b2);
    
    \draw[-> , thick]  (b3) edge (d1); 
     \draw[-> , thick]  (d1) edge  (b4); 
    
    \draw[-> , thick]  (a4) edge (b1);

    \draw[-> , thick]  (b3) edge (a4); 
    \draw[-> , thick]  (b3) edge (a5); 
   \draw[-> , thick]  (a5) edge (b4);

    \draw[-> , thick]  (a5) edge (a6); 
    \draw[-> , thick]  (b4) edge (b5); 
    \draw[-> , thick]  (a6) edge (b4); 
    \draw[-> , thick]  (b5) edge (a5); 
    \draw[-> , thick]  (a5) edge (b2);

    \end{scope}

\end{tikzpicture}
}
\end{center}
As we can see, the nodes (N+1, N+2, N+7, 3N+5, 3N+6, 3N+7,3N+8) in the quiver (2) are connected in the same way as the one with the nodes (N+1, N+2, N+3, N+6, 3N+5, 3N+6, 3N+7) in the quiver (1). Consequently, applying the quiver mutation at node 3N+7 results in the same transformation. Since the connection patterns between the nodes on the right-hand side of the mutated quiver remain consistent across certain portions, the transformation persists throughout the mutations. By sequentially applying the mutations $\mu_{N-2} \cdots \mu_{2}\mu_{1}\mu_{4N+3}\dots \mu_{3N+8}\mu_{3N+7}$, we see the following quivers. 
\begin{center}
\resizebox{1\textwidth}{!}{%
 \begin{tikzpicture}[every circle node/.style={draw,scale=0.6,thick},node distance=10mm]
 
  \draw [-{Latex[length=3mm]}] (-2.5,-1.2) -- (-1.5,-1.2) node[midway,sloped,above] {$\mu_{3N+6}$};
  
  \node [draw,circle,fill=green!50,"{\footnotesize{$N+7$}}"] (a3) at (1.265,0) {};
  \node [draw,circle,fill=green!50,"{\footnotesize{$N+8$}}"] (a4) [right=of a3] {};
  \node [draw,circle,fill=green!50,"{\footnotesize{$N+9$}}"] (a5) [right=of a4] {};
  \node [draw,circle,fill=green!50,"{\footnotesize{$N+10$}}"] (a6) [right=of a5] {};
  \node [draw,circle,fill=red!50,"{\footnotesize{$N+2$}}"]  (c1) at (-1,-0.5) {};
  \node (a1) [left=of a3]{};
  \node (a7) [right=of a6]{};
  
  \node [draw,circle,fill=green!50,"{\footnotesize{$3N+6$}}" below] (b2) at (1.265,-2.5) {};
  \node [draw,circle,fill=green!50,"{\footnotesize{$3N+7$}}" below] (b3) [right=of b2] {};
   \node [draw,circle,fill=green!50,"{\footnotesize{$3N+8$}}" below] (b4) [right=of b3] {};
\node [draw,circle,fill=green!50,"{\footnotesize{$3N+9$}}" below] (b5) [right=of b4] {};
\node (b1) [left=of b2]{};
\node (b6) [right=of b5]{};

    \node [draw,circle,fill=red!50,"{\footnotesize{$N+1$}}" below] (d1) at (5,-4) {} ;

  \begin{scope}[>=Latex]

    \draw[-> ,thick]  (b2) edge (c1);  
    \draw[-> , thick]  (c1) edge (b3); 
    \draw[-> , thick]  (a3) edge  (a4);
    \draw[-> , thick]  (a4) edge  (a5);
     \draw[-> , thick]  (a5) edge  (a6);

       \draw[-> , thick]  (b5) edge (b6); 
    \draw[-> , thick]  (b6) edge (a6); 
    \draw[-> , thick]  (a7) edge (b5); 
    \draw[-> , thick]  (a6) edge (a7); 

      \draw[-> , thick]  (a1) edge  (a3);
       \draw[-> , thick]  (b1) edge  (b2);
       \draw[- , thick]  (a3) edge  (0.2,-0.8); 
       \draw[-> , thick]  (a4) edge  (b1);
    
    \draw[-> , thick]  (b3) edge (b2); 
    \draw[-> , thick]  (b3) edge (b4); 
    \draw[-> , thick]  (b4) edge (b5); 
    
    \draw[-> , thick]  (b2) edge (d1); 
    \draw[-> , thick]  (d1) edge (b3);

    \draw[-> , thick]  (b2) edge (a3); 
    \draw[-> , thick]  (b2) edge (a4); 
    \draw[-> , thick]  (a4) edge (b3); 
    \draw[-> , thick]  (a5) edge (b3); 
    \draw[-> , thick]  (b4) edge (a4); 

    \draw[-> , thick] (d1) edge[bend left=35] (c1);
    \draw[-> , thick]  (b5) edge (a5); 
    \draw[-> , thick]  (a6) edge (b4); 
    \end{scope}
\draw [-{Latex[length=3mm]}] (6.5,-1.2) -- (7.5,-1.2) node[midway,sloped,above] {$\mu_{3N+7}$};
\node [draw,circle,fill=green!50,"{\footnotesize{$N+7$}}"] (a3) at (10.265,0) {};
  \node [draw,circle,fill=green!50,"{\footnotesize{$N+8$}}"] (a4) [right=of a3] {};
  \node [draw,circle,fill=green!50,"{\footnotesize{$N+9$}}"] (a5) [right=of a4] {};
  \node [draw,circle,fill=green!50,"{\footnotesize{$N+10$}}"] (a6) [right=of a5] {};
  \node [draw,circle,fill=red!50,"{\footnotesize{$N+2$}}"]  (c1) at (8,-0.5) {};
  \node (a1) [left=of a3]{};
  \node (a7) [right=of a6]{};
  
  \node [draw,circle,fill=green!50,"{\footnotesize{$3N+6$}}" below] (b2) at (10.265,-2.5) {};
  \node [draw,circle,fill=green!50,"{\footnotesize{$3N+7$}}" below] (b3) [right=of b2] {};
   \node [draw,circle,fill=green!50,"{\footnotesize{$3N+8$}}" below] (b4) [right=of b3] {};
\node [draw,circle,fill=green!50,"{\footnotesize{$3N+9$}}" below] (b5) [right=of b4] {};
\node (b1) [left=of b2]{};
\node (b6) [right=of b5]{};

    \node [draw,circle,fill=red!50,"{\footnotesize{$N+1$}}" below] (d1) at (14,-4) {} ;

  \begin{scope}[>=Latex]

    \draw[-> , thick]  (b3) edge (c1);  
    \draw[-> , thick]  (c1) edge (b4); 
    \draw[-> , thick]  (a3) edge  (a4);
    \draw[-> , thick]  (a4) edge  (a5);
     \draw[-> , thick]  (a5) edge  (a6);

        \draw[-> , thick]  (a1) edge  (a3);
       \draw[-> , thick]  (b1) edge  (b2);
       \draw[- , thick]  (a3) edge  (9.2,-0.8); 
       \draw[-> , thick]  (a4) edge  (b1);

       \draw[-> , thick]  (b5) edge (b6); 
    \draw[-> , thick]  (b6) edge (a6); 
    \draw[-> , thick]  (a7) edge (b5); 
    \draw[-> , thick]  (a6) edge (a7); 
    
    \draw[-> , thick]  (b2) edge (b3); 
    \draw[-> , thick]  (b4) edge (b3); 
    \draw[-> , thick]  (b4) edge (b5); 
    
    \draw[-> , thick]  (b3) edge (d1); 
    \draw[-> , thick]  (d1) edge (b4);

    \draw[-> , thick]  (b2) edge (a3); 
    
    \draw[-> , thick]  (b3) edge (a4); 
    \draw[-> , thick]  (b3) edge (a5); 
    \draw[-> , thick]  (a5) edge (b4); 
        \draw[-> , thick]  (a5) edge (b2); 

    \draw[-> , thick] (d1) edge[bend left=35] (c1);
    \draw[-> , thick]  (b5) edge (a5); 
    \draw[-> , thick]  (a6) edge (b4); 
    \end{scope}


 \draw [-{Latex[length=3mm]}] (-2.5,-7.2) -- (-1.5,-7.2) node[midway,sloped,above] {$\mu_{3N+8}$};

 \draw[decorate sep={1mm}{5mm},fill] (2,-7.2) -- (4,-7.2);

 \draw [-{Latex[length=3mm]}] (6.5,-7.2) -- (7.5,-7.2) node[midway,sloped,above] {$\mu_{N-2}$};

  \node [draw,circle,fill=green!50,"{\footnotesize{$3N+1$}}"] (a3) at (10.265,-6) {};
  \node [draw,circle,fill=red!50,"{\footnotesize{$3N+2$}}"] (a4) [right=of a3] {};
  \node [draw,circle,fill=red!50,"{\footnotesize{$3N+3$}}"] (a5) [right=of a4] {};
  \node [draw,circle,fill=blue!50,"{\footnotesize{$4N+5$}}"] (a6) [right=of a5] {};
  \node [draw,circle,fill=red!50,"{\footnotesize{$3N+4$}}" right] (c2) [below right=of a6] {};
  \node [draw,circle,fill=red!50,"$N+2$"]  (c1) at (8,-6.5) {};
  \node (a1) [left=of a3] {};

  \node [draw,circle,fill=green!50,"{\footnotesize{$N-2$}}" below] (b2) at (10.265,-8.5) {};
  \node (b1) [left=of b2] {};
  \node [draw,circle,fill=red!50,"{\footnotesize{$N-1$}}" below] (b3) [right=of b2] {};
   \node [draw,circle,fill=red!50,"{\footnotesize{$N$}}" below] (b4) [right=of b3] {};

    \node [draw,circle,fill=red!50,"$N+1$" below] (d1) at (9.6,-10) {} ;

  \begin{scope}[>=Latex]

    \draw[-> , thick]  (b2) edge (c1);  
    \draw[-> , thick]  (c1) edge (b3); 
    \draw[-> , thick]  (a3) edge  (a4);
    \draw[-> , thick]  (a4) edge  (a5);
     \draw[-> , thick]  (a6) edge  (a5);

 \draw[-> , thick]  (a1) edge  (a3);
       \draw[-> , thick]  (b1) edge  (b2);
       \draw[- , thick]  (a3) edge  (9.2,-6.8); 
       \draw[-> , thick]  (a4) edge  (b1);

     \draw[-> , thick]  (a6) edge  (c2);
    \draw[-> , thick]  (b3) edge (c2); 
    \draw[-> , thick]  (c2) edge (a4); 
    \draw[-> , thick]  (c2) edge[bend left=35] (d1); 
    
    \draw[-> , thick]  (b4) edge (d1); 
    \draw[-> , thick]  (d1) edge (a5); 
    \draw[-> , thick]  (d1) edge (a6); 

    \draw[-> , thick]  (b4) edge (a6); 
    
    \draw[-> , thick]  (b3) edge (b2); 
    \draw[-> , thick]  (b3) edge (b4);

    \draw[-> , thick]  (b2) edge (d1); 
    \draw[-> , thick]  (d1) edge (b3);

    \draw[-> , thick]  (b2) edge (a3); 
    \draw[-> , thick]  (b2) edge (a4); 
    \draw[-> , thick]  (a4) edge (b3); 
    \draw[-> , thick]  (a5) edge (b3); 
    \draw[-> , thick]  (b4) edge (a4); 

    \draw[-> , thick] (d1) edge[bend left=35] (c1);

    \end{scope}


 \draw [-{Latex[length=3mm]}] (-3.5,-13.2) -- (-2.5,-13.2) node[midway,sloped,above] {$\mu_{N-1}$};

 \node [draw,circle,fill=green!50,"{\footnotesize{$3N+1$}}"] (a3) at (1.265,-12) {};
  \node [draw,circle,fill=red!50,"{\footnotesize{$3N+2$}}"] (a4) [right=of a3] {};
  \node [draw,circle,fill=red!50,"{\footnotesize{$3N+3$}}"] (a5) [right=of a4] {};
  \node [draw,circle,fill=blue!50,"{\footnotesize{$4N+5$}}"] (a6) [right=of a5] {};
  \node [draw,circle,fill=red!50,"{\footnotesize{$3N+4$}}" right] (c2) [below right=of a6] {};
  \node [draw,circle,fill=red!50,"$N+2$"]  (c1) at (-1,-12.5) {};
  \node (a1) [left=of a3] {};

  \node [draw,circle,fill=green!50,"{\footnotesize{$N-2$}}" below] (b2) at (1.265,-14.5) {};
  \node (b1) [left=of b2] {};
  \node [draw,circle,fill=red!50,"{\footnotesize{$N-1$}}" below] (b3) [right=of b2] {};
   \node [draw,circle,fill=red!50,"{\footnotesize{$N$}}" below] (b4) [right=of b3] {};

    \node [draw,circle,fill=red!50,"$N+1$" below] (d1) at (0.6,-16) {} ;

  \begin{scope}[>=Latex]

    \draw[-> , thick]  (b3) edge (c1); 
    \draw[-> , thick]  (c1) edge (b4); 
    \draw[-> , thick]  (a3) edge  (a4);
    \draw[-> , thick]  (a4) edge  (a5);
     \draw[-> , thick]  (a6) edge  (a5);

 \draw[-> , thick]  (a1) edge  (a3);
       \draw[-> , thick]  (b1) edge  (b2);
       \draw[- , thick]  (a3) edge  (0.2,-12.8); 
       \draw[-> , thick]  (a4) edge  (b1);

     \draw[-> , thick]  (a6) edge  (c2);
    \draw[-> , thick]  (c2) edge (b3); 
    \draw[-> , thick]  (a5) edge (c2); 
    \draw[-> , thick]  (c1) edge (c2);

    \draw[-> , thick]  (d1) edge (a5); 
    \draw[-> , thick]  (d1) edge (a6); 

    \draw[-> , thick]  (b4) edge (a6); 
    
    \draw[-> , thick]  (b2) edge (b3); 
    \draw[-> , thick]  (b4) edge (b3);

    \draw[-> , thick]  (b3) edge (d1);

    \draw[-> , thick]  (b2) edge (a3); 

    \draw[-> , thick]  (b3) edge (a4); 
    \draw[-> , thick]  (b3) edge (a5);  
    \draw[-> , thick]  (a5) edge (b2);  
    \draw[-> , thick]  (a5) edge (b4);

    \draw[-> , thick] (d1) edge[bend left=35] (c1);

    \end{scope}
\draw [-{Latex[length=3mm]}] (7.5,-13.2) -- (8.5,-13.2) node[midway,sloped,above] {$\mu_{N}$};

\node [draw,circle,fill=green!50,"{\footnotesize{$3N+1$}}"] (a3) at (12.265,-12) {};
  \node [draw,circle,fill=red!50,"{\footnotesize{$3N+2$}}"] (a4) [right=of a3] {};
  \node [draw,circle,fill=red!50,"{\footnotesize{$3N+3$}}"] (a5) [right=of a4] {};
  \node [draw,circle,fill=blue!50,"{\footnotesize{$4N+5$}}"] (a6) [right=of a5] {};
  \node [draw,circle,fill=red!50,"{\footnotesize{$3N+4$}}" right] (c2) [below right=of a6] {};
  \node [draw,circle,fill=red!50,"$N+2$"]  (c1) at (10,-12.5) {};
  \node (a1) [left=of a3] {};

  \node [draw,circle,fill=green!50,"{\footnotesize{$N-2$}}" below] (b2) at (12.265,-14.5) {};
  \node (b1) [left=of b2] {};
  \node [draw,circle,fill=red!50,"{\footnotesize{$N-1$}}" below] (b3) [right=of b2] {};
   \node [draw,circle,fill=red!50,"{\footnotesize{$N$}}" below] (b4) [right=of b3] {};

    \node [draw,circle,fill=red!50,"$N+1$" below] (d1) at (11.6,-16) {} ;

 \begin{scope}[>=Latex]

    \draw[-> , thick]  (b4) edge (c1);  
    \draw[-> , thick]  (c1) edge (a6); 
    \draw[-> , thick]  (c1) edge (c2); 
    
    \draw[-> , thick]  (a3) edge  (a4);
    \draw[-> , thick]  (a4) edge  (a5);
     
    \draw[-> , thick]  (a5) edge  (c2);
     \draw[-> , thick]  (a6) edge  (c2);
    \draw[-> , thick]  (c2) edge (b3); 
    
 \draw[-> , thick]  (a1) edge  (a3);
       \draw[-> , thick]  (b1) edge  (b2);
       \draw[- , thick]  (a3) edge  (11.2,-12.8); 
       \draw[-> , thick]  (a4) edge  (b1);
    
    \draw[-> , thick]  (d1) edge (a5); 
    \draw[-> , thick]  (d1) edge (a6); 

    \draw[-> , thick]  (a6) edge (b4); 
    
    \draw[-> , thick]  (b2) edge (b3); 
    \draw[-> , thick]  (b3) edge (b4);

    \draw[-> , thick]  (b3) edge (d1); 
   
    \draw[-> , thick]  (b2) edge (a3); 
    
    \draw[-> , thick]  (b3) edge (a4); 
    \draw[-> , thick]  (a5) edge (b2); 
     \draw[-> , thick]  (b4) edge (a5);

    \draw[-> , thick] (d1) edge[bend left=35] (c1);
    
    \end{scope}

\end{tikzpicture}
}
\end{center}
As a result of mutation $\mu_{N}\tilde{\mu} = \mu_{N-1}\cdots\mu_{1}\mu_{4N+3}\cdots \mu_{3N+6}\mu_{3N+5} \mu_{N+3}$, we find 
\begin{center}
\resizebox{1\textwidth}{!}{%
 \begin{tikzpicture}[every circle node/.style={draw,scale=0.6,thick},node distance=10mm]

\node [draw,circle,fill=blue!50,"\footnotesize{$4N+4$}"] (aa1) at (0,0) {};
  \node [draw,circle,fill=red!50,"\footnotesize{$N+4$}"] (aa2) [right=of aa1] {};
  \node [draw,circle,fill=green!50,"\footnotesize{$N+5$}"] (aa3) [right=of aa2] {};
  \node [draw,circle,fill=green!50,"\footnotesize{$N+6$}"] (aa4) [right=of aa3] {};
  \node [draw,circle,fill=green!50,"\footnotesize{$N+7$}"] (aa5) [right=of aa4] {};
  \node [draw,circle,fill=red!50,"\footnotesize{$N+2$}" left]  (cc1) [below left=of aa1] {};
  \node  (aa6) [right=of aa5] {};

  \node [draw,circle,fill=red!50,"\footnotesize{$N+3$}" below] (bb1) at (1.25,-2.5) {} ; 
  \node [draw,circle,fill=red!50,"\footnotesize{$3N+5$}" below] (bb2) [right=of bb1] {};
  \node [draw,circle,fill=green!50,"\footnotesize{$3N+6$}" below] (bb3) [right=of bb2] {};
   \node [draw,circle,fill=green!50,"\footnotesize{$3N+7$}" below] (bb4) [right=of bb3] {};
   \node (bb5) [right=of bb4] {};

    \node [draw,circle,fill=red!50,"\footnotesize{$N+1$}" below] (dd1) at (10,-4) {} ;

  \begin{scope}[>=Latex]
  \draw[-> , thick] (aa2) edge (aa1);
  \draw[-> , thick] (aa3) edge [bend right=60] (aa1);
  \draw[-> , thick]  (aa4) edge [bend right=60] (aa2); 
  \draw[-> , thick]  (aa3) edge (aa4); 
  \draw[-> , thick]  (aa4) edge (aa5); 
   \draw[-> , thick]  (aa1) edge  (cc1);
    \draw[-> , thick]  (aa3) edge (cc1); 

    \draw[-> , thick]  (cc1) edge (aa2); 
    
    \draw[-> , thick]  (cc1) edge (bb1); 
    \draw[-> , thick]  (bb1) edge (bb2);
    \draw[-> , thick]  (bb2) edge (bb3); 
    \draw[-> , thick]  (bb4) edge (bb3); 
    \draw[-> , thick]  (bb1) edge (bb2);
    
    \draw[-> , thick]  (aa6) edge (bb3); 
     \draw[- , thick]  (bb4) edge (6,-1.5);

    \draw[-> , thick]  (aa4) edge (bb1); 
    \draw[-> , thick]  (aa1) edge (bb1);
    \draw[-> , thick]  (aa2) edge (bb2); 
    \draw[-> , thick]  (bb2) edge (aa3); 
    \draw[-> , thick]  (bb3) edge (aa4);  
   \draw[-> , thick]  (bb4) edge (aa5); 
    \draw[-> , thick]  (aa5) edge (bb2); 

    \draw[-> , thick] (dd1) edge (cc1);
    \draw[-> , thick]  (aa5) edge (aa6); 
    \draw[-> , thick]  (bb4) edge (bb5);

\end{scope}

 \node [draw,circle,fill=green!50,"\footnotesize{$3N+1$}"] (a3) at (15.5,0) {};
  \node [draw,circle,fill=red!50,"\footnotesize{$3N+2$}"] (a4) [right=of a3] {};
  \node [draw,circle,fill=red!50,"\footnotesize{$3N+3$}"] (a5) [right=of a4] {};
  \node [draw,circle,fill=blue!50,"\footnotesize{$4N+5$}"] (a6) [right=of a5] {};
  \node [draw,circle,fill=red!50,"\footnotesize{$3N+4$}" right] (c2) [below right=of a6] {};
  \node (a1) [left=of a3] {};
   \node (a0) [left=of a1] {};
   \node (A0) [draw,circle,fill=green!50,"\footnotesize{$2N+5$}"] [left=of a0] {};
   \node (A1) [draw,circle,fill=green!50,"\footnotesize{$2N+4$}"][left=of A0] {};
   \node (A2) [draw,circle,fill=green!50,"\footnotesize{$2N+3$}"][left=of A1] {};
   \node (A3)[left=of A2] {};

  \node [draw,circle,fill=green!50,"\footnotesize{$N-2$}" below](b2) at (15.5,-2.5) {};
  \node (b1) [left=of b2] {};
  \node (b0) [left=of b1] {};
  \node (B0)[draw,circle,fill=green!50,"\footnotesize{$2$}" below] [left=of b0] {};
  \node (B1)[draw,circle,fill=green!50,"\footnotesize{$1$}" below] [left=of B0] {};
  \node (B2)[draw,circle,fill=green!50,"\footnotesize{$4N+3$}" below] [left=of B1] {};
  \node (B3) [left=of B2] {};
  \node [draw,circle,fill=red!50,"\footnotesize{$N-1$}" below] (b3) [right=of b2] {};
   \node [draw,circle,fill=red!50,"\footnotesize{$N$}" below] (b4) [right=of b3] {};

\draw[decorate sep={1mm}{5mm},fill] (6.5,-1.25) -- (7.5,-1.25);
\draw[decorate sep={1mm}{4mm},fill] (13.25,-1.25) -- (14.25,-1.25);

  \begin{scope}[>=Latex]

    \draw[-> , thick]  (b4) edge (cc1);  
    \draw[-> , thick]  (cc1) edge (a6); 
    \draw[-> , thick]  (cc1) edge (c2); 
    
    \draw[-> , thick]  (a3) edge  (a4);
    \draw[-> , thick]  (a4) edge  (a5);
     
    \draw[-> , thick]  (a5) edge  (c2);
     \draw[-> , thick]  (a6) edge  (c2);
    \draw[-> , thick]  (c2) edge (b3); 
    
 \draw[-> , thick]  (a1) edge  (a3);
       \draw[-> , thick]  (b1) edge  (b2);
       \draw[- , thick]  (a3) edge  (8.2,-0.8); 
       \draw[-> , thick]  (a4) edge  (b1);
    
    \draw[-> , thick]  (dd1) edge (a5); 
    \draw[-> , thick]  (dd1) edge (a6); 

    \draw[-> , thick]  (a6) edge (b4); 
    
    \draw[-> , thick]  (b2) edge (b3); 
    \draw[-> , thick]  (b3) edge (b4);

    \draw[-> , thick]  (b3) edge (dd1); 
   
    \draw[-> , thick]  (b2) edge (a3); 
    
    \draw[-> , thick]  (b3) edge (a4); 
    \draw[-> , thick]  (a5) edge (b2); 
     \draw[-> , thick]  (b4) edge (a5);
     
\draw[-> , thick]  (B0) edge  (b0); 
\draw[-> , thick]  (B1) edge  (B0);
\draw[-> , thick]  (B2) edge  (B1); 
\draw[-> , thick]  (B3) edge  (B2);

\draw[-> , thick]  (b0) edge  (A0); 
\draw[-> , thick]  (a0) edge  (B0);
\draw[-> , thick]  (B0) edge  (A1); 
\draw[-> , thick]  (A0) edge  (B1);
\draw[-> , thick]  (B1) edge  (A2); 
\draw[-> , thick]  (A1) edge  (B2);
\draw[-> , thick]  (B2) edge  (A3); 
\draw[-> , thick]  (A2) edge  (B3);

\draw[-> , thick]  (A3) edge  (A2); 
\draw[-> , thick]  (A2) edge  (A1); 
\draw[-> , thick]  (A1) edge  (A0); 
\draw[-> , thick]  (A0) edge  (a0);

    \end{scope}

\end{tikzpicture}
}
\end{center}

Furthermore if we apply the cyclic permutation $\rho = (N,N+1,\dots,3N+4) \dots $ on labelling and shift the nodes, then we attain the following resulting quiver 

\begin{center}
\resizebox{1\textwidth}{!}{%
 \begin{tikzpicture}[every circle node/.style={draw,scale=0.6,thick},node distance=10mm]

\node [draw,circle,fill=blue!50,"\footnotesize{$4N+4$}"] (aa1) at (0,0) {};
  \node [draw,circle,fill=red!50,"\footnotesize{$N+5$}"] (aa2) [right=of aa1] {};
  \node [draw,circle,fill=green!50,"\footnotesize{$N+6$}"] (aa3) [right=of aa2] {};
  \node [draw,circle,fill=green!50,"\footnotesize{$N+7$}"] (aa4) [right=of aa3] {};
  \node [draw,circle,fill=green!50,"\footnotesize{$N+8$}"] (aa5) [right=of aa4] {};
  \node [draw,circle,fill=red!50,"\footnotesize{$N+3$}" left]  (cc1) [below left=of aa1] {};
  \node  (aa6) [right=of aa5] {};

  \node [draw,circle,fill=red!50,"\footnotesize{$N+4$}" below] (bb1) at (1.25,-2.5) {} ; 
  \node [draw,circle,fill=red!50,"\footnotesize{$3N+5$}" below] (bb2) [right=of bb1] {};
  \node [draw,circle,fill=green!50,"\footnotesize{$3N+6$}" below] (bb3) [right=of bb2] {};
   \node [draw,circle,fill=green!50,"\footnotesize{$3N+7$}" below] (bb4) [right=of bb3] {};
   \node (bb5) [right=of bb4] {};

    \node [draw,circle,fill=red!50,"\footnotesize{$N+2$}" below] (dd1) at (10,-4) {} ;
    
  \begin{scope}[>=Latex]
  
\draw[-> , thick]  (aa1) edge (cc1);  
\draw[-> , thick]  (aa1) edge (dd1);  
\draw[-> , thick]  (aa2) edge (aa1);  
\draw[-> , thick]  (bb1) edge (aa1);  

\draw[-> , thick]  (cc1) edge (bb2);  
\draw[-> , thick]  (dd1) edge (cc1);  
\draw[-> , thick]  (aa1) edge (cc1);  
\draw[-> , thick]  (aa3) edge (cc1);  

\draw[-> , thick]  (bb1) edge (aa1);  
\draw[-> , thick]  (bb1) edge (bb2);  
\draw[-> , thick]  (dd1) edge (bb1);  
\draw[-> , thick]  (aa3) edge (bb1);  

\draw[-> , thick]  (aa2) edge (aa1);  
\draw[-> , thick]  (aa2) edge (dd1);  
\draw[-> , thick]  (aa2) edge (aa3);  
\draw[-> , thick]  (bb2) edge (aa2); 

\draw[-> , thick]  (aa3) edge (aa4);  
\draw[-> , thick]  (bb3) edge (aa3);  

\draw[-> , thick]  (aa4) edge (bb2);  
\draw[-> , thick]  (aa4) edge (aa5);  
\draw[-> , thick]  (bb4) edge (aa4); 

\draw[-> , thick]  (aa5) edge (bb3);
\draw[-> , thick]  (aa5) edge (aa6);

\draw[-> , thick]  (bb2) edge (bb3); 
\draw[-> , thick]  (bb3) edge (bb4);
\draw[-> , thick]  (bb4) edge (bb5);

\draw[-> , thick]  (aa6) edge (bb4);
\draw[-> , thick]  (bb5) edge (aa5);

\end{scope}

 \node [draw,circle,fill=red!50,"\footnotesize{$3N+2$}"] (a3) at (15.5,0) {};
  \node [draw,circle,fill=red!50,"\footnotesize{$3N+3$}"] (a4) [right=of a3] {};
  \node [draw,circle,fill=red!50,"\footnotesize{$3N+4$}"] (a5) [right=of a4] {};
  \node [draw,circle,fill=blue!50,"\footnotesize{$4N+5$}"] (a6) [right=of a5] {};
  \node [draw,circle,fill=red!50,"\footnotesize{$N$}" right] (c2) [below right=of a6] {};
  \node (a1) [left=of a3] {};
   \node (a0) [left=of a1] {};
   \node (A0) [draw,circle,fill=green!50,"\footnotesize{$2N+5$}"] [left=of a0] {};
   \node (A1) [draw,circle,fill=green!50,"\footnotesize{$2N+4$}"][left=of A0] {};
   \node (A2) [draw,circle,fill=green!50,"\footnotesize{$2N+3$}"][left=of A1] {};
   \node (A3)[left=of A2] {};

  \node [draw,circle,fill=green!50,"\footnotesize{$N-2$}" below](b2) at (15.5,-2.5) {};
  \node (b1) [left=of b2] {};
  \node (b0) [left=of b1] {};
  \node (B0)[draw,circle,fill=green!50,"\footnotesize{$2$}" below] [left=of b0] {};
  \node (B1)[draw,circle,fill=green!50,"\footnotesize{$1$}" below] [left=of B0] {};
  \node (B2)[draw,circle,fill=green!50,"\footnotesize{$4N+3$}" below] [left=of B1] {};
  \node (B3) [left=of B2] {};
  \node [draw,circle,fill=red!50,"\footnotesize{$N-1$}" below] (b3) [right=of b2] {};
   \node [draw,circle,fill=red!50,"\footnotesize{$N+1$}" below] (b4) [right=of b3] {};

\draw[decorate sep={1mm}{5mm},fill] (6.5,-1.25) -- (7.5,-1.25);
\draw[decorate sep={1mm}{4mm},fill] (13.25,-1.25) -- (14.25,-1.25);

  \begin{scope}[>=Latex]

    \draw[-> , thick]  (a1) edge  (a3);
   \draw[-> , thick]  (a3) edge  (a4);
    \draw[-> , thick]  (a4) edge  (a5);
    \draw[-> , thick]  (a6) edge  (a5);

     \draw[-> , thick]  (b1) edge  (b2);
   \draw[-> , thick]  (b2) edge  (b3);
   \draw[-> , thick]  (b3) edge  (b4);

 \draw[-> , thick]  (b2) edge  (a1);
   \draw[-> , thick]  (a4) edge  (b2);

 \draw[-> , thick]  (b3) edge  (a3);
   \draw[-> , thick]  (b3) edge  (c2);
    \draw[-> , thick]  (a5) edge  (b3);

     \draw[-> , thick]  (b4) edge  (dd1);
   \draw[-> , thick]  (b4) edge  (a4);
   \draw[-> , thick]  (b4) edge  (a4);
\draw[-> , thick]  (b4) edge  (a6);

\draw[-> , thick]  (c2) edge  (a4);
\draw[-> , thick]  (c2) edge[bend right=5]  (dd1);
\draw[-> , thick]  (b4) edge  (a4);
\draw[-> , thick]  (a6) edge  (c2);
    
   \draw[-> , thick]  (dd1) edge  (a5); 
   \draw[-> , thick]  (b4) edge  (a4);

\draw[-> , thick]  (dd1) edge  (a6); 

\draw[-> , thick]  (a3) edge  (b1); 

\draw[-> , thick]  (B0) edge  (b0); 
\draw[-> , thick]  (B1) edge  (B0);
\draw[-> , thick]  (B2) edge  (B1); 
\draw[-> , thick]  (B3) edge  (B2);

\draw[-> , thick]  (b0) edge  (A0); 
\draw[-> , thick]  (a0) edge  (B0);
\draw[-> , thick]  (B0) edge  (A1); 
\draw[-> , thick]  (A0) edge  (B1);
\draw[-> , thick]  (B1) edge  (A2); 
\draw[-> , thick]  (A1) edge  (B2);
\draw[-> , thick]  (B2) edge  (A3); 
\draw[-> , thick]  (A2) edge  (B3);

\draw[-> , thick]  (A3) edge  (A2); 
\draw[-> , thick]  (A2) edge  (A1); 
\draw[-> , thick]  (A1) edge  (A0); 
\draw[-> , thick]  (A0) edge  (a0);
    \end{scope}

\end{tikzpicture}
}
\end{center}

The cyclic permutation of $2N+5$ labellings of the quiver above returns to its original form quiver, as shown in Figure \ref{fig:QA2N}. 

\end{proof}

The statement above enables us to define the cluster map which is associated with the exchange matrix \eqref{exchmA2N} (or $Q_{A_{2N}}$ in fig \ref{fig:QA2N}), that is, we define $\psi_{A_{2N}} = \rho_{A_{2N}}^{-1}\mu_{N}\mu_{N-1}\tilde{\mu}\mu_{3N+5}\mu_{N+3}$ and see that
\begin{equation}
    \psi_{A_{2N}}(B_{A_{2N}}) = \rho_{A_{2N}}^{-1}\mu_{N}\mu_{N-1}\tilde{\mu}\mu_{3N+5}\mu_{N+3} (B_{A_{2N}}) = B_{A_{2N}}.
\end{equation}
Following a sequence of the above quiver mutations, we see that  the $n$-th iterate of the cluster map $\psi_{A_{2N}}$, starting from the initial seed
$$(q_{N-2, 0},\dots,q_{1,0},q_{0,0},\tau_{-1},\tau_{0},\tau_{1},\sigma_{0},\sigma_{1},\sigma_{2},\dots, \sigma_{2N},\sigma_{2N+1},p_{0,0},p_{1,0},\dots, p_{N-2,0})$$
generates cluster variables defined by the system of recursion relations below:
\begin{equation}\label{A2nclusterm}
\begin{split}
   \tau_{n+2}\sigma_{n} &= \sigma_{n+2} \tau_{n} + a_{1}p_{0,n} \\
   p_{0,n+1}p_{0,n} &= \sigma_{n+3}\sigma_{n+2}\tau_{n}\tau_{n+1} + p_{1,n} \sigma_{n+1}\tau_{n+2} \\ 
   p_{1,n+1} p_{1,n} &= \sigma_{n+4}\sigma_{n+3}\tau_{n}\tau_{n+1} + p_{2,n}p_{0,n+1}\\
   &\quad \vdots\\
   p_{N-2,n+1}p_{N-2,n} &= \sigma_{n+N+1}\sigma_{ n+N }\tau_{n}\tau_{n+1} + q_{N-2,n}p_{N-3,n+1}\\
   q_{N-2,n+1}q_{N-2,n} &= \sigma_{n+ N+2}\sigma_{n+ N+1 }\tau_{n}\tau_{n+1} + q_{N-3,n}p_{N-2,n+1}\\
   q_{N-3,n+1}q_{N-3,n} &= \sigma_{n+ N+3}\sigma_{ n+N+2 }\tau_{n}\tau_{n+1} + q_{N-4,n}q_{N-2,n+1}\\
   & \quad \vdots\\ 
   q_{0,n+1}q_{0,n} &= \sigma_{n+ 2N}\sigma_{ n+2N-1 }\tau_{n}\tau_{n+1} + \sigma_{n+2N+1} q_{1,n+1}\tau_{n-1}\\
   \sigma_{n+2N+2}\tau_{n-1} &= \sigma_{n+2N}\tau_{n+1} + a_{2N}q_{0,n+1}
   \end{split}
   \end{equation}
We then impose the variable transformations in \eqref{vartransA2N} and obtain the exchange relations
\begin{equation}\label{deformedmA2n}
\begin{split}
   x_{1,n} x_{1,n+1} &= 1 + a_1x_{2,n} \\
   x_{2,n} x_{2,n+1} &= 1 + x_{3,n}x_{1,n+1} \\
  &\vdots \\ 
 x_{2N-1,n} x_{2N-1,n+1}&= 1 + x_{2N,n}x_{2N-2,n+1} \\
 x_{2N,n} x_{2N,n+1} &= 1 + a_N x_{2N-1,n+1} , \\
 \end{split}
\end{equation}
which are induced by the deformed type $A_{2N}$ map $\tilde{\varphi}_{A_{2N}} = \mu_{2N}\mu_{2N-1}\cdots\mu_{1}$.

In conclusion,

\begin{thm} By applying Laurentification, deformed map $\tilde{\varphi}_{A_{2N}}$, derived from the cluster map $\varphi_{A_{2N}}$, can be lifted to the cluster map $\psi_{A_{2N}}$. 
 \qed 
\end{thm}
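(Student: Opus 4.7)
The plan is to assemble the proof from the ingredients already established, chiefly the mutation-periodicity proposition immediately preceding the theorem. First I would invoke that proposition to observe that the composition $\psi_{A_{2N}} = \rho_{A_{2N}}^{-1}\mu_{N}\tilde{\mu}\mu_{N+3}$ with $\tilde{\mu} = \mu_{1}\mu_{2}\cdots\mu_{N-2}\mu_{4N+3}\mu_{4N+2}\cdots\mu_{3N+6}$ preserves the (extended) exchange matrix $\tilde{B}_{A_{2N}}$ of \eqref{exchmA2N}, and hence defines a genuine cluster map on the seed $(\tilde{\vb{x}}_{0},\tilde{B}_{A_{2N}})$ with frozen variables $a_{1}$ and $a_{2N}$. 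By the Laurent phenomenon, every cluster variable produced by iterating $\psi_{A_{2N}}$ lies in the Laurent polynomial ring in the initial tau-variables with coefficients that are polynomial in $a_{1},a_{2N}$.

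Next I would read off the exchange relations generated along one period of $\psi_{A_{2N}}$. The column structure of $\tilde{B}_{A_{2N}}$ in each mutation direction, as already computed in the proof of the proposition, determines the exponents in the coefficient-free exchange relation \eqref{mu1}. The mutation at $\sigma_{0}$ uses the column $b_{i,N+3}$ and produces the relation for $\tau_{n+2}\sigma_{n}$; the successive mutations on the right-hand side produce the $p_{j,n+1}p_{j,n}$ recurrences; those on the left-hand side produce the $q_{j,n+1}q_{j,n}$ recurrences; and the final two mutations produce the recurrences closing the system at $q_{0}$ and $\sigma_{n+2N+2}\tau_{n-1}$. This yields precisely the tau-function system displayed just before the theorem statement.

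Then I would verify that the dependent variable transformation \eqref{vartransA2N} intertwines this tau-system with the deformed exchange relations \eqref{mueq1} of $\tilde{\varphi}_{A_{2N}}$. Concretely, substituting the expressions for $x_{k,n}$ in terms of $\sigma_{n},\tau_{n},p_{j,n},q_{j,n}$ into each of the recurrences and clearing denominators reproduces, row by row, the deformed mutation formul\ae\ for $\tilde{\mu}_{k}$. This is the step that requires the most bookkeeping, but it is essentially the same identification verified in the $A_{4}$, $A_{6}$ and $A_{8}$ cases in Sections~\ref{ss:type-A6} and \ref{ss:local-expansion} and Example~\ref{exmptypeA8}, and the local-expansion construction was precisely engineered so that the pull-back of the pre-symplectic form $\omega_{A_{2N}}$ under the variable transformation agrees with the form associated to $\tilde{B}_{A_{2N}}$; consequently the computation reduces to checking the numerator identities between adjacent $x$-variables.

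The main obstacle is the bookkeeping in this last step: in the general $A_{2N}$ case, one must verify that the tau-recurrences in the ``bulk'' indexed by $p_{j}$ and $q_{j}$ (corresponding to the middle equations $x_{i}x_{i}' = 1 + x_{i-1}'x_{i+1}$ for $2\leq i\leq 2N-1$) all substitute correctly under \eqref{vartransA2N}, uniformly in $N$. However, the recursive structure of the local expansion means these all have the same shape, and the induction on $N$ is seen by comparing the expansion step $Q_{A_{2N-2}}\to Q_{A_{2N}}$ to the already-verified step $Q_{A_{4}}\to Q_{A_{6}}$: only the two ``new'' exchange relations associated with the inserted four-cycle need fresh checking, and their form is dictated by \eqref{colmR}. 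Concluding, $\psi_{A_{2N}}$ lifts $\tilde{\varphi}_{A_{2N}}$ and has the Laurent property by being a cluster map, proving the theorem.
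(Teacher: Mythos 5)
Your proposal follows essentially the same route as the paper: invoke the mutation-periodicity proposition to get a genuine cluster map $\psi_{A_{2N}}$ preserving $\tilde{B}_{A_{2N}}$, read off the tau-function exchange relations from the column structure of the exchange matrix along one period, and then check that the dependent variable transformation \eqref{vartransA2N} carries this system back to the deformed exchange relations of $\tilde{\varphi}_{A_{2N}}$, with the Laurent property supplied by the Laurent phenomenon. The paper presents exactly this chain of reasoning in the discussion preceding the theorem (which is why it is stated with an immediate \qed), so your argument is a faithful, slightly more explicitly organised version of the paper's own.
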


\section{Tropical dynamics and degree growth for Laurentified deformed map }\label{s:degreeA2N}
\setcounter{equation}{0}

In the previous section, we observed that applying Laurentification retrieves a higher dimensional cluster map with the Laurent property from the type $A_{2N}$ deformed map. However, we did not address the integrability of the higher case of deformed maps. The challenge lies in the increasing complexity of calculating the conditions for the parameters as we apply the procedure to each modified first integral in even cases. Thus, proving the integrability by constructing invariant functions becomes difficult. Instead, we perform an algebraic entropy test as an alternative, providing strong evidence that deformed maps possess integrability. Here we begin by calculating the degree growth (ref. \cite{2013}) of lifted maps, arising from the deformation of type $A_{4}$ cluster map and then extend our analysis to the higher rank case, type $A_{2N}$.

\subsection{Tropical dynamics and degree growth for type $A_{4}$ deformed map} 

Recalling Example~\ref{LaurentA4}, given the initial cluster
$$\hat{\vb{x}} = (q_{0}, \tau_{-1},\tau_{0},\tau_{1}, \sigma_{0},\sigma_{1},\sigma_{2},\sigma_{3},\sigma_{4},\sigma_{5}, p_{0}, a_{1},a_{4}) = (\tilde{x}_{j})_{1\leq j \leq 13},$$
we have a deformed map lifted to a  higher dimensional cluster map $\psi_{A_{4}} = \rho^{-1}\mu_2\mu_1\mu_{11}\mu_5$ via the rational map $\pi$, 
\begin{equation}\label{vartransA41}
    \begin{split}
        x_{1,n} = \frac{\sigma_{n}\tau_{n+1}}{\sigma_{n+1}\tau_{n}} \quad & x_{2,n} = \frac{p_{n}}{\sigma_{n+2}\tau_{n}} \quad  x_{3,n} =\frac{q_{n}}{\sigma_{n+3}\tau_{n}} \quad x_{4,n} =\frac{\sigma_{n+5}\tau_{n-1}}{\sigma_{n+4}\tau_{n}}
    \end{split}
\end{equation}
which is equivalent to the system
\begin{equation}\label{eq:clustermA4}
    \begin{aligned}
 \tau_{n+2}\sigma_{n} & = \sigma_{n+2}\tau_{n} + a_{1}p_{n} \\
    p_{n+1}p_{n} & = \sigma_{n+3}\sigma_{n+2}\tau_{n}\tau_{n+1} + q_{n} \sigma_{n+1}\tau_{n+2} \\ 
    q_{n+1}q_{n} & = \sigma_{n+4}\sigma_{n+3}\tau_{n}\tau_{n+1} + p_{n+1}\sigma_{n+5}\tau_{n-1}\\
    \sigma_{n+6}\tau_{n-1} & = \sigma_{n+4}\tau_{n+1} + a_{1}q_{n+1}        
    \end{aligned}
\end{equation}
Since the lifted map has the Laurent property, the system of equations in \eqref{eq:clustermA4} produces a sequence of tau functions which can be expressed in the form
\begin{equation}\label{tauvarA4}
\tau_{n} = \frac{N^{(1)}_{n}(\hat{\vb{x}})}{\hat{\vb{x}}^{\vb{d}_{n}}}, \quad \sigma_{n} = \frac{N^{(2)}_{n}(\hat{\vb{x}})}{\hat{\vb{x}}^{\vb{e}_{n}}}, \quad  p_{n} = \frac{N^{(3)}_{n}(\hat{\vb{x}})}{\hat{\vb{x}}^{\vb{f}_{n}}}, \quad q_{n} = \frac{N^{(4)}_{n}(\hat{\vb{x}})}{\hat{\vb{x}}^{\vb{g}_{n}}}
\end{equation}
where $N^{(j)}(\hat{\vb{x}}) \in \mathbb{Z}[\hat{\vb{x}}]$ are polynomials in the initial cluster variables $\hat{\vb{x}}$, and $\vb{d}_{n},\vb{e}_{n},\vb{f}_{n},\vb{g}_{n}  \in \mathbb{Z}^{11} $ are sequences of d-vectors specified by the initial data 
\begin{equation}\label{initd}
\qty( \vb{g}_{0} \ \vb{d}_{-1} \ \vb{d}_{0} \ \vb{d}_{1} \ \vb{e}_{0} \ \vb{e}_{1}  \ \vb{e}_{2}  \ \vb{e}_{3}  \ \vb{e}_{4}  \ \vb{e}_{5}  \ \vb{g}_{0}) = - \mathbf{I} . 
\end{equation}
Note that the monomial denominators in \eqref{tauvarA4} only depend on the 11 mutable cluster variables, hence the d-vectors that appear as exponents of $\hat{\vb{x}}$ have 11 components (since the coefficients $a_1,a_4$, which are frozen variables, do not appear in the denominators).  
With the same argument as described in previous sections, we can find the $(\max,+)$ tropical relations of  the d-vectors arising from \eqref{eq:clustermA4}, shown below:
\begin{equation}\label{bilinearA4deg}
\begin{array}{rcl}
\vb{d}_{n+2} + \vb{e}_{n} & = & \max(\vb{e}_{n+2} + \vb{d}_{n}, \vb{f}_{n} ), \\
 \vb{f}_{n+1} + \vb{f}_{n} & = & \max(\vb{e}_{n+2} + \vb{e}_{n+3} +\vb{d}_{n} +\vb{d}_{n+1}, \vb{g}_{n} + \vb{e}_{n+1} + \vb{d}_{n+2} ), \\
  \vb{g}_{n+1} + \vb{g}_{n} & = & \max(\vb{e}_{n+3} + \vb{e}_{n+4} +\vb{d}_{n} +\vb{d}_{n+1}, \vb{f}_{n+1} + \vb{e}_{n+5}+\vb{d}_{n-1}  ), \\
 \vb{e}_{n+6} + \vb{d}_{n-1} & = & \max(\vb{e}_{n+4} + \vb{d}_{n+1}, \vb{g}_{n+1} ), \\
\end{array}  
\end{equation}

Let us consider the tropical analogue of the variables \eqref{vartransA41} given by the rational map $\pi$, 
\begin{equation}\label{tropA4var}
\begin{array}{rcl}
\vb{X}_{1,n} = \vb{e}_{n} + \vb{d}_{n+1} - \vb{e}_{n+1} - \vb{d}_{n},& \quad &\vb{X}_{2,n} = \vb{g}_{n} - \vb{e}_{n+2} - \vb{d}_{n},\\
  \vb{X}_{3,n} = \vb{f}_{n} - \vb{e}_{n+3} - \vb{d}_{n}, & \quad &\vb{X}_{4,n} = \vb{e}_{n+5} +  \vb{d}_{n-1} - \vb{e}_{n+4} - \vb{d}_{n} \\ 
\end{array}
\end{equation}
Then one can show the following.
\begin{lm} Given d-vectors $\vb{d}_{n}, \vb{e}_{n},\vb{f}_{n}, \vb{e}_{n}$ satisfying the system of equations \eqref{bilinearA4deg},  the quantities $\vb{X}_{j,n}$ for $1 \leq j \leq 4$ in \eqref{tropA4var} are induced by a tropical map $\varphi^{trop}_{A_{4}}$, which is specified by
\begin{equation}\label{troporignA4}
\begin{array}{rcl}
\vb{X}_{1,n+1} + \vb{X}_{1,n} & = & \qty[\vb{X}_{2,n}]_{+}, \\
\vb{X}_{2,n+1} + \vb{X}_{2,n} & = & \qty[\vb{X}_{1,n+1} +\vb{X}_{3,n} ]_{+} ,\\
\vb{X}_{3,n+1} + \vb{X}_{3,n} & = & \qty[\vb{X}_{2,n+1} +\vb{X}_{4,n} ]_{+} ,\\
\vb{X}_{4,n+1} + \vb{X}_{4,n} & = & \qty[\vb{X}_{3,n+1} ]_{+}, \\
\end{array}
\end{equation}
where $\qty[X_{j,n}]_{+} = \max(X_{j,n},0)$. Given arbitrary initial data $(X_{1,0},X_{2,0},X_{3,0},X_{4,0})$, the orbit of $\varphi^{trop}_{A_{4}}$ is periodic with period  7. 
 
\end{lm}
\begin{proof}  
Firstly, the relations can be found by rearranging  \eqref{bilinearA4deg} and rewriting the equations in terms of $\vb{X}_{i,n}$ by substituting \eqref{tropA4var}. One can check that the map $\varphi^{trop}$ is the tropical analogue of the original cluster map $\varphi_{A_{4}}$ associated with type $A_{4}$. This implies that each quantity $\vb{X}_{i,n}$ corresponds to the  d-vector of the cluster variable $x_{i,n}$ defined in the cluster algebra of type $A_{4}$. Furthermore, as already noted, the cluster map is periodic with period 7, and hence the orbits of $\varphi^{trop}_{A_4}$ is also periodic with period 7.
\end{proof}
We can utilise the periodicity of terms \eqref{tropA4var} to obtain the following result. 
\begin{thm} 
The  d-vectors $\vb{d}_{n}$, $\vb{e}_{n}$, $\vb{q}_{n}$, $\vb{p}_{n}$, which solve the system of equations \eqref{bilinearA4deg} are elements in the kernel of the linear difference operator 
\begin{equation}
\cL=(\cS^{7} - 1)(\cS^5 - 1)(\cS-1)
\end{equation}
where, as before, $\cS$ is the shift operator that sends 
$n \to n+1$.  For the tau functions \eqref{tauvarA4} generated by this, the leading order of degree growth of their denominators is given by 
\begin{equation}\label{d-vectorsA4}
\begin{split}
&\vb{e}_{n} = \frac{n^2}{35}(2,1,1,1,1,1,1,1,1,1,2)^T + O(n)\\
&\vb{d}_{n} = \frac{n^2}{35}(2,1,1,1,1,1,1,1,1,1,2)^T + O(n)\\
&\vb{q}_{n} = \frac{n^2}{35}(4,2,2,2,2,2,2,2,2,2,4)^T + O(n)\\
&\vb{p}_{n} = \frac{n^2}{35}(4,2,2,2,2,2,2,2,2,2,4)^T + O(n)\\
\end{split}
\end{equation}
\end{thm}

\begin{proof} By \eqref{tropA4var}, we have 
\begin{align*}
&\vb{X}_{1,n} =  (\cS - 1)\vb{d}_{n} - (\cS - 1)\vb{e}_{n} \\
& \vb{X}_{4,n+1} =  - (\cS - 1)\vb{d}_{n} +(\cS^{6} - \cS^{5})\vb{e}_{n} \\
\end{align*}
From the periodicity of the terms, the sum $\vb{X}_{1,n} +  \vb{X}_{4,n+1}$ produces the linear difference equation
\begin{align*}
\cL \vb{e}_{n} &= (\cS^{7} - 1)(\cS^{6} - \cS^{5} - \cS +1 )\vb{e}_{n} \\
& = (\cS^{7} - 1)(\cS^{5} -1 )(\cS-1 )\vb{e}_{n} = 0
\end{align*}
Therefore, given that the $\vb{e}_{n}$ satisfy $\cL \vb{e}_{n} = 0 $, from the formula of $\vb{X}_{1,n}$ above we find 
\begin{align*}
(\cS^{7} - 1)(\cS^{6} - \cS^{5} - \cS +1 )\vb{d}_{n} = 0 . 
\end{align*}
Similarly, we can find the relations for the d-vectors $\vb{g}_{n}$ and $\vb{f}_{n}$ from the relations $\vb{X}_{2,n}$ and $\vb{X}_{3,n}$ in \eqref{tropA4var}  respectively. Then we have
\begin{align*}
&\cL \vb{g}_{n} = \cL\vb{X}_{2,n} + \cL\vb{e}_{n+2} + \cL\vb{d}_{n} = 0 \\ 
&\cL \vb{f}_{n} = \cL \vb{X}_{3,n} + \cL \vb{e}_{n+3} + \cL \vb{d}_{n} = 0 
\end{align*}
Altogether, solving the linear difference equation above gives the expression for d-vectors whose leading order term is $n^2$ with some constant coefficient, $\vb{e}_{n} = \vb{a}n^2 + O(n)$.  We now consider the sequences of d-vectors emerging from the iteration of \eqref{bilinearA4deg}, given by
\begin{align*}
    \begin{array}{rcl}
    \vb{e}_{6} & = & (1,1,0,0,1,0,0,0,0,0,1)^{T} \\
       \vb{e}_{7}& = & (1,1,1,0,1,1,0,0,0,0,1)^{T} \\ 
      \vb{e}_{8} & = &(1,1,1,1,2,1,1,0,0,0,2)^{T}  \\ 
      \vb{e}_{9} & = & (2,1,1,1,2,2,1,1,0,0,3)^{T}  \\ 
      \vb{e}_{10} & = & (3,2,1,1,2,2,2,1,1,0,3)^{T}   \\
      \vb{e}_{11}& = & (4,3,2,1,3,2,2,2,1,1,4)^{T}   \\
       \vb{e}_{12}& = &  (5,3,3,2,4,3,2,2,2,1,5)^{T}   \\
        \vb{e}_{13}& = &  (6,4,3,3,5,4,3,2,2,2,7)^{T}   \\
\end{array}
\end{align*}
Note that from the sequences above,  we can see the components of $\vb{e}_{n}$ taking the form
\begin{equation}\label{e:d-vectorA4}
\vb{e}_{n} = (e^{(1)}_{n},e^{(2)}_{n+2},e^{(2)}_{n+1},e^{(2)}_{n}, e^{(3)}_{n+5},  e^{(3)}_{n+4},  e^{(3)}_{n+3},  e^{(3)}_{n+2},  e^{(3)}_{n+1}, e^{(3)}_{n}, e^{(4)}_{n})
\end{equation}
from which we have the following relation:
\begin{align*}
 (\cS^{7} - 1)(\cS^{5} -1 )\vb{e}_{n} = (4,2,2,2,2,2,2,2,2,2,4)^{T} = 70 \vb{a}
\end{align*}
This enables us to fix the constant coefficient $\vb{a}$, which leads to the d-vector
\begin{align*}
\vb{e}_{n} =\vb{a}n^2 +O(n) =  \frac{n^2}{35}(2,1,1,1,1,1,1,1,1,1,2)^{T} + O(n)
\end{align*}
Similarly to \eqref{e:d-vectorA4}, the sequence of d-vectors $\vb{d}_{n}$ takes the particular form 
\begin{align*}
\vb{d}_{n} = (d^{(1)}_{n},d^{(2)}_{n+2},d^{(2)}_{n+1},d^{(2)}_{n}, d^{(3)}_{n+5},  d^{(3)}_{n+4},  d^{(3)}_{n+3},  d^{(3)}_{n+2},  d^{(3)}_{n+1}, d^{(3)}_{n}, d^{(3)}_{n})
\end{align*}
whose components are given by integer sequences beginning with
\begin{align*}
    &d^{(1)}_{n} : 0, 0,0,0,0,1,1,2,3,3,4,5,7,8,9,\ldots ,\\
     &  d^{(2)}_{n} : 0, 0,-1,0,0,0,1,1,2,2,2,3,4,5,5,\ldots , \\ 
      & d^{(3)}_{n}  :0,0,0,0,0,0,0,1,1,1,1,2,3,3,4,4,5,6,7,\ldots . \\ 
\end{align*}
Then we find leading order quadratic growth of the d-vectors: \eqref{tropA4var}
\begin{align*}
\vb{d}_{n} = \frac{n^2}{35}(2,1,1,1,1,1,1,1,1,1,2)^{T} + O(n)
\end{align*}
Following on from the result above, we can determine the coefficient of leading order terms for the d-vectors $\vb{g}_{n}$ and $\vb{f}_{n}$ from the formul\ae\ of $\vb{X}_{2,n},\vb{X}_{3,n}$ in  \eqref{tropA4var} together with periodicity as
\begin{align*}
&\vb{g}_{n}  \sim \vb{e}_{n+2} + \vb{d}_{n}  \sim 2\vb{a}n^2  \\
&\vb{f}_{n} \sim  \vb{e}_{n+3} + \vb{d}_{n}  \sim 2\vb{a}n^2 \\
 \end{align*}
with the same constant $\vb{a}$, giving us \eqref{d-vectorsA4}. 

\end{proof}

\begin{rem} For the homogenous degree of the left and right hand sides of relation \eqref{eq:clustermA4} to be consistent,  the homogeneous degree for $q_{n}$ and $p_{n}$ must be twice the degree of $\tau_{n}$ and $\sigma_{n}$. Hence the  degrees of $\vb{g}_{n}$ and $\vb{f}_{n}$ should grow twice as fast as $\vb{d}_{n}$ and $\vb{e}_{n}$, which matches the result above. 
\end{rem}

\subsection{Algebraic entropy of the Laurentified cluster map $\psi_{A_{2N}}$} \label{ss:algentpA2n}

Following the process of the previous section, we set the initial d-vectors to be
 \begin{equation}
 \qty(\vb{g}^{(N-2)}_{0},\dots,\vb{g}^{(0)}_{0},  \vb{d}_{-1},\vb{d}_{0}, \vb{d}_{1}, \vb{e}_{0}, \vb{e}_{1}, \dots, \vb{e}_{2N+1}, \vb{f}_{0}^{(0)}, \dots \vb{f}^{(N-2)}_{0} ) = - I
 \end{equation}
 for the initial cluster 
\begin{align*}
  \tilde{\vb{x}} = (\tilde{x}_{j}) = \qty(q_{N-2, 0},\dots,q_{0,0},\tau_{-1},\tau_{0},\tau_{1},\sigma_{0},\sigma_{1},\dots, \sigma_{2N},\sigma_{2N+1},p_{0,0},\dots, p_{N-2,0})
\end{align*}
Recall that we have shown that the map $\psi_{A_{2N}}$ given by \eqref{A2nclusterm} corresponds to the mutations in a cluster algebra. By the Laurent property, we can write the tau functions generated by  $\psi_{A_{2N}}$ in the form  
\begin{equation}
  \tau_{n} = \frac{N^{(1)}_{n}(\tilde{\vb{x}})}{\tilde{\vb{x}}^{\vb{d}_{n}}}, \quad \sigma_{n} =  \frac{N^{(2)}_{n}(\tilde{\vb{x}})}{\tilde{\vb{x}}^{\vb{e}_{n}}} , \quad  p_{j,n} =   \frac{\hat{N}^{(j)}_{n}(\tilde{\vb{x}})}{\tilde{\vb{x}}^{\vb{f}^{(j)}_{n}}}, \quad  q_{j,n}=  \frac{\widetilde{N}^{( j)}_{n}(\tilde{\vb{x}})}{\tilde{\vb{x}}^{\vb{g}^{(j)}_{n}}}, 
\end{equation}
Using the same argument as in the  previous section, we find that the corresponding d-vectors satisfy the tropical version of the system of exchange relations \eqref{A2nclusterm}, given by  
\begin{equation}\label{bilinearA2N}
\begin{array}{rcl}
\vb{d}_{n+2} + \vb{e}_{n} & = & \max(\vb{e}_{n+2} + \vb{d}_{n}, \vb{f}^{(0)}_{n}), \\ 
\vb{f} ^{(0)}_{n+1} + \vb{f}^{(0)}_{n} & = & \max ( \vb{e}_{n+3} + \vb{e}_{n+2} + \vb{d}_{n} + \vb{d}_{n+1}, \vb{f}^{(1)}_{n} + \vb{e}_{n+1} + \vb{d}_{n+2}),\\ 
\vb{f}^{(1)}_{n+1} + \vb{f}^{(1)}_{n} & = & \max ( \vb{e}_{n+4} + \vb{e}_{n+3} + \vb{d}_{n} + \vb{d}_{n+1}, \vb{f}^{(2)}_{n} + \vb{f}^{(0)}_{n+1}  ),\\  
&\vdots & \\ 
\vb{f}^{(N-2)}_{n+1} + \vb{f}^{(N-2)}_{n} & = & \max ( \vb{e}_{n+N+1} + \vb{e}_{n+N} + \vb{d}_{n} + \vb{d}_{n+1}, \vb{g}^{(N-2)}_{n} + \vb{f}^{(N-3)}_{n+1} ), \\ 
\vb{g}^{(N-2)}_{n+1} + \vb{g}^{(N-2)}_{n} & = & \max ( \vb{e}_{n+N+2} + \vb{e}_{n+N+1} + \vb{d}_{n} + \vb{d}_{n+1}, \vb{g}^{(N-3)}_{n} + \vb{f}^{(N-2)}_{n+1} ), \\ 
\vb{g}^{(N-3)}_{n+1} + \vb{g}^{(N-3)}_{n} & = & \max ( \vb{e}_{n+N+3} + \vb{e}_{n+N+2} + \vb{d}_{n} + \vb{d}_{n+1}, \vb{g}^{(N-4)}_{n} + \vb{g}^{(N-2)}_{n+1} ), \\ 
& \vdots & \\ 
\vb{g}^{(0)}_{n+1} + \vb{g}^{(0)}_{n} & = & \max ( \vb{e}_{n+2N} + \vb{e}_{n+2N-1} + \vb{d}_{n} + \vb{d}_{n+1}, \vb{e}_{n+2N+1} + \vb{d}_{n-1} + \vb{g}^{(1)}_{n+1}  ), \\ 
\vb{e}_{n+2N+2} + \vb{d}_{n-1} & = & \max(\vb{e}_{n+2N} + \vb{d}_{n+1} , \vb{g}^{(0)}_{n+1})
\end{array}  
\end{equation}
For the next step, we introduce the quantities $\vb{X}_{i}$ which are analogous to the tropical version of the variable transformation \eqref{vartransA2N}:
\begin{equation}\label{dgvarA2n}
    \begin{array}{rcl}
X_{1,n} & = & \vb{e}_{n} + \vb{d}_{n+1} - \vb{e}_{n+1} - \vb{d}_{n}, \\ 
 X_{2,n} & = & \vb{f}^{(0)}_{n} - \vb{e}_{n+2} - \vb{d}_{n},\\ 
  & \vdots &  \\ 
X_{N,n} & = & \vb{f}^{(N-2)}_{n} - \vb{e}_{n+N} - \vb{d}_{n}, \\ 
X_{N+1,n} & = & \vb{g}^{(N-2)}_{n} - \vb{e}_{n+N+1} - \vb{d}_{n}, \\
& \vdots & \\ 
X_{2N-1,n} & = & \vb{g}^{(0)}_{n} - \vb{e}_{n+2N-1} - \vb{d}_{n},\\
X_{2N,n} & = & \vb{e}_{n+2N+1} + \vb{d}_{n-1} - \vb{e}_{n+2N} - \vb{d}_{n}
\end{array} 
\end{equation}
leading to the following lemma.

\begin{lm} The combination of d-vectors defined by \eqref{dgvarA2n} satisfy the tropical analogue of the deformed type $A_{2N}$ map $\tilde{\varphi}_{A_{2N}}$ \eqref{deformedmA2n}, given by the following system of equations, 
\begin{equation}\label{maxpA2n}
    \begin{array}{rcl}
     X_{1,n+1} + X_{1,n} & = &[X_{2,n}]_{+} \\ 
      X_{2,n+1} + X_{2,n} & = &[X_{1,n+1} + X_{3,n}]_{+} \\ 
      & \vdots & \\ 
      X_{2N-1,n+1} + X_{2N-1,n} & = & [X_{2N-2,n+1} + X_{2N,n}]_{+} \\ 
      X_{2N,n+1} + X_{2N,n} & = &[X_{2N-1,n}]_{+} \\
\end{array}
\end{equation}
which we specify by the tropical map $\varphi^{trop}_{A_{2N}}$. Given arbitrary initial data $(X_{j,0})_{1\leq j \leq 2N}$, the orbit of the $\varphi^{trop}_{A_{2N}}$ is periodic with period  $2N+3$.
\end{lm}
\begin{proof}
The $(\max,+)$ relations \eqref{maxpA2n} can be directly derived from the structure of the quantities  \eqref{dgvarA2n} and the tropical analogue (i.e.\ ultradiscretization) of the exchange relations \eqref{deformedmA2n} which gives \eqref{bilinearA2N}. Notice that \eqref{maxpA2n} is indeed a $(\max,+)$ relation of undeformed cluster map $\varphi_{A_{2N}}$. As we are aware that the map $\varphi_{A_{2N}}$ possesses periodicity with period $2N+3$ due to Zamolodchikov periodicity, hence the components of $\varphi^{trop}_{A_{2N}}$ are periodic with period $2N+3$, i.e.\ $(\varphi^{\text{trop}}_{A_{2N}})^{2N+3}\vb{X}_{n} = \vb{X_{n}}$. 
\end{proof}
By using the periodicity, one can calculate the degree growth of d-vectors. 
\begin{thm} The d-vectors $\vb{d}_{n},\vb{e}_{n},\vb{g}^{(i)}_{n},\vb{f}^{(i)}_{n}$ satisfying the $(\max,+)$ relations \eqref{bilinearA2N} are solutions of following linear difference equation
 \begin{equation}
(\cS^{2N+3} -1)(\cS^{2N+1} - 1)(\cS - 1)\vb{r}_{n} = 0
 \end{equation}
 for $\vb{r}_{n} = \vb{d}_{n},\vb{e}_{n},\vb{g}^{(i)}_{n},\vb{f}^{(i)}_{n}$. For the associated tau functions, the leading order of degree growth of their denominator vectors is given by 
\begin{equation}\label{tropdegA2N}
\begin{split}
\vb{d}_{n} = \frac{1}{(8N^2 + 16N + 6)}\vb{a}  
n^2 + O(n),& \quad  \vb{e}_{n} =  \frac{1}{(8N^2 + 16N + 6)} \vb{a} n^2 + O(n), \\
\vb{g}^{(i)}_{n}=  \frac{1}{(4N^2 + 8N + 3)} \vb{a} n^2 + O(n), & \quad \vb{f}^{(i)}_{n} =  \frac{1}{(4N^2 + 8N + 3)} \vb{a} n^2 + O(n)
\end{split}
\end{equation}
where $\vb{a}= (a_{i})_{1\leq i \leq 4N+3}$ whose entries $a_{i} = 4$ for $ i = \qty{1,\dots,N-1} \cup   \qty{3N+5,\dots,4N+3}$ and  $a_{i} = 2$ for $i= N-1,\dots 3N+4$.
\end{thm}

\begin{proof} 
\begin{equation}\label{tropXA2n}
\begin{array}{rcl}
    X_{1,n} & = & (\cS-1) \vb{d}_{n} - (\cS-1  )\vb{e}_{n} , \\ 
     X_{2N,n+1} & = & -(\cS -1) \vb{d}_{n} + (\cS^{2N+2} - \cS^{2N+1})\vb{e}_{n}
      \end{array}
\end{equation}
Addition of these two terms, followed by applying the periodicity property, i.e.\ $(\cS^{2N+3} - 1)(X_{i,n})_{i=1,2} = 0$, produces the linear difference equation
\begin{equation}\label{drecA2N}
\begin{split}
&(\cS^{2N+3} -1)(\cS^{2N+1} - 1)(\cS - 1)\vb{e}_{n} = 0 \\
 &\implies  (\cS - 1)^{3} (\sum_{i=0}^{2N+2} \cS^{i}) (\sum_{i=0}^{2N} \cS^{i})\vb{e}_{n} = 0
 \end{split}
\end{equation}
It is clear that the corresponding characteristic polynomial has root $\lambda = 1$ with multiplicity 3, and the other (simple) roots are $\lambda = e^{2k\pi \ri/ (2N+3)}, e^{2l\pi\ri / (2N+1)} $ for $1\leq k \leq 2N+2$ and $1\leq l \leq 2N$ (and $\ri=\sqrt{-1}$). Thus for constant $\vb{a}$, the leading order of $\vb{e}_{n}$ takes the form
\begin{equation}\label{etrop}
    \vb{e}_{n} = \vb{a} n^2 + O(n)
\end{equation}
as $n \to \infty$. Since the linear expression holds for $\vb{e}_{n}$, the constant vector $\vb{a}$ satisfies the linear relations, 
\begin{equation}
(\cS^{2N+3} - 1)(\cS^{2N+1} - 1) \vb{e}_{n} = (8N^2 + 16N + 6)\vb{a} 
\end{equation}
By induction, one can deduce that the vector $\vb{a}= (a_{i})_{1\leq i \leq 4N+3}$ has entries $a_{i} = 4$ for $ i = \qty{1,\dots,N-1} \cup   \qty{3N+5,\dots,4N+3}$ and  $a_{i} = 2$ for $i= N-1,\dots 3N+4$. 

By using the first relation in \eqref{tropXA2n} together with periodicity of $\vb{X}_{1,n}$ , the d-vector $\vb{e}_{n}$ satisfies the same linear relation as $\vb{d}_{n}$, namely
\begin{align*}
( \cS^{2N+3} -1)(\cS^{2N+1} - 1)\vb{X}_{1,n} &=\underbrace{( \cS^{2N+3} -1)(\cS^{2N+1} - 1) (\cS-1) \vb{d}_{n}}_{=0}\\
& - (\cS^{2N+3} -1)(\cS^{2N+1} - 1)(\cS-1  )\vb{e}_{n} \\
& =  0
\end{align*}
Then 
\begin{equation}\label{dtrop}
    \vb{d}_{n} = \vb{a} n^2 + O(n)
\end{equation}
We can impose \eqref{etrop} and \eqref{dtrop} onto $\vb{X}_{i,n}$ for $2 \leq i \leq 2N-1$ and obtain the expression of the other d-vectors, whose leading order term is polynomial $n^2$. Then by induction again, we have the required result, \eqref{tropdegA2N}. 
\end{proof}
Since the growths of degree of the variables is quadratic, the entropy is zero. Hence the result leads us to conjecture that deformed type $A_{2N}$ cluster map is Liouville integrable.

\section{Conclusions}
\setcounter{equation}{0}

We have constructed a 2-parameter family of deformations of the periodic cluster maps of type $A_{2N}$, which admit a lift to a cluster algebra of rank $4N+3$, with an additional two frozen variables. It is an interesting question whether the $A_{2N}$ cluster maps admit deformations with more frozen variables. Indeed in the case of $A_6$, we obtained a 3-parameter family of deformed cluster maps and showed that every member is Liouville integrable; but we were only able to find a cluster structure for a 2-parameter subset of this family.   

We showed that the map $\psi_{A_{2N}}$, which is the 
Laurentification of the deformed type $A_{2N}$ cluster map, 
has quadratic degree growth, which implies that its algebraic entropy is zero. This result leads us to conjecture that the 2-parameter deformed $A_{2N}$ map is Liouville integrable.  However, the question of finding a general proof of integrability of the deformed map  
$\varphi_{A_{2N}}$ for $N>3$ remains open. The best way to resolve this question would be to find a general formula for the associated first integrals, for any $N$: ideally, it should be possible to obtain this from Lax pair, perhaps by deforming the construction for the periodic case in Section \ref{s:type-A2N}, based on frieze patterns.

Related to this, together with Kouloukas and Vanhaecke, one of us is currently investigating a way to find a Lax pair for the deformation of type $A_4$ via the singularity analysis of the deformed $A_4$ map, and an associated family of Abelian surfaces. We expect that finding such a Lax pair and combining this with our inductive local expansion approach will suggest a way to identify Lax pairs for the general case of type $A_{2N}$, which will prove Liouville integrability. 

Finally, we would like to find a systematic way to obtain integrable deformations of Zamolodchikov periodicity for all Dynkin types. Recently, one of us has found a construction of such deformations for the cluster maps of type $D_{2N}$ \cite{WookyungD2N}.  

\printbibliography

\end{document}